\theoremstyle{plain}
    \newtheorem{lemma}{Lemma}[section]
    \newtheorem{theorem}{Theorem}
    \newtheorem{consequence}{Corollary}[section]
    \newtheorem{statement}{Proposition}[section]
\theoremstyle{remark}
    \newtheorem{remark}{Remark}[section]
    \newtheorem{example}{Example}[section]
  \theoremstyle{definition}
    \newtheorem{definition}{Definition}
\newcommand{\Ker}[1]{\mathrm{Ker} \, #1}
\newcommand{\sgrad}[1]{\mathrm{sgrad} \, #1}
\newcommand{\rank}[1]{\mathrm{rank} \, #1}
\newcommand{\corank}[1]{\mathrm{corank} \, #1}
\newcommand{\diff}[1]{\mathrm{d}  #1}
\newcommand{\diffFX}[2]{ \dfrac{\partial #1}{\partial #2} }
\newcommand{\diffFXY}[3]{ \frac{\partial^2 #1}{\partial #2 \partial #3} }
\newcommand{\K}{\mathbb{K}}
\newcommand{\R}{\mathbb{R}}
\newcommand{\Complex}{\mathbb{C}}
\newcommand{\Id}{\mathrm{E}}
\newcommand{\D}{\mathrm{D}}
\newcommand{\T}{\mathrm{T}}
\newcommand{\Cont}{\mathrm{C}}
\newcommand{\const}{\mathrm{const}}
\newcommand{\CP}{\overline{\mathbb{C}}}
\newcommand{\RP}{\overline{\mathbb{R}}}
\newcommand{\Imm}[1]{\mathrm{Im} \, #1}
\newcommand{\tr}[1]{\mathrm{tr} \, #1}
\newcommand{\eps}{\varepsilon}
\newcommand{\wave}{\widetilde}
\newcommand{\LieBracket}{ [\, , ] }
\newcommand{\g}{\mathfrak{g}}
\newcommand{\h}{\mathfrak{h}}
\newcommand{\lCal}{\mathfrak{l}}
\newcommand{\e}{\mathfrak{e}}
\newcommand{\so}{\mathfrak{so}}
\newcommand{\ELL}{\mathfrak{D}}
\newcommand{\hyp}{\mathfrak{D}^{h}}
\newcommand{\foc}{\mathfrak{D}^{\Complex}}
\newcommand{\sP}{\mathfrak{sp}}
\newcommand{\sL}{\mathfrak{sl}}
\newcommand{\ad}{\mathrm{ad}}
\newcommand{\zenter}{\mathrm{Z}}
\newcommand{\F}{\mathcal{F}}
\newcommand\goth{\mathfrak}
\title{Singularities of bi-Hamiltonian systems}
\author{
Alexey Bolsinov\footnote{Dept. of Math. Sciences, Loughborough University. E-mail: A.Bolsinov@lboro.ac.uk} \,
 and Anton Izosimov\footnote{Dept. of Mechanics and Mathematics, Moscow State University. E-mail: a.m.izosimov@gmail.com}}
\begin{document}
\maketitle
\abstract{We study the relationship between singularities of bi-Hamiltonian systems and algebraic properties  of compatible Poisson brackets.   As the main tool, we introduce the notion of linearization of a Poisson pencil. From the algebraic viewpoint,  a linearized Poisson pencil can be understood as a Lie algebra with a fixed 2-cocycle. In terms of such linearizations, we give a criterion for non-degeneracy of singular points of bi-Hamiltonian systems and describe their types.}


\section{Introduction}


\subsection{Statement of the problem}

Since the pioneering work by Franco Magri \cite{Magri}, followed by the fundamental  papers by
I.\,Gelfand and I.\,Dorfman \cite{GD},
F.\,Magri and C.\,Morosi \cite{MM}, and A.\,Reiman and M.\,Semenov-Tyan-Shanskii \cite{Reiman},
 it has been well known that integrability of many systems
in mathematical physics, geometry and mechanics is closely related to their bi-Hamiltonian nature.
Bi-Hamiltonian structures have been discovered for almost all classical systems and, at the same time, by using the bi-Hamiltonian techniques, many new interesting and non-trivial examples of integrable systems have been found.
Moreover, this approach, based on a very simple and elegant notion of compatible Poisson structures, proved to be very powerful in the theory of integrable systems not only for constructing new examples, but also for explicit integration, separation of variables and
description of analytical properties of solutions.

The goal of the present paper is to show that the bi-Hamiltonian approach might also be extremely effective for qualitative analysis of the dynamics and, in particular,  in the study of singularities of integrable systems, especially in the case of many degrees of freedom where using other methods often leads to serious computational problems.  The relationship between the singularities of an integrable bi-Hamiltonian system and the underlying bi-Hamiltonian structure was already observed in \cite{Bolsinov} and has been discussed in a systematic way in \cite{biham}. This work can be considered as  a natural continuation and completion of the programme started in these papers.

Each finite-dimensional integrable system gives rise to the structure of a singular Lagrangian fibration on the phase space whose fibers, by definition, are connected components of common level sets of the first integrals of the system. According to the classical Arnold-Liouville theorem \cite{Arnold}, regular compact fibers are invariant Lagrangian tori with quasi-periodic dynamics.  Although almost  all solutions  lie on these tori,  the singularities of Lagrangian fibrations
are very important at least for the following reasons:
\begin{itemize}
\item  The most interesting solutions (such as equilibrium points,  homoclinic and heteroclinic orbits,  stable periodic solutions, etc.) are located on singular fibers.

\item Many analytic effects (e.g., Hamiltonian monodromy \cite{Duistermaat} which can be understoond as an obstruction  to the existence of global action-angle variables) are determined by the topology of singular fibers.

\item The global dynamics of a system is directly related to the structure of the associated Lagrangian fibration which, in turn, is determined by its singularities.

\item The structure of singularities plays an  important role  in the problem of  topological obstructions to integrability.
\end{itemize}

By now, there is quite an accomplished theory that describes the topology of singular Lagrangian fibrations and classifies the main types of bifurcations and singularities for integrable Hamiltonian systems
\cite{FZ, intsys, Delzant, Eliasson, Lerman, Lerman2, MZ, AL1, AL2, Kharlamov, BolOsh}.  However, the description of Lagrangian fibrations for concrete examples of integrable systems still remains a rather non-trivial task, especially for multidimensional systems.

In the case of integrable systems, a usual scheme for the topological analysis of the dynamics can be explained as follows. To understand the dynamical properties of a given system, we need to  study the structure of the associated Lagrangian fibration and its singularities, which are completely determined by the integrals of the system. First of all,  it is required to describe the singular set of the fibration, i.e.,  those points where the first integrals become dependent.  Next, for each singular point, one needs to analyse its local structure.  If the integrals  are given explicitly, both problems can be solved by straightforward computation. But in reality,  this analysis usually involves solving systems of algebraic equations and can be very complicated even in the case of two degrees of freedom.  If the system is bi-Hamiltonian, then its integrals can be derived from the corresponding pencil of compatible Poisson brackets. So that, in fact,  this scheme starts from the bi-Hamiltonian structure and can be illustrated by the following diagram:
$$
\begin{array}{c}
\mbox{Poisson pencil}\\
\downarrow\\
\mbox{First integrals of the system}\\
\downarrow\\
\mbox{Lagrangian fibration}\\
\downarrow\\
\mbox{Singular set and local structure of singularities}\\
\downarrow\\
\mbox{Dynamical properties (e.g., stability)}
\end{array}
$$

Thus the information we are interested in is  already contained in the Poisson pencil associated with a given system.  Is it possible to extract this information
 directly from the properties of this pencil, without intermediate steps involving explicit description and analysis of first integrals?
The answer is positive and in the present paper we suggest a method which reduces the analysis of singularities of bi-Hamiltonian systems to the study of algebraic properties of the corresponding pencil of compatible Poisson brackets. Since in many examples the underlying bi-Hamiltonian structure has a natural algebraic interpretation, the technology developed in this paper allows one to reformulate rather non-trivial analytic and topological questions related to the dynamics of a given system  into pure algebraic language, which
often leads to quite simple and natural answers.

The paper is focused on non-degenerate singularities of bi-Hamiltonian systems. Non-degenerate singular points of integrable systems are, in some sense, generic and analogous to
Morse critical points of smooth functions. In particular, they are stable under small perturbations and are linearizable  in the sense that  the Lagrangian fibration near a non-degenerate singular point is
symplectomorphic to the one given by quadratic parts of the integrals (Eliasson \cite{Eliasson}). Topologically (and even symplectically) a non-degenerate singularity can be represented as the product of ``elementary blocks'' of three possible types: elliptic, hyperbolic and focus.  The complete local
invariant of such a singularity is the (Williamson) type of the point, a triple  $(k_e, k_h, k_f)$ of non-negative integers being the numbers of  elliptic, hyperbolic and focus components in this decomposition (see Section \ref{ndsp} for precise definitions).
The Williamson type contains not only the complete information about the Lagrangian fibration, but also determines the dynamics nearby this point. For example, Lyapunov stability of a non-degenerate equilibrium point is equivalent to the absence of hyperbolic and focus components, i.e.,  $k_h = k_f = 0$.  A similar statement holds true for singular periodic solutions.

Let us make one important remark about compatible Poisson brackets considered in the present paper.
In general, there are two essentially different types of Poisson pencils.
\begin{enumerate}
\item Symplectic pencils: brackets forming a pencil are non-degenerate.   The integrals in this case are the traces of powers of the recursion
operator \cite{GD, MM}.

\item Kronecker pencils: brackets forming a pencil are all degenerate (see Definition \ref{Kronpencil} below).
The integrals in this case are the Casimir functions of these brackets.
This situation was studied in \cite{Reiman, GZ3, Bolsinov} and can be viewed as a generalization of the argument shift
method \cite{MF, Manakov}.

\end{enumerate}

In this paper we discuss the second situation only. A similar treatment of  symplectic pencils remains an open and, in our opinion, very interesting and important problem.

Let us briefly discuss the content and structure of the paper. The problem we are dealing with can be formulated as follows.  Consider a bi-Hamiltonian system on a smooth manifold $M$ related to a pencil of  compatible {\it degenerate} Poisson structures $$\Pi=\{P_\lambda=P_0 + \lambda P_\infty\}, \quad \rank \Pi < \dim M$$ where $$\rank \Pi = \max_{\substack{\lambda, x}}\, \rank P_\lambda (x).$$ The family $\mathcal F$ of first commuting integrals of this system consists of  the  Casimir functions of all generic $P_\lambda$'s.  We assume that these integrals are sufficient for complete integrability so that they define the structure of a Lagrangian fibration on generic symplectic leaves.  Our aim is to study the singularities of this fibration.

  Assume that $x\in M$ belongs to a generic symplectic leave  $O(x)$ (w.r.t. a certain Poisson structure, say $P_0$) and  is singular,  i.e.,  the dimension of the subspace
$\mathrm{d}\mathcal F(x)\subset \T^*_x O(x)$ generated by the differentials of the first integrals is smaller than $\frac{1}{2} \dim O(x)$. What can we say about the local structure of this singularity? Is it non-degenerate? If yes, what is the Williamson type of this singularity?

As was explained above, the answers to these questions are ``hidden'' in the local properties of the pencil $\Pi$ at the point $x$.  First of all, we notice that $x$ is singular if and only if for some $\lambda_i \in \CP$ the rank of $P_{\lambda_i}(x)$ drops, i.e.,  $\rank P_{\lambda_i}(x)< \rank \Pi$.  Then  for each $\lambda_i$,  we introduce the notion of $\lambda_i$-linearization of the pencil $\Pi$ at the point $x$  (Definition \ref{linearizationDef}, Section \ref{linpen}) playing a crucial role in our construction.

The linearization of a Poisson pencil is still a pencil of compatible Poisson brackets but of  much simpler nature,  namely,  one of the generators of this pencil is a linear Poisson structure and the other is a constant Poisson structure.
Speaking more formally, the linearization of $\Pi$ is a pair $(\mathfrak g, A)$ where $\mathfrak g$ is a finite-dimensional Lie algebra and $A$ is a 2-cocycle on it, i.e.,
a skew-symmetric form on $\goth g$ satisfying the cocycle condition
$$
A(\xi, [\eta,\zeta]) + A(\eta, [\zeta,\xi]) + A(\zeta, [\xi,\eta])  = 0,  \quad \xi,\eta,\zeta\in\goth g.
$$
Equivalently, $A$ can be  understood as a constant Poisson structure on $\mathfrak g^*$ compatible with the standard Lie-Poisson structure.

 Such linear pencils are discussed in Sections \ref{linpenFirst}, \ref{linPenSing}.  The zero element $0\in\mathfrak g^*$ is a singular point for the Lagrangian fibration associated with $(\mathfrak g, A)$, and in this ``linear'' case we can ask the same questions about structure, non-degeneracy and type of the singularity.
 If $0\in\mathfrak g^*$ is non-degenerate, we call the linear pencil {\it non-degenerate}.

Our first result is a non-degeneracy condition (Theorem \ref{nd1}, Section \ref{regcriterion}) which asserts that the singularity of the Lagrangian fibration associated with $\Pi$ at the point $x$ is non-degenerate if and only if the pencil $\Pi$ is diagonalizable at $x$,  and all $\lambda_i$-linearizations are non-degenerate.  Moreover, the topological type of the singularity at $x$ is the sum of the topological types of the corresponding $\lambda_i$-linearizations (Theorems \ref{type1}, \ref{blt}, Section \ref{regcriterion}).

Thus, this theorem reduces the problem to the case of linear Poisson pencils.
To complete our analysis, we need a tool allowing us to verify the non-degeneracy condition for linear pencils. In Section \ref{cndlp}, we reformulate the non-degeneracy condition for $(\goth g, A)$ in purely algebraic terms.  Namely, $\Ker A$ must be a Cartan subalgebra of $\goth g$ with some special properties (Theorem \ref{ndlp1}).   This restriction on $(\goth g, A)$  turns out to be so rigid that  we have succeeded in obtaining a complete description of non-degenerate linear pencils. This is done in Theorem \ref{nlpr}  which basically states that ``modulo a commutative part'' each non-degenerate pencil is the direct sum of ``elementary blocks'' of 6 possible  types.  Three of them are defined on the semisimple Lie algebras $\goth g=\goth{so}(3)$,  $\goth{sl}(2)$ and $\goth{so}(3,\mathbb C)$ with $A$ being  $A(\xi,\eta)=\langle a, [\xi,\eta]\rangle$ for $a\in \goth g$ regular and semisimple. The three others  are related to the so-called diamond Lie algebra $\mathfrak D$  (see Section  \ref{cndlp} for details), also known as the Nappi-Witten algebra.

As an application of this general scheme in Section \ref{TodaSection} we study the singularities of the classical periodic Toda lattice and deduce the stability of all its motions.
The proofs of all main results are given in Section \ref{proofsSection}.

We hope that these methods will be quite efficient for the qualitative analysis of bi-Hamiltonian systems, especially for those with many degrees of freedom where direct methods  lead to serious technical  difficulties. Some illustrations and applications of our approach can be found in \cite{Izosimov, Basak}.  Moreover,  we believe that this approach could be extended to the infinite-dimensional case too. In particular, a formal application of this scheme to the periodic KdV equation leads to a description of elliptic stable solutions, the conclusion so obtained agrees with the results of \cite{McKean, Kapeller}.


\subsection{Integrable systems and their singularities}\label{ndsp}

Let $(M^{2n}, \omega)$ be a symplectic manifold. Then $\Cont^\infty(M^{2n})$ is a Lie algebra with respect to the Poisson bracket.

\begin{definition}
A commutative subalgebra $\mathcal F \subset \Cont^\infty(M^{2n})$ is called \textit{complete at a point} $x \in M^{2n}$ if  $\dim \diff \F(x) = n$, where
	 $\diff \F(x) = \{ \diff f(x), f \in \F\} \subset \T^*_x M$. \par 
	 A commutative subalgebra $\mathcal F \subset \Cont^\infty(M^{2n})$ is complete on $M^{2n}$, if it is complete on an everywhere dense subset.
\end{definition}

Let $\mathcal F \subset \Cont^\infty(M^{2n})$ be a complete commutative subalgebra. Consider an arbitrary $H \in \F$ and the corresponding Hamiltonian vector field
$$
	\sgrad H = \omega^{-1}\diff H.
$$
Then all functions in $\F$ are pairwise commuting integrals of $\sgrad H$, and $\sgrad H$
is a \textit{completely integrable Hamiltonian system}. So, an integrable system can be understood as a complete commutative subalgebra $\mathcal F \subset \Cont^\infty(M^{2n})$ with a distinguished element $H \in \mathcal F$. However, a particular choice of $H \in \mathcal F$ is not important to us. For
 this reason, we will not distinguish between integrable systems and complete commutative subalgebras.

\begin{remark}
Note that as a vector space, $\F$ may be infinite-dimensional.
\end{remark}

Consider an integrable system $\F$. Then the common level sets $\{\F = \const\}$ define a \textit{singular Lagrangian fibration} on $M^{2n}$ associated with $\F$.

\begin{definition}
	A point $x \in M^{2n}$ is called \textit{singular} for $\F$ if $\dim \diff \F(x) < n$. The number $\dim \diff \F(x)$ is called \textit{the rank} of a singular point $x$.
\end{definition}

A regular fiber of a singular Lagrangian fibration  is a fiber which does not contain singular points. By the Arnold-Liouville theorem, all compact regular fibers of a singular Lagrangian fibration are tori, and the dynamics on these tori is quasi-periodic.  However, the most interesting solutions of an integrable system, such as fixed points and stable periodic trajectories, belong to singular fibers. That is why it is important to study singularities of Lagrangian fibrations.
\par
As it usually happens in singularity theory, is is not realistic to describe all possible singularities, so one should start with studying the most generic of them. The most generic singularities of an integrable system are the \textit{non-degenerate} ones defined below. Details can be found in \cite{BolOsh, Eliasson, AL1}.
\par

Let $\F$ be an integrable system on $(M^{2n}, \omega)$, and  $x \in M^{2n}$ be a singular point of $\F$. Suppose that $f \in \F$, and $\diff f(x) = 0$. Then we can consider the linearization of the vector field $\sgrad f$ at the point $x$ as a linear operator $A_{f}: \T_x M \to \T_x M$.  Let
$$
A_{\F} = \{ A_{f} \}_{f \in \F, \diff f(x) = 0}.
$$
 Since the flow defined by $\sgrad f$ preserves the symplectic structure, $A_{\F} \subset \sP(\T_{x}M)$.  Moreover, $\F$ is commutative and, therefore,  $A_{\F}$ is a commutative subalgebra of $\sP(\T_{x}M, \omega)$. \par

Now consider the space $W = \{\sgrad f(x), f \in \F\} \subset \T_{x}M$. Since $\F$ is commutative, $W$ is isotropic and all operators belonging to $A_{\F}$ vanish on $W$.  Consider the skew-orthogonal complement to $W$ with respect to $\omega$, i.e. the subspace
$$W^\bot = \{ \xi \in \T_x M ~|~  \omega (\xi, W)=0\}.$$  Obviously, $W\subset W^\bot$, and $W^\bot$ is invariant under $A_{\F} $. Consequently, we can consider  elements of $A_{\F}$ as operators on $W^{\bot} / W$. Since $W$ is isotropic, the quotient $W^{\bot}/W$ carries a natural symplectic structure induced by $\omega$, and  $A_{\F} $ is a commutative subalgebra in $\sP(W^{\bot} / W, \omega)$.

\begin{definition}\label{nd}
	A singular point $x$ is called \textit{non-degenerate}, if the subalgebra $A_{\F}$ constructed above is a Cartan subalgebra in $\sP(W^{\bot} / W, \omega)$.
\end{definition}

 If $A$ is an element of a Cartan subalgebra $\h\subset \sP(2m, \R)$, then its eigenvalues have the form
\begin{align*}
&\pm \lambda_{1}i, \dots, \pm \lambda_{k_{e}}i,\\
 &\pm \mu_{1}, \dots, \pm \mu_{k_{h}},\\
  &\pm \alpha_{1} \pm \beta_{1}i, \dots, \pm \alpha_{k_{f}} \pm \beta_{k_{f}}i,
\end{align*}
where $k_e + k_h + 2k_f = m$. The triple $(k_{e}, k_{h}, k_{f})$ is the same for almost all $A \in \h$. This triple is called the \textit{type} of the Cartan subalgebra $\h$. All Cartan subalgebras of the same type are conjugate to each other (Williamson, \cite{Williamson}).

\begin{definition}
	The \textit{type} of a non-degenerate singular point $x$ is the type of the  associated Cartan subalgebra $A_{\F} \subset \sP(W^{\bot} / W, \omega)$.
\end{definition}

For every non-degenerate singular point $x$ of rank $r$, the following equality holds:
$$
k_{e} + k_{h} + 2k_{f} = n - r.
$$\par

 Let us formulate the Eliasson theorem on the linearization of a Lagrangian fibration in the neighbourhood of a non-degenerate singular point. Define the following standard singularities.
 \begin{enumerate}
 \item The fibration given by the function $p^2 + q^2$ in the neighbourhood of the origin in $(\R^2, \diff p \wedge \diff q)$ is called an {\it elliptic} singularity.
  \item The fibration given by the function $pq$ in the neighbourhood of the origin in  $(\R^2, \diff p \wedge \diff q)$ is called a {\it hyperbolic} singularity.
       \item The fibration given by the commuting functions  $p_1q_1 + p_2q_2, p_1q_2 - q_1p_2$ in the neighbourhood of the origin in  $(\R^4, \diff p_1 \wedge \diff q_1 + \diff p_2 \wedge \diff q_2)$ is called a {\it focus-focus} singularity.
 \end{enumerate}

\begin{theorem}[Eliasson, see \cite{Eliasson}]
\label{EliassonThm}
	Let $\F$ be an integrable system  and $x$ be its singular point of rank $r$ and type $(k_e, k_h, k_f)$. Then the associated Lagrangian fibration is locally fiberwise symplectomorphic to the direct product of $k_e$ elliptic, $k_h$ hyperbolic, and $k_f$ focus-focus singularities, multiplied by a trivial non-singular fibration $\R^r \times \R^r$.
\end{theorem}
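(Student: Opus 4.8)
The statement is Eliasson's linearization theorem, a genuinely hard result, and the plan is the classical one: reduce by a chain of normal-form reductions to the quadratic model, passing between the two by a Moser-type homotopy argument. First, to reduce to the rank-$0$ case, pick $f_1,\dots,f_r\in\F$ with $\diff f_1(x),\dots,\diff f_r(x)$ linearly independent. Since $\F$ is commutative, the Carath\'eodory--Jacobi--Lie theorem gives Darboux coordinates $(p_1,q_1,\dots,p_n,q_n)$ centred at $x$ with $f_i=p_i$ for $i\le r$. Any $g\in\F$ Poisson-commutes with $p_1,\dots,p_r$, hence is independent of $q_1,\dots,q_r$; thus near $x$ the fibration $\{\F=\const\}$ is the product of the trivial fibration in the variables $(p_i,q_i)_{i\le r}$ and an integrable system on the symplectic slice $\{p_1=q_1=\cdots=p_r=q_r=0\}$, for which $x$ is a rank-$0$ non-degenerate point of the same type $(k_e,k_h,k_f)$, now with $k_e+k_h+2k_f=n-r$. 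So one may assume $r=0$ and $\diff\F(x)=0$.

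Identify $\T_xM$ with $(\R^{2n},\omega)$. For $f\in\F$ the linearization $A_f$ corresponds, under the standard isomorphism between $\sP(\T_xM,\omega)$ and quadratic forms on $\T_xM$, to the quadratic part $q_f$ of $f$ at $x$, and non-degeneracy of $x$ says that $A_\F=\{q_f\}$ is a Cartan subalgebra $\h\subset\sP(2n,\R)$. By Williamson's theorem there is a linear symplectomorphism after which $\h$ is spanned by the canonical generators $\tfrac12(p_j^2+q_j^2)$ (elliptic $j$), $p_jq_j$ (hyperbolic $j$), and the pairs $p_jq_j+p_kq_k,\ p_jq_k-q_jp_k$ (focus pairs $\{j,k\}$). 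Let $Q=(Q_1,\dots,Q_n)$ be this quadratic momentum map; it is exactly the product model in the statement. Choosing $g_1,\dots,g_n\in\F$ whose quadratic parts are $Q_1,\dots,Q_n$, it suffices to produce a local symplectomorphism $\psi$ with $\psi(x)=x$ and $\psi^*g_i=Q_i$ for all $i$, since such a $\psi$ automatically carries $\{\F=\const\}$ to $\{Q=\const\}$ fiberwise.

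To build $\psi$, set $G=(g_1,\dots,g_n)$ and $G_t=Q+t(G-Q)$, and look for a smooth family of local symplectomorphisms $\phi_t$ with $\phi_0=\mathrm{id}$ and $\phi_t^*G_t$ independent of $t$; then $\psi=\phi_1$. Writing $\phi_t$ as the flow of $\sgrad{h_t}$ and differentiating in $t$ reduces the problem to solving, at each $t$, the cohomological system $\{h_t,G_{i,t}\}=Q_i-g_i$, whose leading part is $\{h_t,Q_i\}=(\text{remainder of order}\ge 3)$. The crucial ingredient is Eliasson's lemma on the cohomology of the quadratic action: the operator $h\mapsto(\{Q_1,h\},\dots,\{Q_n,h\})$ on germs of smooth functions at $0$ has image equal to the set of systems satisfying the evident compatibility relations and lying in a fixed complement of the $\h$-invariants, on which it admits a bounded right inverse. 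Block by block this is a list of division problems: $g=(p\,\partial_q-q\,\partial_p)h$ in the elliptic directions, solved by Fourier averaging over the circle; and $g=(p\,\partial_p-q\,\partial_q)h$ together with the rotation-plus-Euler operators of the focus blocks, solved by Mather-type division along the level curves $\{pq=\const\}$.

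One runs this scheme first formally, in power series in $(p,q)$, where the cohomological equation is elementary linear algebra on homogeneous polynomials (solvable precisely because $\ad_\h$ has the image just described), obtaining a formal symplectomorphism conjugating $G$ to $Q$; Borel's lemma then realizes it by a genuine smooth symplectomorphism, after which $G-Q$ is flat at $0$. A second homotopy, now integrating along the flows of the model Hamiltonians $Q_i$ (periodic in the elliptic directions, hyperbolic or rotational in the others), removes this flat remainder. Because all the operators involved are block-diagonal, the whole construction can be arranged to respect the decomposition into $1$- and $2$-dimensional blocks, which produces the asserted product form together with the trivial $\R^r\times\R^r$ factor split off in the first step. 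I expect the main obstacle to be exactly the \emph{smooth} solvability, with uniform estimates, of the hyperbolic and focus-focus division equations near the singular set $\{Q=0\}$, and the parallel control of the flat remainder there; the formal reduction and the purely elliptic case are comparatively routine, and indeed it is these non-elliptic directions that caused the known difficulties in the literature.
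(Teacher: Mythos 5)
First, a point of comparison: the paper does not prove this statement at all. It is quoted as Eliasson's theorem with a reference to \cite{Eliasson} and used as a black box (for instance, to pass from Theorem \ref{type1} to Theorem \ref{blt}). So there is no internal proof to measure yours against; the only question is whether your argument would stand on its own as a proof of the cited result.

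As written it would not, although the strategy is the correct and standard one. The reduction to rank zero via Carath\'eodory--Jacobi--Lie, the appeal to Williamson's theorem, and the Moser-path reformulation are all sound. The gap is that the entire analytic content of the theorem is concentrated in the step you call ``Eliasson's lemma on the cohomology of the quadratic action'', which you state, do not prove, and yourself identify as the main obstacle. Solving $\{h,Q_i\}=g_i$ smoothly with control near the singular set is routine in the elliptic directions (circle averaging), but genuinely hard in the hyperbolic and focus-focus directions: a smooth function annihilated by $p\,\partial_p-q\,\partial_q$ near the origin is constant only on connected components of the level sets of $pq$, and for $pq=c\neq 0$ there are two such components near the origin, so the division operator has a nontrivial kernel and cokernel that must be described exactly; likewise the removal of the flat remainder by a second homotopy along the hyperbolic flow needs uniform estimates that Borel's lemma does not supply. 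This is precisely where the published proofs required substantial additional work (the hyperbolic case of the original argument was completed only in later papers, and the focus-focus and mixed cases needed separate treatments), so deferring it leaves the proof essentially open. A secondary, smaller gap: you assert that a symplectomorphism pulling back $g_1,\dots,g_n$ to $Q_1,\dots,Q_n$ automatically carries the fibration $\{\F=\const\}$ to $\{Q=\const\}$ fiberwise; this requires knowing that every element of $\F$ is locally a function of $g_1,\dots,g_n$ near $x$, which does not follow from commutativity alone and must be established as part of the argument.
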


Thus, a non-degenerate singularity is completely determined by its type and rank.


\subsection{Bi-Hamiltonian systems and construction of the family $\F$}\label{ihs}

\begin{definition}
		Two Poisson brackets $P_0, P_\infty$ (on a smooth manifold $M$) are called \textit{compatible} if any linear combination of them is a Poisson bracket again. The set of non-zero linear combinations of compatible Poisson brackets $\Pi = \{\alpha P_0 + \beta P_\infty\}$ is called a \textit{Poisson pencil}.
	\end{definition}

	\begin{remark}\label{complCotSpace}
		Sometimes it is necessary to consider complex values of $\alpha$ and $\beta$. In this case $\alpha P_0 + \beta P_\infty$ should be treated as a complex-valued Poisson bracket on complex-valued functions. The corresponding Poisson tensor is a bilinear form on the complexified cotangent space at each point.
	\end{remark}

	Since it only makes sense to consider Poisson brackets up to proportionality, we write Poisson pencils in the form
	$$
		\Pi = \{ P_{\lambda} = P_{0} + \lambda P_{\infty}\}_{\lambda \in \CP}.
	$$

	\begin{definition}
		The \textit{rank} of a pencil $\Pi$ at a point $x \in M$ is defined to be
		$$
			\rank \Pi(x) = \max\limits_{\lambda} \rank P_{\lambda}(x).
		$$
			The rank of $\Pi$ on $M$ is
		$$
			\rank \Pi = \max\limits_{x} \rank \Pi(x) =  \max\limits_{x, \lambda} \rank P_{\lambda}(x).
		$$
	\end{definition}

	\begin{definition}
		A vector field $v$ is called \textit{bi-Hamiltonian} with respect to a pencil $\Pi$, if it is Hamiltonian with respect to all brackets of the pencil.
	\end{definition}

	Let $\Pi$ be a Poisson pencil, and let $v$ be a vector field which is bi-Hamiltonian with respect to $\Pi$. The following construction allows us to obtain a large family of integrals for $v$, which are in involution with respect to all brackets of $\Pi$ (see \cite{Reiman}).

	\begin{statement}\label{Magri}
	Let $\Pi = \{ P_{\lambda} \}$ be a Poisson pencil. Then\par
	\begin{enumerate}
		\item If $f$ is a Casimir function of $P_{\lambda}$ for some $\lambda$, then $f$ is an integral of any vector field that is bi-Hamiltonian with respect to $\Pi$.
		\item If $f$  is a Casimir function of $P_{\lambda}$, $g$  is a Casimir function of $P_{\mu}$ and $\lambda \neq \mu$, then $f$  and $g$ are in involution with respect to all brackets of the pencil.
		\item If $f$ and $g$ are Casimir functions of $P_{\lambda}$, and $\rank P_{\lambda}(x) = \rank \Pi$ for almost all $x \in M$, then $f$  and $g$ are in involution with respect to all brackets of the pencil.
	\end{enumerate}
	\end{statement}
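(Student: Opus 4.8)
The plan is to treat the three assertions in turn; the first two are immediate, and only the third requires a real argument.

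For (1): since $v$ is bi-Hamiltonian with respect to $\Pi$, it is in particular Hamiltonian with respect to the bracket $P_{\lambda}$ for which $f$ is a Casimir, say $v=P_{\lambda}\diff h$. Then $v(f)=\langle\diff f,\,P_{\lambda}\diff h\rangle=-\langle\diff h,\,P_{\lambda}\diff f\rangle=0$, by skew-symmetry of $P_{\lambda}$ together with $P_{\lambda}\diff f=0$. For (2): as $\lambda\neq\mu$, every bracket of the pencil is a linear combination of $P_{\lambda}$ and $P_{\mu}$, namely $P_{\nu}=\tfrac{\nu-\mu}{\lambda-\mu}P_{\lambda}+\tfrac{\lambda-\nu}{\lambda-\mu}P_{\mu}$ (with the obvious change if $\lambda$ or $\mu$ equals $\infty$); hence $\{f,g\}_{P_{\nu}}$ is the same combination of $\{f,g\}_{P_{\lambda}}=0$ and $\{f,g\}_{P_{\mu}}=0$, the first vanishing because $f$ is a Casimir of $P_{\lambda}$ and the second because $g$ is a Casimir of $P_{\mu}$.

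The content of (3) reduces to the following pointwise fact, which I will call $(\ast)$: \emph{if $\rank P_{\lambda}(x)=\rank\Pi$, then $\Ker P_{\lambda}(x)\subset\mathrm{T}_{x}^{*}M$ is isotropic with respect to $P_{\nu}(x)$ for every $\nu$.} Equivalently, since $P_{\lambda}(x)$ is skew-symmetric one has $\mathrm{Im}\, P_{\lambda}(x)=(\Ker P_{\lambda}(x))^{\circ}$ by a dimension count, so $(\ast)$ says $P_{\nu}(x)\big(\Ker P_{\lambda}(x)\big)\subseteq\mathrm{Im}\, P_{\lambda}(x)$. Granting $(\ast)$, the proof of (3) goes as follows. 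By hypothesis the set $U$ where $\rank P_{\lambda}=\rank\Pi$ is open and dense; near a point of $U$ the bracket $P_{\lambda}$ has constant rank, so by the local structure theory of Poisson manifolds at regular points one may choose independent local Casimirs $c_{1},\dots,c_{N}$ of $P_{\lambda}$, $N=\dim M-\rank\Pi$, whose differentials span $\Ker P_{\lambda}$ at every nearby point. Any Casimir of $P_{\lambda}$ — in particular $f$ and $g$ — is then locally a function of $c_{1},\dots,c_{N}$; and by $(\ast)$, $\{c_{i},c_{j}\}_{P_{\nu}}=\langle\diff c_{j},\,P_{\nu}\diff c_{i}\rangle=0$ on $U$, so by the Leibniz rule $\{f,g\}_{P_{\nu}}$ vanishes on $U$, hence identically.

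The one genuine (if mild) obstacle is the proof of $(\ast)$. Since $P_{\nu}=P_{\lambda}+(\nu-\lambda)P_{\infty}$, on $\Ker P_{\lambda}(x)$ the form $P_{\nu}(x)$ coincides with $(\nu-\lambda)P_{\infty}(x)$, so it suffices to prove that $\Ker P_{\lambda}(x)$ is $P_{\infty}(x)$-isotropic (the case $\lambda=\infty$ being symmetric). If this failed, then splitting $\mathrm{T}_{x}^{*}M=\Ker P_{\lambda}(x)\oplus C$ with $P_{\lambda}(x)|_{C}$ nondegenerate, and computing $\rank P_{\mu}(x)$ for $\mu$ near $\lambda$ via the Schur complement of the (still invertible) $C$-block, one gets $\rank P_{\mu}(x)=\rank\Pi+\rank\!\big(P_{\infty}(x)|_{\Ker P_{\lambda}(x)}+O(\mu-\lambda)\big)>\rank\Pi$ for $\mu$ close enough to $\lambda$, contradicting the definition of $\rank\Pi$. (The same conclusion follows from the Jordan--Kronecker normal form of the pair $(P_{\lambda}(x),P_{\infty}(x))$: maximality of $\rank P_{\lambda}(x)$ rules out Jordan blocks at the eigenvalue $\lambda$, so $\Ker P_{\lambda}(x)$ is a direct sum of the one-dimensional kernels of the Kronecker blocks, which is automatically isotropic for the entire pencil.)
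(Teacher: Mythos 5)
Your proof is correct. Note that the paper itself offers no proof of this proposition --- it is quoted as a known result with a reference to \cite{Reiman} --- so there is nothing to compare against except the standard argument, which is what you give: parts (1) and (2) are the usual one-line computations, and part (3) rests on the pointwise fact $(\ast)$ that $\Ker P_{\lambda}(x)$ is isotropic for the whole pencil whenever $\rank P_{\lambda}(x)$ is maximal. Both of your justifications of $(\ast)$ are sound; the Jordan--Kronecker one is exactly the mechanism the paper relies on elsewhere (it is item 1 of Proposition \ref{PropertiesOfL} and the last bullet of Appendix A). One small simplification: once $(\ast)$ is established you do not need the local Casimirs $c_{1},\dots,c_{N}$ and the Leibniz rule at all, since $\{f,g\}_{\nu}(x)=P_{\nu}(x)\bigl(\diff f(x),\diff g(x)\bigr)$ depends only on the two differentials, which both lie in $\Ker P_{\lambda}(x)$ at every point of the dense set $U$; isotropy gives the vanishing there directly, and continuity finishes the job. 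The detour is harmless but adds nothing.
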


	Let $\widehat{\mathcal{F}}$ be the system generated by all Casimir functions of all brackets of the pencil satisfying the condition $\rank P_{\lambda} = \rank \Pi$. Proposition \ref{Magri} implies that $\widehat{\mathcal{F}}$ is a family of integrals of $v$ in involution. Our goal is to study the singularities of $\widehat{\mathcal{F}}$. However, we need to slightly modify the definition of  $\widehat{\mathcal{F}}$ for the following two reasons.
	\begin{enumerate}
\item In general, we cannot guarantee that brackets of the pencil possess globally defined Casimir functions.
\item Even if globally defined Casimir functions do exist, their behavior may be unpredictable near those points where the rank of the corresponding bracket drops.
\end{enumerate}

Since the properties of the singularities we are going to discuss are local,  we will work in a small neighborhood of a singular point $x\in M$ and will use local Casimir functions of only those Poisson brackets  whose rank in this neighborhood does not drop.  More precisely,
the algebra of commuting integrals $\mathcal{F}$ considered throughout the paper is defined as follows.

Let $x \in M$ be such that $\rank \Pi(x) = \rank \Pi$.  Then we can find $\alpha\in\mathbb R$ such that $\rank P_{\alpha}(x) = \rank \Pi$.
	 Moreover, we can find a neighbourhood $U(x)$ and $\eps > 0$ such that for any $\nu$ satisfying $|\nu - \alpha| < \eps$ the following is true:

	 \begin{enumerate}
	 	\item $\rank P_{\nu}(y) = \rank \Pi$ for any $y \in U(x)$;
	 	\item $P_{\nu}$ admits $k$ independent local Casimir functions defined in $U(x)$, where $k=\corank \Pi$.
	 \end{enumerate}
Consider the family $\F = \F_{\alpha, \eps}$ generated, as a vector subspace of $\Cont^\infty(U(x))$, by all these Casimir functions. Proposition \ref{Magri} implies the following.

	\begin{statement}
		$\F$ is a (local) family of integrals in involution for any system which is bi-Hamiltonian with respect to $\Pi$.
	\end{statement}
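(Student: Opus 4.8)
The plan is to deduce the statement directly from Proposition~\ref{Magri}, using only the elementary fact that both ``being an integral of a fixed vector field'' and ``being in involution with respect to a fixed Poisson bracket'' are preserved under taking linear combinations. Since $\F=\F_{\alpha,\eps}$ is, by definition, the linear span (inside $\Cont^\infty(U(x))$) of the local Casimir functions of the brackets $P_\nu$ with $|\nu-\alpha|<\eps$, it suffices to verify the two required properties --- being an integral of $v$, and pairwise involutivity --- on these generators; bilinearity of each bracket $P_\lambda$ of the pencil and linearity of $v$ then extend them to all of $\F$.

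First, if $f$ is a local Casimir function of some such $P_\nu$, then by part~(1) of Proposition~\ref{Magri} it is an integral of every vector field that is bi-Hamiltonian with respect to $\Pi$, in particular of $v$; this is a pointwise assertion, so it remains valid after restriction to $U(x)$. Second, let $f$ be a Casimir of $P_\nu$ and $g$ a Casimir of $P_{\nu'}$, with $|\nu-\alpha|,\,|\nu'-\alpha|<\eps$. If $\nu\neq\nu'$, part~(2) of Proposition~\ref{Magri} gives $\{f,g\}_{P_\lambda}=0$ for all $\lambda\in\CP$ directly. If $\nu=\nu'$, we invoke part~(3), whose hypothesis requires $\rank P_\nu(y)=\rank\Pi$ on a dense set; on $U(x)$ this is guaranteed for \emph{all} $y$ by condition~(1) in the construction of $\F_{\alpha,\eps}$, so, working on the open submanifold $U(x)$ as the ambient manifold, part~(3) applies and yields $\{f,g\}_{P_\lambda}=0$ there.

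Combining the two cases, all generators of $\F$ are pairwise in involution with respect to every bracket of $\Pi$ and are integrals of $v$; by (bi)linearity the same holds for arbitrary elements of $\F$, which proves the statement. The only step that is not purely formal is the use of Proposition~\ref{Magri}(3) in the case where the two Casimir functions come from the same bracket $P_\nu$: it is precisely in order to make this rank hypothesis automatically satisfied that the naive family $\widehat{\mathcal{F}}$ was replaced by the localized $\F_{\alpha,\eps}$, so this potential obstacle has already been engineered away by the definition, and I do not expect any genuine difficulty.
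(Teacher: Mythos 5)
Your proof is correct and is essentially the argument the paper intends: the paper simply states that Proposition \ref{Magri} implies the claim, and your write-up is the natural elaboration of that, checking the three cases (integral property via part (1), cross-bracket involutivity via part (2), same-bracket involutivity via part (3) using the rank condition built into the definition of $U(x)$) and extending to all of $\F$ by linearity. No gaps.
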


	\begin{remark}\label{badRemark}
		The choice of $\alpha$ and $\eps$ is not important, which means that our results remain true for any choice of $\alpha, \eps$. Moreover, under some additional conditions, we will get the same family of integrals for all $\alpha, \eps$.
		What is important is that $\F$ is generated by the Casimir functions of brackets which are regular at the point $x$ (see Example \ref{badExample}). In applications,   $\F$ can be replaced by  the family $\widehat{\mathcal{F}}$ of global Casimirs under two additional assumptions, which are easy to verify:
\begin{enumerate}
\item Existence of global Casimirs: for almost all $P_\nu\in\Pi$ there are $k=\corank \Pi$ Casimir functions whose differentials are independent at $x$.

\item Smooth dependence of Casimirs on the parameter of the pencil: if $\rank P_{\lambda_0} = \rank \Pi$, then every Casimir $f(x)$ of $P_\lambda$ can be included into a family $f_\lambda(x)$, $\lambda \in (\lambda_0-\delta, \lambda_0+\delta)$ smoothly depending on $\lambda$ such that $f_\lambda(x)$ is a Casimir for $P_\lambda$ and $f_{\lambda_0}(x)=f(x)$.
\end{enumerate}		
		\end{remark}

	First of all, we need to discuss  completeness conditions for $\F$.

	\begin{definition} The \textit{spectrum} of a pencil $\Pi$ at a point $x$ is the set
	$$
		\Lambda(x) = \{ \lambda \in \CP: \rank P_{\lambda}(x) < \rank \Pi(x) \}.
	$$
	\end{definition}

	Let
	$$
		S = \{ x: \Lambda(x) \neq \emptyset\} .
	$$

	\begin{definition}
	\label{Kronpencil}
		A pencil $\Pi$ is \textit{Kronecker}, if the set $S$ has measure zero (i.e. if the spectrum is empty almost everywhere).
	\end{definition}

	In other words, a pencil is Kronecker if its Jordan-Kronecker decomposition has only Kronecker blocks (i.e. has no Jordan blocks) almost everywhere on $M$ (see Appendix A).

	\begin{theorem}[\cite{Bolsinov}]\label{BolTh}
	Assume that $\rank P_\alpha(x) = \rank \Pi$ and let $O(\alpha, x)$ be the symplectic leaf of $P_{\alpha}$ passing through $x$. Then $\F \mid_{O(\alpha, x)}$ is complete at $x$ if and only if $x \notin S$.
	\end{theorem}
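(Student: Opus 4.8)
The plan is to characterize the singular points of $\mathcal F$ directly in terms of the rank-drop locus $S$. The key observation is that on the symplectic leaf $O(\alpha,x)$, completeness of $\mathcal F\mid_{O(\alpha,x)}$ at $x$ means $\dim\diff\mathcal F(x) = \tfrac12\dim O(\alpha,x) = \tfrac12(\dim M - \corank\Pi)$ (using that $\rank P_\alpha(x)=\rank\Pi$, so the leaf has the generic dimension). So I need to count the dimension of the span of the differentials at $x$ of all the local Casimirs of all brackets $P_\nu$ with $|\nu-\alpha|<\eps$, restricted to $T^*_x O(\alpha,x)$.

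First I would reduce everything to a pointwise linear-algebra statement. Fix the point $x$ and work with the finite-dimensional cotangent space $V = T^*_x M$ together with the family of skew-forms $P_\nu(x)$ on it. For each $\nu$ with $\rank P_\nu(y)=\rank\Pi$ near $x$, the kernel $\Ker P_\nu(x)\subset V$ has dimension $\corank\Pi$, and the differentials at $x$ of the local Casimirs of $P_\nu$ span exactly $\Ker P_\nu(x)$ (Casimirs are functions whose differential annihilates the symplectic leaves, i.e. lies in the kernel of the Poisson tensor; having $\corank\Pi$ independent ones means their differentials fill out the whole kernel at $x$). Hence $\diff\mathcal F(x)$, as a subspace of $V$, is nothing but
$$
\diff\mathcal F(x) \;=\; \ls{\,\Ker P_\nu(x) \;:\; |\nu-\alpha|<\eps\,}.
$$
The annihilator of the symplectic leaf $O(\alpha,x)$ inside $V$ is $\Ker P_\alpha(x)$, so passing to $T^*_x O(\alpha,x) = V/\Ker P_\alpha(x)$, completeness at $x$ becomes the statement that the images of the kernels $\Ker P_\nu(x)$ for $\nu$ near $\alpha$ together span a subspace of codimension $\tfrac12\dim O(\alpha,x)$ in $V/\Ker P_\alpha(x)$; equivalently, that the $P_\alpha(x)$-skew-orthogonal complement of $\sum_\nu \Ker P_\nu(x)$ in $O(\alpha,x)$ coincides with (the image of) $\sum_\nu\Ker P_\nu(x)$ itself, i.e. this sum is a Lagrangian subspace of the leaf.

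The heart of the matter is then the following pencil-of-forms lemma, which I expect to be the main obstacle and which is essentially the content of the Jordan–Kronecker theory recalled in Appendix A: for a pencil of skew-symmetric forms $\{P_\nu(x)\}$ on a finite-dimensional space, restricted to a generic leaf $O(\alpha,x)$, the subspace $\sum_{|\nu-\alpha|<\eps}\Ker P_\nu(x)$ (projected into the leaf) is isotropic, and it is Lagrangian \emph{if and only if} $x\notin S$, i.e. if and only if $\Lambda(x)=\emptyset$. The isotropy is the easy direction: for $\mu\ne\nu$ and $\xi\in\Ker P_\mu(x)$, $\eta\in\Ker P_\nu(x)$ one has $P_\lambda(x)(\xi,\eta)=0$ for all $\lambda$ — this is the linear-algebra shadow of part 2 of Proposition \ref{Magri} (Casimirs of different brackets Poisson-commute), and for $\mu=\nu$ it is part 3. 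For the dimension count one uses the Jordan–Kronecker normal form of the pair $(P_\alpha(x), P_\infty)$: in each Kronecker block the kernels of the generic members of the pencil, as the parameter varies in a neighbourhood, already span a Lagrangian subspace of that block; a Jordan block (which is present precisely when some $\lambda_i\in\Lambda(x)$) contributes extra dimension to the leaf that is \emph{not} reached by nearby kernels, since the kernel jumps only at the eigenvalue $\lambda_i$ itself and that bracket is excluded from $\mathcal F$. Thus the presence of a point of $\Lambda(x)$ is exactly the failure of the Lagrangian property. Finally I would note that this argument is independent of the choice of $\alpha$ and $\eps$ (as promised in Remark \ref{badRemark}), since the span $\sum_\nu\Ker P_\nu(x)$ over any real neighbourhood of any $\alpha$ with $\rank P_\alpha(x)=\rank\Pi$ is the same, being controlled by the Kronecker part of the decomposition; this yields the theorem.
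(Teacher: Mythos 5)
Your argument is correct: identifying $\diff\mathcal F(x)$ with the span $L$ of the kernels of the nearby regular brackets and then counting its dimension block-by-block in the Jordan--Kronecker normal form (Kronecker blocks contribute a Lagrangian piece, Jordan blocks contribute nothing) is exactly the mechanism behind this theorem. The paper itself imports the statement from \cite{Bolsinov} without proof, but the supporting facts it does establish --- Proposition \ref{PropertiesOfL}, Corollary \ref{dFisL}, and the properties of Kronecker versus Jordan blocks listed in Appendix A --- are precisely the ingredients you assemble, so your route is essentially the intended one.
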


	\begin{consequence}\label{intCor}
		$\F$ is complete on $O(\alpha, x)$ if and only if  the set $S \cap O(\alpha, x)$ has measure zero.		 
	\end{consequence}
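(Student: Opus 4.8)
The plan is to reduce the statement to a pointwise application of Theorem~\ref{BolTh} along the symplectic leaf $O\eqdef O(\alpha,x)$. First I would use that the rank of a Poisson structure is locally constant on its symplectic leaves: thus $\rank P_\alpha(y)=\rank P_\alpha(x)=\rank\Pi$ and $O(\alpha,y)=O$ for every $y\in O$. Consequently the hypothesis of Theorem~\ref{BolTh} holds at every $y\in O$ with one and the same $\alpha$, and the theorem tells us that $\F\mid_O$ is complete at $y$ if and only if $y\notin S$. Hence the locus of completeness of $\F\mid_O$ inside $O$ is precisely $O\setminus(S\cap O)$.

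Given this, one direction is immediate. If $S\cap O$ has measure zero, then $O\setminus(S\cap O)$ has full measure in $O$, hence is dense, so $\F\mid_O$ is complete on a dense subset of $O$, i.e.\ complete on $O$. Conversely, if $\F\mid_O$ is complete on $O$, then it is complete on some everywhere dense subset $D\subset O$; by the previous paragraph $D\subset O\setminus S$, so $O\setminus S$ is dense in $O$, and therefore $S\cap O$ has empty interior in $O$. What remains is to promote ``empty interior'' to ``measure zero'', and this is the step I expect to require the most care.

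To carry out that last step I would exploit the algebraic nature of $S$. Put $r=\rank\Pi$; then $\rank P_\lambda(y)<r$ is equivalent to the simultaneous vanishing of all $r\times r$ minors of $P_\lambda(y)$, and after homogenizing in $\lambda$ each such minor is a homogeneous polynomial in the parameter of the pencil whose coefficients depend smoothly (analytically, if the brackets are analytic) on $y$. Thus $y\in S\cap O$ exactly when these polynomials have a common zero in $\CP$, i.e.\ when the coefficient data of $y$ lies in the ``common--root'' locus $\mathcal V$, a proper closed algebraic subset of the relevant coefficient space (it is the image of the incidence variety $\{(\lambda,\text{coeffs}):\text{all minors vanish at }\lambda\}$ under projection along the compact factor $\CP$, and a generic tuple of polynomials has no common root since $r<\dim M$ produces more than one minor). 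Hence $S\cap O$ is the preimage of $\mathcal V$ under a smooth map. In the real-analytic category this presents $S\cap O$ as a proper analytic subset of $O$ — proper precisely because it has empty interior — and such a subset automatically has measure zero; in the $C^\infty$ setting the same conclusion follows from the tameness of rank-drop loci already implicit in the construction of $\F$. The main obstacle is thus exactly this dichotomy for $S\cap O$, namely that a rank-drop locus with empty interior must be negligible; everything else is soft topology together with Theorem~\ref{BolTh} and the constancy of $\rank P_\alpha$ along $O$.
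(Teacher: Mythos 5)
Your reduction is exactly the intended one: Theorem~\ref{BolTh} applies at every point $y$ of $O=O(\alpha,x)$ because $\rank P_{\alpha}$ is constant (equal to $\dim O=\rank \Pi$) along the leaf, so the set of points of $O$ at which $\F\mid_O$ is complete is precisely $O\setminus S$; and since ``complete on $O$'' means, by the paper's definition, ``complete at every point of a dense subset,'' the corollary amounts to the equivalence of ``$S\cap O$ has empty interior in $O$'' with ``$S\cap O$ has measure zero in $O$.'' The implication measure zero $\Rightarrow$ dense complement is immediate, and that is the only direction actually used downstream (Kronecker $\Rightarrow$ completeness on almost all regular leaves). You have correctly isolated the one nontrivial point, namely upgrading ``empty interior'' to ``measure zero'' in the converse.

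However, your resolution of that point is not a proof at the stated level of generality. The set $S\cap O$ is the common zero locus of finitely many smooth functions on $O$ (resultant-type expressions in the entries of $P_0(y)$ and $P_\infty(y)$ detecting a common root in $\CP$ of all $r\times r$ minors of $P_\lambda(y)$), and for smooth functions ``zero set with empty interior'' does not imply ``zero set of measure zero'': a $C^\infty$ function can vanish exactly on a fat Cantor set. Equivalently, the preimage of a proper algebraic subvariety $\mathcal V$ of the coefficient space under a merely smooth map can be nowhere dense yet of positive measure; properness of $\mathcal V$ buys nothing without analyticity or a transversality/stratification argument, neither of which you supply. Your sentence ``in the $C^\infty$ setting the same conclusion follows from the tameness of rank-drop loci already implicit in the construction of $\F$'' is not an argument --- no such tameness is established anywhere in Section~\ref{ihs}. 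In the real-analytic category (which covers every example in the paper, e.g.\ the linear pencils and the Toda pencil) your resultant argument does close the gap, since a proper analytic subset of a connected analytic manifold has measure zero. The paper states the corollary without proof, so the same latent issue is present there; a fully rigorous version must either assume analyticity (or some tameness of the rank-drop stratification) or replace ``measure zero'' by ``nowhere dense'' in the statement.
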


	\begin{consequence}
		If $\Pi$ is Kronecker, then $\F$ is complete on almost all regular symplectic leaves.		
	\end{consequence}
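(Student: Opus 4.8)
The plan is to deduce the statement from Corollary \ref{intCor} together with a Fubini-type argument. Fix, as in the construction of $\F$, a parameter $\alpha\in\R$ for which the open set $M_\alpha=\{x\in M:\rank P_\alpha(x)=\rank\Pi\}$ is non-empty (by Remark \ref{badRemark} the completeness of $\F$ does not depend on this choice, so one may take $\alpha$ generic). On $M_\alpha$ the symplectic foliation of $P_\alpha$ is regular, all of its leaves having the same dimension $\rank\Pi$; since the rank of $P_\alpha$ is constant along each of its leaves, these are exactly the \emph{regular symplectic leaves} occurring in the statement, and each is entirely contained in $M_\alpha$. By Corollary \ref{intCor}, the restriction of $\F$ to such a leaf $O$ is complete provided $S\cap O$ has measure zero in $O$. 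Hence it suffices to show that $S\cap O$ is null in $O$ for almost every regular symplectic leaf.

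Since the foliation of $M_\alpha$ by leaves of $P_\alpha$ is regular, each of its points has a foliation chart (``flowbox'') $U\cong L\times B$, where $L$ is an open ball of dimension $\rank\Pi$ identified with an open piece of a leaf and $B$ is a transverse ball parametrizing the plaques $L\times\{t\}$, $t\in B$. Choose a countable cover of $M_\alpha$ by such charts $U_i\cong L_i\times B_i$. Because $\Pi$ is Kronecker, $S$ has measure zero in $M$, so $S\cap U_i$ has measure zero in $U_i$. Equipping $L_i\times B_i$ with the product of the leafwise symplectic volume and Lebesgue measure on $B_i$ (a measure in the same measure class as the chart Lebesgue measure), Fubini's theorem for null sets yields a Lebesgue-null set $N_i\subset B_i$ such that for every $t\in B_i\setminus N_i$ the slice $(S\cap U_i)\cap(L_i\times\{t\})$ has measure zero in $L_i$.

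It remains to globalize, and this is where the only real subtlety lies. An arbitrary regular symplectic leaf $O$ is covered by countably many plaques, each of the form $L_i\times\{t\}$ for a suitable index $i$ and transverse parameter $t=t_i(O)\in B_i$. Call $O$ \emph{exceptional} if $t_i(O)\in N_i$ for at least one $i$ with $O\cap U_i\neq\emptyset$. One must give a precise meaning to ``almost every regular symplectic leaf'' — the leaf space need not be Hausdorff, so one declares a family of leaves to be null if its trace on each transversal $B_j$ is Lebesgue-null — and then check that the exceptional leaves form a null family; this holds because transition maps between transversals of a foliation are local diffeomorphisms and the various natural transverse measures are mutually absolutely continuous. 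For a non-exceptional leaf $O$, the set $S\cap O$ is a countable union of sets that are null in the respective plaques covering $O$, hence $S\cap O$ has measure zero in $O$; by Corollary \ref{intCor}, $\F$ is then complete on $O$. Apart from this bookkeeping, the argument uses only local triviality of a regular foliation and Fubini's theorem for null sets.
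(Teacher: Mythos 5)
Your proof is correct and follows what is essentially the paper's (implicit) argument: the corollary is stated without proof as an immediate consequence of Corollary \ref{intCor} together with the fact that the Kronecker condition makes $S$ a null set, and the Fubini-type slicing over the regular symplectic foliation that you spell out is exactly the standard way to pass from ``$S$ is null in $M$'' to ``$S\cap O$ is null in almost every leaf $O$''. Your careful treatment of the transverse measure class and the countable chart cover fills in details the paper leaves to the reader, but the route is the same.
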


	The theorem also implies that the singular points of $\F \mid_{O(\alpha, x)}$ are exactly the points where the rank of some bracket $P_{\beta}\in\Pi$ drops.
		The main goal of the present paper is to answer the following question: how to check non-degeneracy of these points and determine their type?
	 It turns out that the answer can be given in terms of the so-called linearization of the pencil $\Pi$, which is defined in the next section.

	
\section{Main constructions and results}
	

\subsection{Linearization of a Poisson pencil}\label{linpen}

Let $P$ be a Poisson structure on a manifold $M$ and $x \in M$. It is well-known that the linear part of $P$ defines a Lie algebra structure on the kernel of $P$ at  $x$. This structure is defined as follows. Let $\xi, \eta \in \Ker P(x)\subset \T^*_xM$. Choose any functions $f,g$ such that $\diff f = \xi, \diff g = \eta$, and set
$$
	[\xi, \eta] = \diff \{ f, g \}.
$$
\newpage
The following is well-known.
\begin{statement}
		If $\rank P(x) = \rank P$, then $\Ker P(x)$ is an Abelian Lie algebra.
\end{statement}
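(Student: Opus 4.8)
The plan is to reduce the statement to the existence of a complete set of local Casimir functions near a regular point of $P$. First I would observe that the hypothesis $\rank P(x)=\rank P$ forces the rank to be locally constant: since the rank of a skew form is lower semicontinuous, in some neighbourhood $U$ of $x$ we have $\rank P(y)=\rank P=\dim M-k$ for all $y\in U$, where $k=\corank P$. Hence the characteristic distribution $\Imm{P(y)}\subset\T_yM$ has constant dimension on $U$ and, by the Jacobi identity for $P$, is integrable; its integral manifolds are the symplectic leaves, which near $x$ form a genuine regular foliation of codimension $k$. Taking a local cross-section to this foliation and pulling back $k$ coordinate functions from it --- equivalently, applying the Weinstein splitting theorem --- I obtain $c_1,\dots,c_k\in\Cont^\infty(U)$, functionally independent at $x$, that are constant along the symplectic leaves, i.e. $k$ independent local Casimir functions of $P$; in particular $\{c_i,h\}\equiv 0$ on $U$ for every $h\in\Cont^\infty(U)$ and every $i$.

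Since each $\diff c_i(y)$ lies in $\Ker P(y)$ for all $y\in U$ and, at the regular point $x$, $\dim\Ker P(x)=k$ while the $\diff c_i(x)$ are linearly independent, the differentials $\diff c_1(x),\dots,\diff c_k(x)$ form a basis of $\Ker P(x)$. Now take arbitrary $\xi,\eta\in\Ker P(x)$ and write $\xi=\sum_i a_i\,\diff c_i(x)$, $\eta=\sum_j b_j\,\diff c_j(x)$ with constant coefficients. As representatives I would choose the Casimir functions $f=\sum_i a_ic_i$ and $g=\sum_j b_jc_j$, so that $\diff f(x)=\xi$ and $\diff g(x)=\eta$. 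Because the bracket $[\xi,\eta]=\diff\{f,g\}(x)$ does not depend on the choice of representatives (this is part of the well-known fact, recalled just above, that the linear part of $P$ defines a Lie algebra structure on $\Ker P(x)$), it may be computed from these particular $f,g$; but $\{f,g\}=\sum_{i,j}a_ib_j\{c_i,c_j\}\equiv 0$ since the $c_i$ are Casimirs, whence $[\xi,\eta]=\diff\{f,g\}(x)=0$. As $\xi,\eta$ were arbitrary, $\Ker P(x)$ is Abelian.

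The only substantive ingredient is the existence of the $k$ independent local Casimirs near a regular point, i.e. the local regularity of the symplectic foliation; this is the standard Weinstein splitting theorem (or, more elementarily, the first-integral theorem for regular integrable distributions), so I do not expect any real obstacle here --- the rest is routine manipulation of differentials. If one prefers to avoid even the language of Casimirs, one may simply pass to Weinstein--Darboux coordinates $(p_1,\dots,p_m,q_1,\dots,q_m,z_1,\dots,z_k)$ near $x$, with $2m=\rank P$, in which $P=\sum_{i=1}^m\diffX{p_i}\wedge\diffX{q_i}$; then $\Ker P(x)=\ls{\diff z_1(x),\dots,\diff z_k(x)}$ and $\{z_i,z_j\}\equiv 0$, which yields the claim at once.
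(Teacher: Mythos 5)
Your proof is correct. The paper offers no argument for this proposition at all --- it is stated as ``well-known'' --- so there is nothing to compare against; your route (lower semicontinuity of rank forces the rank to be locally constant at a point of maximal rank, hence the symplectic foliation is regular there and admits $k=\corank P$ independent local Casimirs whose differentials span $\Ker P(x)$, and computing the bracket with Casimir representatives gives $\diff\{f,g\}(x)=0$) is precisely the standard proof being alluded to. The only ingredient you lean on beyond Frobenius/Weinstein is the well-definedness of the bracket on $\Ker P(x)$, which the paper itself asserts just before the proposition, so your use of it is legitimate.
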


Consider a Poisson pencil $\Pi=\{ P_{\lambda} = P_{0} + \lambda P_{\infty} \}$ and fix a point $x$ such that $\rank \Pi(x) = \rank \Pi$. Denote by $\g_{\lambda}$ the Lie algebra on the kernel of $P_{\lambda}$ at the point $x$. For regular $\lambda$ (i.e. for $\lambda \notin \Lambda(x)$) the algebra $\g_{\lambda}$ is Abelian. For singular $\lambda$ (i.e., $\lambda \in \Lambda(x)$) this is not necessarily the case, and therefore $\g_{\lambda}$ carries non-trivial information about the behavior of the pencil in the neighborhood of $x$.

\begin{remark}
	For $\lambda \in \RP$, the algebra $\g_\lambda$ is real. However, for complex values of $\lambda$, the space $\Ker P_\lambda(x)$ is a subspace of $\T^*_xM \otimes \Complex$,  and therefore $\g_\lambda$ is considered as a complex Lie algebra (see Remark \ref{complCotSpace}).
\end{remark}

It turns out that, apart from the Lie algebra structure, $\g_{\lambda}$ carries one more additional structure.

\begin{statement}
	For any $\alpha$ and $\beta$  the restrictions of $P_{\alpha}(x)$ and $P_{\beta}(x)$ to $\g_{\lambda}(x)$ coincide up to a constant factor.
\end{statement}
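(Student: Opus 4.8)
The plan is to exploit the affine dependence of the pencil on its parameter. Fix a point $x$ with $\rank\Pi(x)=\rank\Pi$ and fix $\lambda\in\CP$; I want to compare the forms $P_\alpha(x)$ and $P_\beta(x)$ after restricting them to $\g_\lambda=\Ker P_\lambda(x)\subset\T^*_xM$. Assume first that $\lambda\ne\infty$. Since $P_\alpha=P_0+\alpha P_\infty$, for every parameter $\alpha$ we have the pointwise identity of skew-symmetric bilinear forms
\[
P_\alpha(x)=P_\lambda(x)+(\alpha-\lambda)\,P_\infty(x),
\]
and the proof will consist simply in restricting this identity to the subspace $\g_\lambda$.

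The key observation is that $P_\lambda(x)$ vanishes identically on $\g_\lambda$: by definition every $\xi\in\g_\lambda$ satisfies $P_\lambda(x)(\xi,\,\cdot\,)=0$ as a covector, so in particular $P_\lambda(x)(\xi,\eta)=0$ for all $\xi,\eta\in\g_\lambda$. Substituting this into the displayed identity gives
\[
P_\alpha(x)\big|_{\g_\lambda}=(\alpha-\lambda)\,P_\infty(x)\big|_{\g_\lambda}\qquad\text{for all }\alpha .
\]
Hence for any two parameters $\alpha,\beta$ with $\beta\ne\lambda$ the restrictions are proportional, namely
\[
P_\alpha(x)\big|_{\g_\lambda}=\frac{\alpha-\lambda}{\beta-\lambda}\;P_\beta(x)\big|_{\g_\lambda},
\]
while if $\beta=\lambda$ then $P_\beta(x)|_{\g_\lambda}=0$ and the two forms are still proportional (with factor $0$). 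This settles all finite $\lambda$.

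For $\lambda=\infty$ one has $\g_\infty=\Ker P_\infty(x)$, and the same reasoning — now using $P_\infty(x)|_{\g_\infty}=0$ — shows that $P_\alpha(x)|_{\g_\infty}=P_0(x)|_{\g_\infty}$ for every finite $\alpha$, so all these restrictions coincide and are trivially proportional (and $P_\infty(x)|_{\g_\infty}=0$). I do not expect any genuine obstacle: the statement follows immediately from the affine dependence of $P_\lambda$ on $\lambda$ together with the definition of $\g_\lambda$ as a kernel. The only points that require a careful word are the normalization at $\lambda=\infty$ and the meaning of ``up to a constant factor'' in the degenerate case when one of the restricted forms is identically zero; it is also worth noting that, in contrast with the preceding proposition (Abelianness of $\g_\lambda$ for regular $\lambda$), this statement uses neither the assumption $\rank\Pi(x)=\rank\Pi$ nor the regularity of $\lambda$.
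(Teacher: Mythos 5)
Your proof is correct and is exactly the paper's argument, just written out in full: the paper's one-line proof ("Since $P_{\lambda}$ vanishes on $\g_{\lambda}$, all other brackets of the pencil are proportional") is precisely your observation that $P_\alpha(x)=P_\lambda(x)+(\alpha-\lambda)P_\infty(x)$ restricts on $\g_\lambda$ to a multiple of $P_\infty(x)|_{\g_\lambda}$. Your extra care with $\lambda=\infty$ and the identically-zero case is a reasonable elaboration but not a different route.
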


\begin{proof}[Proof]
	Since $P_{\lambda}$ vanishes on $\g_{\lambda}$, all other brackets of the pencil are proportional.
\end{proof}

The restriction $P_{\alpha}|_{\g_{\lambda}}$ is a $2$-form on $\g_{\lambda}\subset \T^*_xM$ and, therefore,  can be interpreted as a constant Poisson tensor on $\g_{\lambda}^{*}$.

\begin{statement}\label{agree}
	The 2-form $P_{\alpha}|_{\g_{\lambda}}$, interpreted as a constant Poisson tensor on $\g_{\lambda}^{*}$, is compatible with the Lie-Poisson bracket on $\g_{\lambda}^{*}$.
\end{statement}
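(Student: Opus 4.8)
The plan is to reduce the assertion, via the standard dictionary between constant bivectors and $2$-cocycles, to the cocycle identity for the form $A = P_\alpha|_{\g_\lambda}$, and then to obtain that identity from the Jacobi identity for $P_\lambda + P_\alpha$ together with the fact that the relevant differentials lie in $\Ker P_\lambda(x)$.

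As recalled in the introduction, a constant Poisson tensor on $\g_\lambda^*$ is compatible with the Lie--Poisson bracket of $\g_\lambda$ if and only if the associated skew form $A\in\Lambda^2\g_\lambda^*$ satisfies the $2$-cocycle condition
$$
A(\xi,[\eta,\zeta])+A(\eta,[\zeta,\xi])+A(\zeta,[\xi,\eta])=0\qquad\text{for all }\xi,\eta,\zeta\in\g_\lambda
$$
(indeed, the Schouten bracket of the constant bivector $A$ with the linear Lie--Poisson bivector equals, up to a constant factor, the Chevalley--Eilenberg coboundary of $A$, so its vanishing is precisely the condition above). If $\alpha=\lambda$ then $A=0$ and there is nothing to prove, so we may assume $\alpha\ne\lambda$; it then remains to verify the displayed identity.

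Given $\xi,\eta,\zeta\in\g_\lambda\subset\T^*_xM$, I would choose functions $f,g,h$ on $M$ with $\diff f(x)=\xi$, $\diff g(x)=\eta$, $\diff h(x)=\zeta$. By the definition of the bracket on $\g_\lambda$ we have $[\eta,\zeta]=\diff\{g,h\}_\lambda(x)$, and, $\g_\lambda$ being a Lie algebra, this vector again lies in $\g_\lambda$; since $A$ is the restriction of the bivector $P_\alpha(x)$ to $\g_\lambda$, this gives
$$
A(\xi,[\eta,\zeta])=P_\alpha(x)\bigl(\diff f(x),\diff\{g,h\}_\lambda(x)\bigr)=\{f,\{g,h\}_\lambda\}_\alpha(x).
$$
Hence the cocycle condition for $A$ is equivalent to the vanishing at $x$ of the cyclic sum
$$
\{f,\{g,h\}_\lambda\}_\alpha+\{g,\{h,f\}_\lambda\}_\alpha+\{h,\{f,g\}_\lambda\}_\alpha .
$$
Now, since $P_\lambda$ and $P_\alpha$ both lie in the pencil $\Pi$, the bracket $\{\cdot,\cdot\}_\lambda+\{\cdot,\cdot\}_\alpha$ is again Poisson; expanding its Jacobi identity and using the Jacobi identities for $\{\cdot,\cdot\}_\lambda$ and for $\{\cdot,\cdot\}_\alpha$ separately yields the ``mixed'' identity
$$
\sum_{\mathrm{cyc}}\{f,\{g,h\}_\lambda\}_\alpha+\sum_{\mathrm{cyc}}\{f,\{g,h\}_\alpha\}_\lambda=0
$$
on all of $M$. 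Evaluating at $x$: every term of the second sum is of the form $\{f,\,\cdot\,\}_\lambda(x)$, $\{g,\,\cdot\,\}_\lambda(x)$, or $\{h,\,\cdot\,\}_\lambda(x)$, that is, $P_\lambda(x)$ applied to a pair of covectors one of which is $\diff f(x)=\xi$, $\diff g(x)=\eta$, or $\diff h(x)=\zeta$; these vanish because $\xi,\eta,\zeta\in\Ker P_\lambda(x)$. Therefore the second sum is zero at $x$, so the first sum is too, which is exactly the cocycle condition for $A$. The complex case $\lambda\notin\RP$ is covered by the identical computation after complexification (cf. Remark \ref{complCotSpace}).

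I do not anticipate a genuine obstacle here: every ingredient — well-definedness and $\Ker$-valuedness of the linearized bracket on $\g_\lambda$, the constant-bivector/$2$-cocycle dictionary, and the vanishing of the mixed Jacobiator for compatible Poisson brackets — is standard. The one point deserving a moment's care is the verification that $\diff\{g,h\}_\lambda(x)$ lies back in $\g_\lambda$, so that the form $A$ may legitimately be evaluated on it; but this is precisely the statement, recalled above, that $\g_\lambda$ carries a Lie algebra structure under the linearized bracket.
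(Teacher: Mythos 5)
Your proof is correct and follows essentially the same route as the paper's: both reduce compatibility to the $2$-cocycle condition via the standard dictionary (Proposition \ref{linAgree}), invoke the mixed Jacobi identity coming from compatibility of $P_\lambda$ and $P_\alpha$, and observe that the three terms with outer bracket $\{\cdot,\cdot\}_\lambda$ vanish at $x$ because $\diff f(x),\diff g(x),\diff h(x)\in\Ker P_\lambda(x)$. Your added checks (the case $\alpha=\lambda$, the fact that $[\eta,\zeta]$ lands back in $\g_\lambda$, and the complexified case) are harmless elaborations of points the paper leaves implicit.
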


The proof follows from the next well-known statement.

\begin{statement}\label{linAgree}
	Let $\g$ be a Lie algebra, and let $A$ be a skew-symmetric bilinear form on $\g$. Then $A$ is compatible with the Lie-Poisson bracket if and only if it is a $2$-cocycle, i.e.
	\begin{align*}
		\diff A(\xi,\eta,\zeta) = A([\xi,\eta],\zeta) + A([\eta,\zeta],\xi) + A([\zeta,\xi],\eta) = 0
	\end{align*}
	for any $\xi, \eta, \zeta \in \g$.
\end{statement}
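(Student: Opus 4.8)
The plan is to translate compatibility of the two brackets into the vanishing of a single trilinear expression on $\g$ and then recognize that expression as the cocycle identity. Work on $\g^*$ and evaluate everything on linear functions $f_\xi(x)=\langle x,\xi\rangle$, $\xi\in\g$, whose differentials are the constant covectors $\xi\in\g=\T^*_x\g^*$. The Lie–Poisson bracket is $\{f_\xi,f_\eta\}_{\g}=f_{[\xi,\eta]}$ and the constant bracket associated with $A$ is $\{f_\xi,f_\eta\}_A=A(\xi,\eta)$, a constant function. Both brackets are genuine Poisson structures on their own: the Jacobi identity for $\{\,,\}_\g$ is exactly the Jacobi identity in $\g$, and a constant-coefficient bracket always satisfies Jacobi by a direct computation using skew-symmetry. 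Since the Jacobiator is bilinear over $\Cont^\infty$ in the sense that it depends only on the differentials of its arguments (equivalently, it is the trivector $[\pi,\pi]$ applied to $\diff f,\diff g,\diff h$), it suffices to check the Jacobi identity of the sum on the linear functions $f_\xi,f_\eta,f_\zeta$, and the Jacobiator of $\{\,,\}_\g+\{\,,\}_A$ is the sum of the two individual Jacobiators — both of which vanish — plus the ``mixed'' term. Hence compatibility is equivalent to the vanishing of this mixed term.

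Next I would compute the mixed term on $f_\xi,f_\eta,f_\zeta$, namely the cyclic sum of
$\{\{f_\xi,f_\eta\}_{\g},f_\zeta\}_A+\{\{f_\xi,f_\eta\}_A,f_\zeta\}_{\g}$.
In the second summand $\{f_\xi,f_\eta\}_A=A(\xi,\eta)$ is constant, so its Lie–Poisson bracket with $f_\zeta$ is zero. In the first summand $\{f_\xi,f_\eta\}_{\g}=f_{[\xi,\eta]}$, and $\{f_{[\xi,\eta]},f_\zeta\}_A=A([\xi,\eta],\zeta)$. Summing over cyclic permutations of $(\xi,\eta,\zeta)$ yields precisely $A([\xi,\eta],\zeta)+A([\eta,\zeta],\xi)+A([\zeta,\xi],\eta)$, which is the value of the Chevalley–Eilenberg differential $\diff A$ on $(\xi,\eta,\zeta)$.

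Putting these together: $\{\,,\}_\g+\{\,,\}_A$ satisfies the Jacobi identity if and only if $A([\xi,\eta],\zeta)+A([\eta,\zeta],\xi)+A([\zeta,\xi],\eta)=0$ for all $\xi,\eta,\zeta\in\g$, i.e. if and only if $A$ is a $2$-cocycle. This gives both implications of the Proposition at once. The computation is entirely routine; the only point that deserves explicit mention is the reduction to linear functions — the fact that the Jacobiator of a bivector depends only on the differentials of its arguments — and this is the standard fact that makes the statement ``well-known'', so I do not anticipate any real obstacle. (One may equivalently phrase the whole argument via the Schouten bracket: $[\pi_\g+\pi_A,\pi_\g+\pi_A]=[\pi_\g,\pi_\g]+2[\pi_\g,\pi_A]+[\pi_A,\pi_A]=2[\pi_\g,\pi_A]$, and the last bracket, evaluated on linear functions, is the cocycle expression above.)
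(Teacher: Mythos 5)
Your proof is correct. The paper itself offers no proof of this proposition --- it is stated as ``well-known'' --- so there is nothing to compare against directly; but your argument is the standard one, and the key identity you isolate (the vanishing of the mixed Jacobiator
$\sum_{\mathrm{cyc}}\{\{f,g\}_{\g},h\}_{A}+\sum_{\mathrm{cyc}}\{\{f,g\}_{A},h\}_{\g}$, reduced to linear functions $f_\xi$ via tensoriality of the Schouten bracket) is exactly the compatibility identity the authors invoke in their proof of Proposition \ref{agree}, so your route is fully consistent with how the paper uses this fact.
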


\begin{proof}[Proof of Proposition \ref{agree}]
	Since $P_{\alpha}$ and $P_{\lambda}$ are compatible, we have
	\begin{align*}
		\{ \{f,g\}_{\alpha}, h\}_{\lambda} &+ \{ \{g,h\}_{\alpha}, f\}_{\lambda} + \{ \{h,f\}_{\alpha}, g\}_{\lambda} + \\ &+ \{ \{f,g\}_{\lambda}, h\}_{\alpha} + \{ \{g,h\}_{\lambda}, f\}_{\alpha} + \{ \{h,f\}_{\lambda}, g\}_{\alpha} = 0.
	\end{align*}
	If $\diff f, \diff g, \diff h \in \Ker P_{\lambda}$, then the first three terms vanish, and
	\begin{align*}
		 \{ \{f,g\}_{\lambda}, h\}_{\alpha} + \{ \{g,h\}_{\lambda}, f\}_{\alpha} + \{ \{h,f\}_{\lambda}, g\}_{\alpha} = 0.
	\end{align*}
	So, for any $\xi, \eta, \zeta \in \g_\lambda$ we have
	$$
		P_{\alpha}([\xi, \eta], \zeta) + P_{\alpha}([\eta, \zeta], \xi) + P_{\alpha}([\zeta,\xi], \eta) = 0,
	$$	
	q.e.d.
\end{proof}

Consequently, $P_{\alpha}|_{\g_{\lambda}}$ defines a Poisson pencil on $\g_{\lambda}^{*}$. Denote this pencil by $\diff_{\lambda} \Pi(x)$.

\begin{definition}\label{linearizationDef}
	The pencil $\diff_{\lambda} \Pi(x)$ is called the \textit{$\lambda$-linearization} of $\Pi$ at the point $x$.
\end{definition}

The pencil $\diff_{\lambda} \Pi(x)$ is generated by a linear and a constant Poisson bracket. We call such pencils \textit{linear} and discuss some of their basic properties in the next section.


\subsection{Linear pencils}\label{linpenFirst}

\begin{definition}
	Let $\g$ be a Lie algebra, and  $A$ be a skew-symmetric bilinear form on $\g$.
	Then $A$ can be considered as a Poisson tensor on the dual space $\g^{*}$. Assume that the corresponding bracket is compatible with the Lie-Poisson bracket. The Poisson pencil $\Pi^{\g, A} = \{ P^{\g, A}_{\lambda} \}$, where
	$$
		P^{\g, A}_{\lambda}(x)(\xi, \eta) = \langle x, [\xi, \eta] \rangle + \lambda A(\xi, \eta),\mbox{ for } \xi, \eta \in \g\simeq \T^*_x\g^*,$$
		is called the \textit{linear pencil} associated with the pair $(\g, A)$.
\end{definition}

By Proposition \ref{linAgree}, a linear pencil is a Lie algebra $\g$ equipped with a $2$-cocycle $A$.

\begin{example}[Frozen argument bracket]
	Let $\g$ be an arbitrary Lie algebra, and $a \in \g^{*}$. Then the form $A_{a}(\xi, \eta) = \langle a, [\xi, \eta] \rangle$ is a $2$-cocycle (moreover, it is a coboundary). The corresponding Poisson bracket is called the \textit{frozen argument bracket}.
	This bracket naturally appears in the context of the argument shift method \cite{MF, Manakov}. For this reason, linear pencils associated to $A_{a}$ are called \textit{argument shift pencils}.
\end{example}

Following the general scheme from Section \ref{ihs}, we want to use linear pencils to construct commuting functions. To that end, we need the following property of \textit{regularity}.

\begin{definition}\label{regcocycle}
	We say that a cocycle $A$ on $\g$ is \textit{regular}, if $\rank \Pi^{\g, A} = \rank A$.
\end{definition}

\begin{example}
	For $A = A_{a}$, regularity of $A$ is equivalent to regularity of the element $a\in\g^*$.
\end{example}

Suppose that $A$ is a regular $2$-cocycle on $\g$. Then we can apply the construction of Section \ref{ihs} to the pencil $\Pi^{\g, A}$ and obtain the commutative family $\F$ defined in the neighbourhood of the origin $0\in\g^*$.

\begin{definition}
	Let $A$ be a regular $2$-cocycle on $\g$. The pencil  $\Pi^{\g, A}$ is called \textit{integrable} if $\F$ is complete on the symplectic leaf of $A$ passing through the origin.
\end{definition}

Thus, if a pencil  $\Pi^{\g, A}$ is integrable, then it canonically defines an integrable system on the symplectic leaf of $A$ passing through the origin.  This integrable system and its singularity at the origin are discussed in the next section.


\subsection{Singularities associated with integrable linear pencils}\label{linPenSing}

Consider an integrable linear pencil $\Pi^{\g,  A}$  and  the integrable system associated to  it  on the symplectic leaf of $A$ passing through the origin. The origin is a zero-rank singular point for this system. This means that every integrable linear pencil canonically defines a zero-rank singularity, i.e. a germ of an integrable system at a zero-rank singular point. Denote the singularity associated with $\Pi^{\g,  A}$ by $\mathsf{Sing}(\Pi^{\g,  A})$. Our goal is to understand under which conditions on $\g$ and $A$ this singularity is non-degenerate and, if so, to determine its type. We start with simple examples.

\begin{example}Let $\g$ be a real semisimple Lie algebra with two-dimensional coadjoint orbits and $A = A_{a}$ be an argument shift form where $a \in \g\simeq \g^{*}$ is a regular element. Below are the corresponding singularities.
\begin{itemize}
	\item $\so(3) \to$ an elliptic singularity. See Figure \ref{ellpic}.
	\begin{figure}
	\centerline{\includegraphics[scale=0.5]{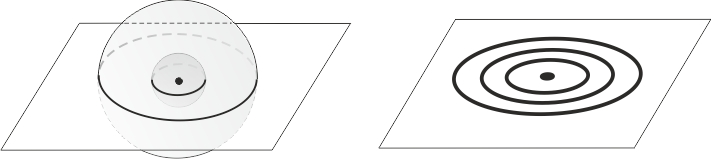}}
	\caption{Singularity corresponding to $\so(3)$}\label{ellpic}
	\end{figure}
	\item $\sL(2) \to $ a hyperbolic singularity if the Killing form is positive on $a$ (Figure \ref{hyppic}), an elliptic singularity if it is negative (Figure \ref{hyp1pic}), and degenerate if it is zero (Figure \ref{hyp2pic}).\par
		\begin{figure}
		\centerline{\includegraphics[scale=0.4]{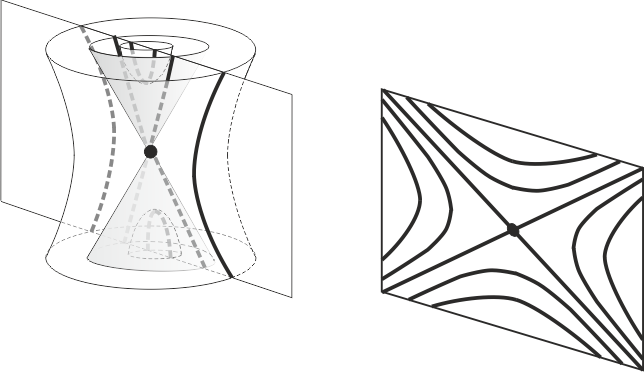}}
			\caption{Singularity corresponding to $\sL(2)$ with $\langle a,a \rangle > 0$}\label{hyppic}
	\end{figure}
			\begin{figure}
		\centerline{\includegraphics[scale=0.4]{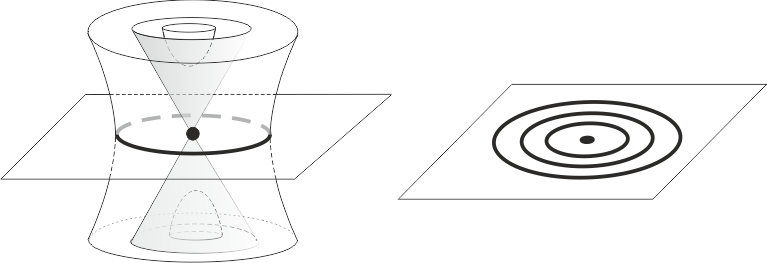}}
			\caption{Singularity corresponding to $\sL(2)$ with $\langle a,a \rangle < 0$}\label{hyp1pic}
	\end{figure}
		\begin{figure}
		\centerline{\includegraphics[scale=0.4]{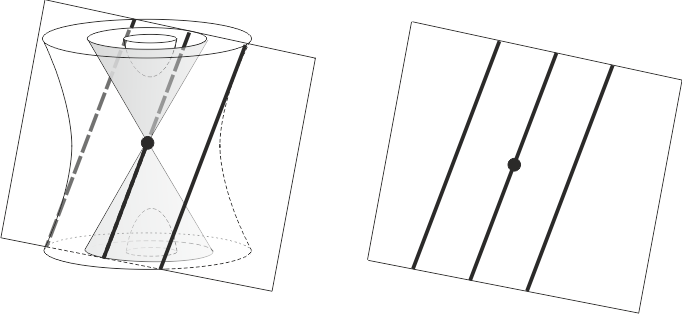}}
			\caption{Singularity corresponding to $\sL(2)$ with $\langle a,a \rangle = 0$}\label{hyp2pic}
	\end{figure}
\end{itemize}
\begin{example}
Let $\g$ be a real semisimple Lie algebra with four-dimensional coadjoint orbits and $A = A_{a}$ be an argument shift form where $a \in \g\simeq \g^{*}$ is a regular element. Below are the corresponding singularities.
\begin{itemize}
	\item  $\so(4) \simeq \so(3) \oplus \so(3) \to $ a center-center singularity, i.e., the  product of two elliptic singularities.
	\item  $\so(2,2) \simeq \sL(2) \oplus \sL(2) \to$  saddle-saddle (the product of two hyperbolic singularities), saddle-center (a product of an elliptic and a hyperbolic singularity), center-center (the product of two elliptic singularities), or a degenerate singularity.	
	\item  $\so(3,1) \simeq \so(3, \Complex) \simeq \sL(2, \Complex) \to$ a focus-focus singularity if $a$ is semisimple, and a degenerate singularity otherwise.
\end{itemize}
\end{example}

	Further, we show that no semisimple Lie algebras except for the sums of $\so(3), \sL(2)$ and $\so(3, 1)$ give rise to non-degenerate singularities. The counterpart of this fact in the theory of integrable systems is the Eliasson theorem: all non-degenerate singularities are products of elliptic, hyperbolic and focus-focus singularities (see Theorem \ref{EliassonThm}). However, some non-semisimple Lie algebras, as we shall see below, may ``produce'' non-degenerate singularities too.
\end{example}

\begin{definition}\label{nondegpencil}
	An integrable linear pencil $\Pi^{\g,  A}$ is called \textit{non-degenerate}, if the singularity $\mathsf{Sing}(\Pi^{\g,  A})$  is non-degenerate.
\end{definition}

 In Section \ref{cndlp}, we reformulate this definition in algebraic terms and classify all non-degenerate linear pencils.


\subsection{Main theorems}\label{regcriterion}

Let $\Pi=\{P_\lambda\}$ be a Poisson pencil of Kronecker type on $M$  (Definition \ref{Kronpencil}), and $x\in M$ be such that $\rank \Pi(x)=\rank \Pi$. In a neighborhood of $x$, we consider the commutative family $\mathcal F$ defined in Section \ref{ihs}. To state necessary and sufficient conditions for $x$ to be a non-degenerate singular point of $\mathcal F$, we need to introduce one algebraic condition on the pencil $\Pi(x)$.

\begin{definition}\label{diagpencil}
	A pencil $\Pi$ will be called diagonalizable at a point $x$, if for each $\lambda \in \Lambda(x)$ and any $\alpha \neq \lambda$ the following is true
	$$
		\dim \Ker \left(P_{\alpha}(x)\mid_{\Ker P_{\lambda}(x)}\right) = \corank \Pi(x).
	$$
\end{definition}

\begin{remark}
If $P_0$ and $P_\infty$ were non-degenerate, then the spectrum of the pencil would be just the spectrum of the recursion operator $\mathcal{R} = -P_\infty^{-1}P_0$, and the diagonalizability condition would mean the diagonalizability of $\mathcal{R}$. Since $P_0$ and $P_\infty$ are degenerate, it is not possible to define the operator $\mathcal{R}$ on the whole cotangent space. However, it can be defined on a certain quotient space (see Section \ref{zeroOrderTheory}). The diagonalizability condition means that this operator on the quotient is diagonalizable.\par
	In terms of the Jordan-Kronecker decomposition of the pencil $\Pi$ at $x$ the diagonalizability condition means that all the Jordan blocks $J(\lambda_{i})$ have size $1 \times 1$, i.e., are trivial (see Appendix A).
\end{remark}

\begin{theorem}[Non-degeneracy condition]\label{nd1}
Let  $x\in M$ and $\alpha \in \RP$ be such that $\rank P_\alpha(x)=\rank \Pi$. Let also $O(\alpha, x)$ be the symplectic leaf of the bracket $P_{\alpha}$ passing through $x$. 
	Then  $x$  is a non-degenerate singular point of the integrable system $\F\mid_{O(\alpha, x)}$  if and only if the following two conditions hold.
	\begin{enumerate}
		\item $\Pi$ is diagonalizable at $x$.
		\item For each $\lambda \in \Lambda(x)$ the $\lambda$-linearization $\diff_{\lambda}\Pi(x)$ is non-degenerate.
	\end{enumerate}
\end{theorem}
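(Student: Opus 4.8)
The plan is to reduce the non-degeneracy of $x$ as a singular point of $\F|_{O(\alpha,x)}$ to a statement about the commutative subalgebra $A_\F\subset\sP(W^\perp/W,\omega)$ and then decompose this symplectic space according to the spectrum $\Lambda(x)$. The first step is to understand the ambient symplectic space: on $O(\alpha,x)$ the form $\omega$ is induced by $P_\alpha$, and the space $W=\{\sgrad f(x):f\in\F\}$ is spanned by the Hamiltonian vector fields of all the local Casimirs. Using Theorem~\ref{BolTh} and the description of $\F$ via Casimirs of regular brackets, I would identify $W^\perp/W$ and, crucially, relate it to the quotient construction where the recursion operator lives (the one mentioned in the remark after Definition~\ref{diagpencil} and developed in Section~\ref{zeroOrderTheory}). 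The key linear-algebra fact to establish is that the pencil $\Pi(x)$, viewed on the appropriate finite-dimensional quotient of $\T^*_xM$, splits as a direct sum of pieces indexed by $\lambda\in\Lambda(x)$ (plus a Kronecker/regular part that contributes nothing to the singularity), and that this splitting is $\omega$-orthogonal and compatible with $A_\F$.

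Second, I would show that condition (1), diagonalizability, is exactly what is needed for this splitting to be "semisimple" — i.e.\ for the generalized eigenspace of each $\lambda_i$ to reduce to a genuine eigenspace on the quotient, so that the $\lambda_i$-component of $W^\perp/W$ is precisely the symplectic space attached to the linear pencil $\diff_{\lambda_i}\Pi(x)$. If some Jordan block $J(\lambda_i)$ had size $\ge 2$, I expect to produce an element of $A_\F$ with a nontrivial nilpotent part that cannot lie in any Cartan subalgebra of $\sP$, hence forcing degeneracy; this is the "only if" direction for condition (1). Granting diagonalizability, the operators in $A_\F$ decompose blockwise, and the restriction of $A_\F$ to the $\lambda_i$-block must be identified with the analogous commutative subalgebra $A_{\F_i}$ built from the integrable linear pencil $\diff_{\lambda_i}\Pi(x)$ at its zero-rank singular point $0\in\g_{\lambda_i}^*$. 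Once that identification is in place, the statement "$A_\F$ is Cartan in $\sP(W^\perp/W)$" becomes "each block $A_{\F_i}$ is Cartan in the corresponding $\sP$", which is by Definition~\ref{nondegpencil} the non-degeneracy of $\diff_{\lambda_i}\Pi(x)$ — and a direct sum of Cartan subalgebras is Cartan, so the equivalence follows.

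The main obstacle, and the heart of the proof, is the identification in the previous paragraph: showing that the linearized fibration on the $\lambda_i$-block of $W^\perp/W$ coincides with $\mathsf{Sing}(\diff_{\lambda_i}\Pi(x))$, i.e.\ that the linearizations at $x$ of the Hamiltonian vector fields $\sgrad f$, $f\in\F$, $\diff f(x)=0$, restricted to the $\lambda_i$-block, reproduce exactly the linear Hamiltonian vector fields generated by the Casimirs of the pencil $\Pi^{\g_{\lambda_i},A}$ on $\g_{\lambda_i}^*$. This requires relating the second-order behavior of the Casimir functions of $P_\nu$ near $x$ to the first-order (Lie-algebraic) data $\g_{\lambda_i}=\Ker P_{\lambda_i}(x)$ together with the cocycle $A=P_\alpha|_{\g_{\lambda_i}}$; I would do this by a careful Taylor expansion of $P_\nu$ in both the base variable on $M$ and the pencil parameter $\nu$ near $\lambda_i$, passing to the quotient to kill the regular directions, and matching the resulting quadratic Hamiltonians with those of the linear pencil. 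The technical subtleties — that the choice of $\alpha$ and $\eps$ defining $\F$ does not matter (Remark~\ref{badRemark}), and that complex $\lambda_i$ are handled via the complexification as in Remark~\ref{complCotSpace}, with conjugate pairs $\lambda_i,\bar\lambda_i$ producing a single real block — would be dealt with alongside, but the Taylor-expansion/quotient identification is where the real work lies.
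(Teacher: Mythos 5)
Your overall architecture coincides with the paper's: reformulate non-degeneracy as the statement that the commutative algebra of linearizations is a Cartan subalgebra of the reduced symplectic Lie algebra, split the reduced space into recursion-operator eigenspaces indexed by $\Lambda(x)$, and identify each block with the singularity of the corresponding $\lambda$-linearization. But there is one step that, as written, would fail. You pass from ``each block restriction $A_{\F_i}$ is Cartan'' to ``$A_\F$ is Cartan'' by invoking that a direct sum of Cartan subalgebras is Cartan. That is true of the direct sum, but you have not shown that $A_\F$ \emph{is} the direct sum of its block restrictions: a commutative subalgebra of $\sP(K_1)\oplus\sP(K_2)\subset\sP(K_1\oplus K_2)$ whose projection to each factor is Cartan need not be Cartan — the diagonal copy of a Cartan subalgebra $\h\subset\sP(K)$ inside $\h\oplus\h$ projects onto a Cartan subalgebra in each factor yet has the wrong dimension and too large a centralizer. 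The paper closes exactly this gap (Proposition \ref{part0}) using an algebraic input your proposal never isolates: the family $\D_\F=\{\D_fP_\alpha|_{L^\bot/L}\}$ is invariant under \emph{right multiplication} by the recursion operators (item 4 of Proposition \ref{dfpOnQuotient}), so $\D_\F\,p_\lambda(R_0^\infty)\subset\D_\F$ for the polynomial $p_\lambda$ realizing the projection onto $K_\lambda$, and this product is precisely the block restriction $\D_\lambda$. Without this (or an equivalent surjectivity argument) the ``if'' direction of the theorem is not established.

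Two further points, less serious. For the identification of the $\lambda$-block with $\mathsf{Sing}(\diff_\lambda\Pi(x))$ — which you rightly flag as the heart of the matter — you propose a Taylor expansion of $P_\nu$ in the point and the pencil parameter; the paper instead rests on an exact first-order identity worth isolating: for $\diff f(x)\in\Ker P(x)$ the dual linearization satisfies $\D_fP|_{\Ker P(x)}=\ad(\diff f(x))$ for the linearized Lie algebra structure on $\Ker P(x)$, and Lemma \ref{biham} rewrites $\D_fP_\alpha$ as $\D_gP_\lambda$ for an explicit $g\in\F$, giving $\{\D_fP_\alpha|_{K_\lambda}\}=\{\ad\xi|_{\g_\lambda/\Ker A}\}_{\xi\in\Ker A}$, which is exactly the Cartan criterion for the linear pencil. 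Finally, your ``only if'' argument for diagonalizability (manufacturing an element of $A_\F$ with a nilpotent part) is harder than necessary and not obviously realizable; the paper simply notes that a Cartan subalgebra contains a semisimple element with distinct eigenvalues, that the recursion operators commute with $\D_\F$, and hence are themselves diagonalizable.
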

The proof of this theorem is given in Section \ref{mainProof}.
\newpage
\begin{remark}
Note that under conditions 1-2 of Theorem \ref{nd1} the pencil $\Pi$ is Kronecker, and the family $\F\mid_{O(\alpha, x)}$ is a completely integrable system.
\end{remark}
\smallskip
Recall that $\F$, by definition, is generated by Casimir functions of regular brackets. The following example shows that the statement of Theorem \ref{nd1} may be wrong if  $\mathcal F$ contains a Casimir function of a bracket which is singular at $x$.

\begin{example}\label{badExample}
	Consider $\so(3)^{*}$ with the following Poisson structure
	$$
		P_{0} = (x^{2}+y^{2}+z^{2})P_{\so(3)},
	$$
	where $P_{\so(3)}$ is the standard Lie-Poisson structure on $\so(3)^*$, and
	let $P_{\infty}$ be any constant bracket of rank two. It is easy to check that $P_{0}$ and $P_{\infty}$ are compatible. \par
	Choose $x^{2} + y^{2} + z^{2}$ as a Casimir function of $P_0$. The restriction of this function to the symplectic leaf of $P_{\infty}$ passing through the origin defines an integrable system. The origin is a non-degenerate elliptic singular point of this system. However, the linearization of the pencil at the origin is zero, therefore the conditions of Theorem \ref{nd1} do not hold. \par
	However, if we take a Casimir function of a regular bracket, it will look like 
	$$
	(x^{2} + y^{2} + z^{2})^{2} + \mbox{linear terms,}
	$$ 
	and its restriction to the symplectic leaf of $P_{\infty}$ will be degenerate, as predicted by Theorem \ref{nd1}.\par
	The problem is that  the function $x^{2}+ y^{2}+z^{2}$ is an ``isolated'' point in the set
of {\it all} Casimir functions of {\it all} brackets of the pancil.
	If the set of all Casimir functions formed a smooth family, then Theorem \ref{nd1} could be applied even if $\mathcal F$ contained Casimir functions of a singular bracket. This can be easily shown by continuity argument.
	\end{example}

\begin{theorem}[Type theorem]\label{type1}
	Assume that conditions 1-2 of Theorem \ref{nd1} hold. Then the type of the singular point $x$  is the sum of types of $\mathsf{Sing}(\diff_{\lambda}\Pi(x))$ for all $\lambda \in \Lambda(x)$. In other words, the type of $x$ is $(k_{e}, k_{h}, k_{f})$, where
	\begin{align*}
		k_{e} &= \sum_{\substack{\lambda \in \Lambda(x) \cap \RP}}k_{e}(\lambda),\qquad
		k_{h} =  \sum_{\substack{\lambda \in \Lambda(x) \cap \RP}}k_{h}(\lambda), \qquad
		k_{f} =  \sum_{\substack{\lambda \in \Lambda(x),\\ \Imm \lambda \geq  0}}k_{f}(\lambda),		
	\end{align*}
	and $(k_{e}(\lambda), k_{h}(\lambda), k_{f}(\lambda))$ is the type of $\mathsf{Sing}(\diff_{\lambda}\Pi(x))$.\end{theorem}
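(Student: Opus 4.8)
The plan is to reduce the whole statement to decomposing a single Cartan subalgebra. By the Eliasson theorem (Theorem~\ref{EliassonThm}), the type of the non-degenerate singular point $x$ equals the Williamson type of the commutative subalgebra $A_{\F}\subset\sP(W^{\bot}/W,\omega)$ formed by the linearizations $A_{f}$ of the fields $\sgrad f$ with $f\in\F$, $\diff f(x)=0$. So it suffices to exhibit a symplectic direct sum decomposition
$$
W^{\bot}/W \;=\; \bigoplus_{\lambda}V_{\lambda},
$$
the sum running over $\lambda\in\Lambda(x)$ with every non-real $\lambda$ taken together with its conjugate $\bar\lambda$, such that (i) each $V_{\lambda}$ is $A_{\F}$-invariant, whence $A_{\F}=\bigoplus_{\lambda}\h_{\lambda}$ with $\h_{\lambda}:=A_{\F}|_{V_{\lambda}}$ a commutative subalgebra of $\sP(V_{\lambda})$, and (ii) $(V_{\lambda},\h_{\lambda})$ is symplectically isomorphic to the data $\bigl(W^{\bot}_{0}/W_{0},A_{\F'}\bigr)$ attached to the zero-rank singularity $\mathsf{Sing}(\diff_{\lambda}\Pi(x))$. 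Granting (i)--(ii), $\h_{\lambda}$ is conjugate to the Cartan subalgebra of $\mathsf{Sing}(\diff_{\lambda}\Pi(x))$, and the theorem reduces to additivity of the Williamson type.

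First I would produce the splitting. On the quotient space where the recursion operator $\mathcal R$ of the pencil is defined (Section~\ref{zeroOrderTheory}), diagonalizability of $\Pi$ at $x$ means precisely that $\mathcal R$ is semisimple with spectrum $\Lambda(x)$. Identifying this quotient with a space isomorphic to $W^{\bot}/W$, I take $V_{\lambda}$ to be the $\mathcal R$-eigenspace for $\lambda$ (and, for non-real $\lambda$, the real part of $V_{\lambda}\oplus V_{\bar\lambda}$). Compatibility of $P_{0}$ and $P_{\infty}$ forces eigenblocks for distinct eigenvalue-classes to be $\omega$-orthogonal, so the sum is symplectically direct; and since every integral in $\F$ is assembled from Casimir functions of brackets of $\Pi$, each linearized flow $A_{f}$ commutes with $\mathcal R$ and therefore preserves every $V_{\lambda}$. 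This yields (i).

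The content of (ii) is that the quadratic part at $x$ of a Casimir function $f_{\nu}$ of a \emph{regular} bracket $P_{\nu}$ (with $\nu$ close to $\alpha$), restricted to the eigenblock $V_{\lambda}$, depends only on the first-order behaviour of the pencil at $\lambda$, and that behaviour is exactly what $\diff_{\lambda}\Pi(x)=(\g_{\lambda},P_{\alpha}|_{\g_{\lambda}})$ records; hence the restriction to $V_{\lambda}$ of the linearized system generated by $\F$ coincides, after the identification, with the linearized system of $\mathsf{Sing}(\diff_{\lambda}\Pi(x))$, so that $\h_{\lambda}$ is its Cartan subalgebra, of type $(k_{e}(\lambda),k_{h}(\lambda),k_{f}(\lambda))$. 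This is essentially the computation underlying Theorem~\ref{nd1}, now tracked on each eigenblock instead of merely used to test non-degeneracy. The technical heart of the proof --- and the step I expect to be the main obstacle --- is making this rigorous: identifying the quadratic parts of regular-bracket Casimirs along the $\lambda$-eigenblock with the integrals of the $\lambda$-linearization, and verifying that the symplectic form induced on $V_{\lambda}$ agrees, up to the scalar implicit in the definition of $\diff_{\lambda}\Pi(x)$, with the form coming from $A=P_{\alpha}|_{\g_{\lambda}}$.

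It then remains to read off the type. The Williamson type of a Cartan subalgebra of $\sP\!\bigl(\bigoplus_{\lambda}V_{\lambda}\bigr)$ respecting the decomposition is the coordinate-wise sum of the types of the $\h_{\lambda}$, immediate from the eigenvalue list preceding Definition~\ref{nd}. For real $\lambda$ this contributes $(k_{e}(\lambda),k_{h}(\lambda),k_{f}(\lambda))$ directly, including the genuine focus summands of the real linearizations produced by diamond-algebra blocks ($\ELL$). For a non-real $\lambda$ the block $V_{\lambda}$ is the realification of the reduced symplectic space of $\mathsf{Sing}(\diff_{\lambda}\Pi(x))$, which is complex since $\g_{\lambda}$ is a complex Lie algebra; $\h_{\lambda}$ is then the realification of a non-degenerate complex Cartan subalgebra of $\sP(2m,\Complex)$, whose realified operators have eigenvalues in quadruples $\pm\alpha\pm\beta i$, so $\h_{\lambda}$ has type $(0,0,m)$ --- purely focus-focus, the convention behind writing its type as $(0,0,k_{f}(\lambda))$. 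Summing the real blocks together with the non-real blocks having $\Imm\lambda\geq 0$ (so each conjugate pair is counted once) yields exactly the asserted formulas: elliptic and hyperbolic components arise only from real $\lambda$, while $k_{f}$ collects the focus contributions of both the real and the complex linearizations. All the remaining ingredients --- $\omega$-orthogonality of eigenblocks, commutation with $\mathcal R$, additivity of the Williamson type, and the realification bookkeeping --- are routine.
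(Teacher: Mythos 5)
Your strategy is the same as the paper's: split the reduced symplectic space into eigenblocks of the recursion operator, identify the restriction of the commuting family to each block with the Cartan subalgebra of the corresponding $\lambda$-linearization, and add up Williamson types. (The paper works with the dual operators $\D_fP_\alpha$ on $L^\bot/L$ rather than with $A_f$ on $W^\bot/W$, but Proposition \ref{isom} makes the two sides interchangeable, so that difference is cosmetic.) The outline is sound, but the two places you wave at are exactly where the paper has to work, and neither follows from what you have written. First, invariance of each $V_\lambda$ under $A_\F$ gives only an embedding of $A_\F$ into $\bigoplus_\lambda\h_\lambda$, not the equality $A_\F=\bigoplus_\lambda\h_\lambda$; the paper obtains surjectivity (Proposition \ref{part0}) from the non-obvious fact that $\D_\F$ is closed under right multiplication by the recursion operator (Proposition \ref{dfpOnQuotient}, item 4), so that composing with the polynomial projector $p_\lambda(R_0^\infty)$ lands back inside $\D_\F$. (For the type count alone one could instead invoke non-degeneracy and a dimension argument, but you do neither.)

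Second, and more importantly, your step (ii) --- that $\h_\lambda$ coincides with the Cartan subalgebra attached to $\mathsf{Sing}(\diff_\lambda\Pi(x))$ --- is the actual content of the theorem, and you leave it as a statement of intent. The mechanism that proves it is Lemma \ref{biham} together with Proposition \ref{dfpIsAd}: for $f=\sum f_{\alpha_i}$ a sum of Casimirs of regular brackets with $\diff f\in\Ker P_\alpha$, one has $\D_fP_\alpha\mid_{\g_\lambda}=\ad\,\xi$ for an explicit $\xi=\sum\frac{\alpha-\alpha_i}{\lambda-\alpha_i}\diff f_{\alpha_i}$, and conversely every $\ad\,\xi$ with $\xi\in\g_\lambda\cap L$ arises this way; diagonalizability then gives $\g_\lambda\cap L=\Ker A$ and $K_\lambda=\g_\lambda/\Ker A$, which is precisely Proposition \ref{part2} and Lemma \ref{linPrecriterion}. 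Your complex-$\lambda$ bookkeeping is correct in outline, but note that the $J$-invariance of $\D_{\lambda,\bar\lambda}$, which your realification argument silently uses, again comes from the same multiplicativity property. One small slip in the final paragraph: the focus summands of \emph{real} linearizations are produced by $\so(3,\Complex)$ and $\foc$ blocks, not by $\ELL$, which is elliptic.
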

\begin{remark}
 The sums for $k_e$ and $k_h$ are taken over the real part of the spectrum, because the singularity $\mathsf{Sing}(\diff_{\lambda}\Pi(x))$ is always focus-focus if $\lambda$ is not real.

\end{remark}
The proof of Theorem \ref{type1} is given in Section \ref{mainProof}.
\par
\smallskip
 Taking into account the Eliasson theorem (Theorem \ref{EliassonThm}),  we can reformulate Theorem \ref{type1} as follows.

\begin{theorem}[Bi-Hamiltonian linearization theorem]\label{blt}
	Assume that conditions 1-2 of Theorem \ref{nd1} hold. Then the singular Lagrangian fibration of the system $\F\mid_{O(\alpha, x)}$ is locally symplectomorphic to
	$$
		\left(\prod_{\substack{\lambda \in \Lambda(x),\\ \Imm \lambda \geq 0}} \mathsf{Sing}(\diff_{\lambda}\Pi(x))\right) \times (\R^{k} \times \R^{k}),
	$$
	where $\R^{k} \times \R^{k}$ is a trivial non-singular Lagrangian fibration, and $k$
is the rank of $x$.\par

	 In other words, the Lagrangian fibration of a bi-Hamiltonian system is locally symplectomorphic to the direct product of the Lagrangian fibrations of its $\lambda$-linearizations and a trivial fibration.
\end{theorem}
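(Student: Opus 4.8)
The plan is to derive Theorem \ref{blt} from the Type theorem (Theorem \ref{type1}) together with two applications of the Eliasson normal form (Theorem \ref{EliassonThm}); after that the proof is just a matter of bookkeeping standard singular blocks. Under conditions 1--2, Theorem \ref{nd1} guarantees that $x$ is a non-degenerate singular point of the integrable system $\F\mid_{O(\alpha,x)}$; let $k$ be its rank and $(k_e,k_h,k_f)$ its type. Applying Theorem \ref{EliassonThm} to $\F\mid_{O(\alpha,x)}$ at $x$, the germ of its Lagrangian fibration is fiberwise symplectomorphic to
$$
\mathcal E^{\,k_e}\times\mathcal H^{\,k_h}\times\mathcal{FF}^{\,k_f}\times(\R^{k}\times\R^{k}),
$$
where $\mathcal E$, $\mathcal H$, $\mathcal{FF}$ denote the standard elliptic, hyperbolic and focus-focus singularities (cf.\ Theorem \ref{EliassonThm}).

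Next I would apply Eliasson a second time, now to each $\lambda$-linearization. For $\lambda\in\Lambda(x)$, condition 2 says that $\diff_{\lambda}\Pi(x)$ is a non-degenerate integrable linear pencil, so $\mathsf{Sing}(\diff_{\lambda}\Pi(x))$ is a non-degenerate singular point of rank $0$; write $(k_e(\lambda),k_h(\lambda),k_f(\lambda))$ for its type. Theorem \ref{EliassonThm} (with trivial factor $\R^0\times\R^0$) gives
$$
\mathsf{Sing}(\diff_{\lambda}\Pi(x))\ \cong\ \mathcal E^{\,k_e(\lambda)}\times\mathcal H^{\,k_h(\lambda)}\times\mathcal{FF}^{\,k_f(\lambda)},
$$
and for non-real $\lambda$ the first two exponents vanish, so the right-hand side is then a pure product of focus-focus blocks. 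Taking the fiberwise product over all $\lambda\in\Lambda(x)$ with $\Imm \lambda\geq0$ and regrouping equal blocks, the total number of elliptic (resp.\ hyperbolic, resp.\ focus-focus) factors equals $\sum_{\lambda}k_e(\lambda)$ (resp.\ $\sum_{\lambda}k_h(\lambda)$, resp.\ $\sum_{\lambda}k_f(\lambda)$) over this range, which by Theorem \ref{type1} is exactly $k_e$ (resp.\ $k_h$, resp.\ $k_f$); thus
$$
\prod_{\substack{\lambda\in\Lambda(x),\\ \Imm \lambda\geq0}}\mathsf{Sing}(\diff_{\lambda}\Pi(x))\ \cong\ \mathcal E^{\,k_e}\times\mathcal H^{\,k_h}\times\mathcal{FF}^{\,k_f}.
$$

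Finally, multiplying the last display by the trivial Lagrangian fibration $\R^k\times\R^k$, the fibration appearing on the right-hand side of the statement becomes fiberwise symplectomorphic to the same standard model as the germ of the fibration of $\F\mid_{O(\alpha,x)}$ at $x$ obtained in the first step. Since fiberwise symplectomorphism of germs of Lagrangian fibrations is an equivalence relation, the two are fiberwise symplectomorphic to each other, which is the assertion. The argument is formal once Theorems \ref{nd1} and \ref{type1} are in hand, so there is no serious obstacle; the only point that needs care is the bookkeeping for complex spectral values --- the product in the statement retains exactly one representative $\lambda$ from each conjugate pair $\{\lambda,\bar\lambda\}$, and this is precisely what makes the focus-focus blocks coming from the linearizations add up to $k_f$ without double-counting, a compatibility already encoded in the summation ranges of Theorem \ref{type1}.
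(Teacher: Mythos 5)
Your proposal is correct and follows essentially the same route as the paper: the paper introduces Theorem \ref{blt} explicitly as a reformulation of Theorem \ref{type1} obtained by invoking the Eliasson theorem, which is exactly the two-sided application of Theorem \ref{EliassonThm} plus the type bookkeeping that you carry out. Your write-up just makes explicit the matching of elliptic, hyperbolic and focus-focus block counts (including the single-representative convention for conjugate pairs) that the paper leaves implicit.
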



\subsection{Description of non-degenerate linear pencils}\label{cndlp}
Suppose that $\Pi^{\g,  A}$ is a linear pencil. Then $\Ker A$ is necessarily a subalgebra (by the cocycle identity). Moreover, if $A$ is regular in the sense of Definition \ref{regcocycle}, then $\Ker A$ is an Abelian subalgebra (see Section \ref{race}).

Suppose that all elements of $\Ker A$ are $\ad$-semisimple, i.e. $\Ker A$ is an $\ad$-diagonalizable subalgebra. Then $\g$ admits a  ``root'' decomposition
$$
  	\g \otimes \Complex =  \Ker A \otimes \Complex + \sum(V_{\lambda_{i}} + V_{-\lambda_{i}}),
$$
where $\pm \lambda_i \in (\Ker A)^* \otimes \Complex$ are roots, and $V_{\pm \lambda_{i}}$ are root spaces, which means that for any $\xi \in \Ker A \otimes \Complex$, $e_{\lambda_i} \in V_{\lambda_i}$ we have $[\xi, e_{\lambda_i}] = \lambda_i(\xi)e_{\lambda_i}$. Notice that in our situation, the roots come in pairs $\pm \lambda_i$, since the operator $\ad_\xi$ belongs to $\sP(\g / \Ker A, A)$  for any $\xi \in \Ker A$ by the cocycle identity.\par
Obviously, the maximal possible number of linearly independent roots is
$$
n = \frac{1}{2}(\dim \g - \dim \Ker A) = \frac{1}{2}\rank A.
$$

The following theorem gives algebraic conditions for a linear pencil to be non-degenerate in the sense of Definition \ref{nondegpencil}.

\begin{theorem}\label{ndlp1}
	A linear pencil $\Pi^{\g,  A}$ is non-degenerate if and only if $\Ker A \subset \g$ is an Abelian subalgebra, and there exists a root decomposition
\begin{align}\label{ndlp1decomp}
  	\g \otimes \Complex = \Ker A \otimes \Complex  + \sum\limits_{i=1}^{n}(V_{\lambda_{i}} + V_{-\lambda_{i}}),
  \end{align}
where \begin{enumerate}\item $\lambda_1, \dots, \lambda_n \in (\Ker A)^* \otimes \Complex$ are linearly independent. \item Each space $V_{\pm \lambda_i}$ is one-dimensional. \item For all $\xi \in \Ker A \otimes \Complex$, $e_{\lambda_i} \in V_{\lambda_i}$ we have $[\xi, e_{\lambda_i}] = \lambda_i(\xi)e_{\lambda_i}$. \end{enumerate}
\smallskip
	The type of $\mathsf{Sing}(\Pi^{\g, A})$ in this case is $(k_{e}, k_{h}, k_{f})$ where $k_{e}$ is the number of   purely imaginary roots, $k_{h}$ is the number of real roots and $k_{f}$ is the number of pairs of complex conjugate roots.
\end{theorem}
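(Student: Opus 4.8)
The plan is to work directly from Definition \ref{nd} of non-degeneracy applied to the integrable system $\mathsf{Sing}(\Pi^{\g,A})$ at the origin $0 \in \g^*$. At the origin all Hamiltonian vector fields of $\F$ vanish, so $W = \{0\}$, and the relevant symplectic space is $W^\perp/W = (\Ker A)^\perp / \Ker A$ inside $\g \simeq \T^*_0\g^*$, equipped with the form induced by $A$; equivalently this is $\g/\Ker A$ with the (non-degenerate) form $A$. The commutative algebra $A_\F$ is spanned by the linearizations at $0$ of the Hamiltonian vector fields $\sgrad f$ for Casimir functions $f$ of regular brackets $P^{\g,A}_\nu$. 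First I would compute these linearizations explicitly: if $f$ is a Casimir of $P^{\g,A}_\nu$ with $\diff f(0) \in \Ker A$ (it automatically lies there since $P^{\g,A}_\nu(0) = \nu A$ and $f$ is a Casimir), then the linearization of $\sgrad f = (\text{Lie part} + \nu A)\,\diff f$ at the origin is, up to the $\nu A$ contribution which is constant and hence contributes nothing to the linear part evaluated relative to the pencil's symplectic form, essentially the operator $\ad^*_{\xi}$ for $\xi = \diff f(0) \in \Ker A$, acting on $\g/\Ker A$. So $A_\F \subseteq \{\ad_\xi|_{\g/\Ker A} : \xi \in \Ker A\}$, and one must check this is an equality — this uses regularity of $A$, which guarantees enough independent Casimirs so that $\diff f(0)$ ranges over all of $\Ker A$.

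Granting $A_\F = \ad(\Ker A)|_{\g/\Ker A}$, non-degeneracy becomes the condition that this abelian subalgebra of $\sP(\g/\Ker A, A)$ is a \emph{Cartan} subalgebra, i.e. it is self-normalizing and consists of semisimple-plus-the-right-structure elements, and in particular has the maximal possible dimension $n = \tfrac12\rank A$. Here I would invoke the standard description of Cartan subalgebras of $\sP(2n,\R)$ (Williamson): a commutative subalgebra $\h \subset \sP(2n,\R)$ is Cartan iff it is $n$-dimensional and a generic element has $2n$ distinct eigenvalues (equivalently, $\h$ is a maximal torus-type subalgebra). Translating: the map $\Ker A \to \mathfrak{gl}(\g/\Ker A)$, $\xi \mapsto \ad_\xi$, must be injective with $n$-dimensional image, and a generic $\ad_\xi$ must be diagonalizable over $\Complex$ with distinct eigenvalues $\pm\lambda_1(\xi),\dots,\pm\lambda_n(\xi)$ (the $\pm$-pairing is forced by the cocycle identity, as noted in the excerpt, since $\ad_\xi \in \sP(\g/\Ker A, A)$). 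Simultaneous diagonalizability of the commuting family $\ad(\Ker A)$ then yields exactly the root decomposition \eqref{ndlp1decomp} with one-dimensional root spaces $V_{\pm\lambda_i}$ and linearly independent roots $\lambda_1,\dots,\lambda_n$; conversely, given such a decomposition, $\ad(\Ker A)|_{\g/\Ker A}$ is visibly a simultaneously diagonalizable $n$-dimensional commutative subalgebra with generic element having distinct eigenvalues, hence Cartan. This establishes the equivalence.

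For the type statement, once we are in the situation of \eqref{ndlp1decomp}, a generic $\xi \in \Ker A$ acts on $V_{\lambda_i} \oplus V_{-\lambda_i}$ (a real $2$-dimensional block when $\lambda_i$ is real or purely imaginary, or combining into a $4$-dimensional block for a complex-conjugate pair $\{\lambda_i, \bar\lambda_i\}$) with eigenvalues $\pm\lambda_i(\xi)$. Comparing with the Williamson normal form list in Section \ref{ndsp}: $\lambda_i(\xi) \in i\R$ gives an elliptic component, $\lambda_i(\xi) \in \R$ a hyperbolic one, and a genuinely complex $\lambda_i(\xi)$ paired with $\overline{\lambda_i(\xi)}$ gives a focus-focus component. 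Since a real/purely-imaginary/complex root stays real/purely-imaginary/complex for generic $\xi$, this reads off $(k_e, k_h, k_f)$ as the counts of purely imaginary roots, real roots, and complex-conjugate pairs of roots respectively.

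I expect the main obstacle to be the identification $A_\F = \ad(\Ker A)|_{\g/\Ker A}$ — specifically, showing that the linearizations of the \emph{regular} Casimirs (not arbitrary Casimirs, cf. Example \ref{badExample}) genuinely fill out the whole space $\ad(\Ker A)$, rather than a proper subspace, and that the $\nu A$-term in $\sgrad f$ contributes nothing to the linear operator on the symplectic quotient. This is where the regularity hypothesis on $A$ and a careful bookkeeping of which Casimirs have which differentials at the origin must be used; everything downstream is then linear algebra of commuting skew-Hamiltonian operators plus the Williamson classification.
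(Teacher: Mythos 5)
Your reduction is essentially the paper's own: at the origin $W=0$ and $\diff \F(0)=\Ker A$, the family of linearized operators is exactly $\{\ad\,\xi\mid_{\g/\Ker A}\}_{\xi\in\Ker A}$ acting on $(\g/\Ker A, A)$ (this is Lemma \ref{linPrecriterion}, obtained from Corollary \ref{precriterion} and Proposition \ref{dfpIsAd}), and your characterization of Cartan subalgebras of $\sP(\g/\Ker A,A)$ --- $n$-dimensional, simultaneously diagonalizable, generic element with distinct eigenvalues, hence independent roots and one-dimensional root spaces --- together with the count of purely imaginary, real and complex-conjugate roots, is precisely Corollary \ref{adss} and Proposition \ref{indRoots}. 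So the ``only if'' direction and the type computation are correct and follow the same route as the paper.

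There is, however, a genuine gap in the ``if'' direction. ``Non-degenerate'' is only defined for \emph{integrable} pencils (Definition \ref{nondegpencil}), ``integrable'' is only defined for \emph{regular} cocycles (Definition \ref{regcocycle}), and your key identification $A_\F=\ad(\Ker A)\mid_{\g/\Ker A}$ needs $\rank \Pi^{\g,A}=\rank A$ so that the origin is a point of maximal rank of the pencil and the local Casimir family $\F$ of Section \ref{ihs} exists with $\diff\F(0)=\Ker A$. You appeal to ``the regularity hypothesis on $A$'', but in the converse direction regularity is not a hypothesis: it must be \emph{deduced} from the root decomposition \eqref{ndlp1decomp}, i.e.\ one has to show $\rank\left(P_0(x)+\lambda A\right)\le\rank A$ for \emph{all} $x\in\g^*$ and all $\lambda$, not just at $x=0$. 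This is not a formality, and it is where the paper spends the substantive part of its proof: it lifts $A$ to the central extension $\g_A$ (Proposition \ref{regularitycriterion}) to reduce to the coboundary case $A=A_a$, then invokes the structure Lemmas \ref{ndlpLemma} and \ref{ndlpLemmaReal} to place $\g$ in the lists \eqref{ndadecomp_c}, \eqref{ndadecomp}, for which the index equals the dimension of a maximal Abelian $\ad$-diagonalizable subalgebra, whence $a\in\g_A^*$ is regular; integrability is then deduced from non-degeneracy at the origin. Without this step your converse only shows that a certain abstract Abelian subalgebra \emph{would} be a Cartan subalgebra, not that $\mathsf{Sing}(\Pi^{\g,A})$ is actually a well-defined non-degenerate singularity of the system $\F$.
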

\begin{remark}
If $\g$ is a complex Lie algebra, then we do not need to take the tensor product with $\Complex$.
\end{remark}

This algebraic condition makes it possible to classify all non-degenerate linear pencils completely. To state this result, first we need to define three special Lie algebras. Denote by $\ELL$ the real Lie algebra with basis elements $e,f,h,t$ satisfying the following relations:
	\begin{align*}
		[e,f] = h,\quad [t,e] = f,\quad [t,f] = -e.
	\end{align*}
	This algebra is known as the Diamond Lie algebra \cite{Kirillov}, or the Nappi-Witten Lie algebra \cite{NW}. Denote the complexification of the Diamond Lie algebra by $\foc$. The algebra $\foc$ has another real form, which is generated by $e,f,h,t$ with the following relations:
	\begin{align*}
		[e,f] = h,\quad [t,e] = e,\quad [t,f] = -f.
	\end{align*}
	Denote it by $\hyp$.

\begin{remark}
	The algebras $\ELL$ and $\hyp$ are  the only non-trivial one-dimensional central extensions of $\e(2)$ and $\e(1,1)$ respectively.

\end{remark}

	Let $\g \simeq \ELL$ be the Diamond Lie algebra, and let $a \in \g^*$ be such that $$a(h) = 1,\quad a(e) = a(f) = a(t) = 0.$$ Consider the argument shift cocycle $A_{a}(\xi, \eta) = \langle a, [\xi, \eta] \rangle$. Let us show that the singularity corresponding to the pencil $\Pi^{\g,  A_a}$ is non-degenerate elliptic. The Casimir functions of the Lie-Poisson bracket are given by
	\begin{align*}
	f_1 = h, \quad f_2 = e^2 + f^2 + 2th.	
	\end{align*}
	The family $\F$ is generated by $f_1,f_2$ and $t$, which is a Casimir of $A_a$. The symplectic leaf of $A_a$ passing through the origin is given by $\{t = 0, h = 0\}$. The restriction of $\F$ to this leaf is generated by one single function, namely $e^2 + f^2$. Consequently, the corresponding singularity is indeed non-degenerate elliptic.\par
Analogously, $\hyp$ and $\foc$ correspond to a hyperbolic and a focus-focus singularities. Also, we have already seen that the algebras $\so(3), \sL(2)$ and $\so(3, 1)$ define non-degenerate singularities. It turns out that all Lie algebras admitting non-degenerate linear pencils can be obtained from these six algebras by means of three elementary operations:  direct sum,  quotient by a central ideal, and adding an Abelian Lie algebra.

\begin{theorem}[Classification of complex non-degenerate linear pencils]\label{nlpc}
	A complex linear pencil $\Pi^{\g,  A}$ is non-degenerate if and only if the following two conditions hold.
	\begin{enumerate}
	\item The algebra $\g$ can be represented as
		\begin{align}\label{ndadecomp_c}
		\g \simeq \bigoplus \so(3, \Complex) \oplus  \left(\bigoplus \foc \right) / \lCal_{0}  \oplus V,
	\end{align}	
	where $V$ is Abelian, and $\lCal_{0}\subset \bigoplus \foc$ is a central ideal.
	 \item $\Ker A$ is a Cartan subalgebra of $\g$.
	 \end{enumerate}	
\end{theorem}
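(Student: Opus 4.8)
The plan is to reduce the classification to the local algebraic criterion of Theorem \ref{ndlp1}, which characterizes non-degeneracy of $\Pi^{\g,A}$ by the existence of a root decomposition with respect to the Abelian subalgebra $\h \eqdef \Ker A$, all root spaces one-dimensional and the roots $\lambda_1,\dots,\lambda_n$ linearly independent in $\h^*$. So the ``if'' direction is essentially immediate: given a $\g$ of the form \eqref{ndadecomp_c} together with the hypothesis that $\Ker A$ is a Cartan subalgebra, one checks directly that $\so(3,\Complex)$ with the regular semisimple cocycle, each $\foc$ summand, the quotient by a central ideal $\lCal_0$, and the Abelian factor $V$ each satisfy the conditions of Theorem \ref{ndlp1}; since direct sums, central quotients and Abelian factors are compatible with these conditions (root spaces stay one-dimensional, roots stay independent), the assembled pencil is non-degenerate. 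The bulk of the work is the ``only if'' direction.

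For the ``only if'' direction, I would start from a non-degenerate $\Pi^{\g,A}$, so that by Theorem \ref{ndlp1} we have the decomposition \eqref{ndlp1decomp} with $\h = \Ker A$ Abelian, one-dimensional root spaces $V_{\pm\lambda_i}$, and independent roots. First I would split off the obvious Abelian part: let $V$ be the sum of $\h$-components on which every root vanishes — more precisely, decompose $\h = \h' \oplus \mathfrak{c}$ where $\mathfrak{c} = \bigcap_i \Ker \lambda_i$ is the joint kernel of all roots; since the $\lambda_i$ are independent, $\dim \h' = n$. One checks $\mathfrak{c}$ is central in $\g$ and splits off as a direct summand contributing the Abelian factor $V$ (this is where the ``adding an Abelian Lie algebra'' operation comes from). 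It remains to analyze $\g_0 \eqdef \h' \oplus \sum_i (V_{\lambda_i} + V_{-\lambda_i})$, where now the roots are independent and have no common kernel, so $\dim \h' = n$ equals the number of roots. The key structural fact to extract is the bracket relations among the $V_{\pm\lambda_i}$: for $e_{\lambda_i} \in V_{\lambda_i}$, $e_{-\lambda_j} \in V_{-\lambda_j}$, the bracket $[e_{\lambda_i}, e_{-\lambda_j}]$ lies in the $(\lambda_i - \lambda_j)$-eigenspace of $\ad_\h$; by independence of the roots and one-dimensionality, this forces $[e_{\lambda_i}, e_{-\lambda_j}] = 0$ for $i \neq j$ and $[e_{\lambda_i}, e_{-\lambda_i}] \in \h$, while $[e_{\lambda_i}, e_{\lambda_j}]$ can only be nonzero if $\lambda_i + \lambda_j$ is again a root — and since root spaces are one-dimensional and $\sum \lambda_i$ independent, a short combinatorial argument rules this out unless one is working inside a single $\so(3,\Complex)$-type block where $\lambda_i + \lambda_j$ is forbidden as well. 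Hence $\g_0$ decomposes as a sum over $i$ of the ``rank-one'' subalgebras $\mathfrak{s}_i = \langle e_{\lambda_i}, e_{-\lambda_i}, h_i \rangle$ (where $h_i = [e_{\lambda_i}, e_{-\lambda_i}]$) glued along the common Cartan directions $\h'$.

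The final classification step is local rank-one analysis: for each index $i$ one must identify the three-dimensional (or after accounting for the shared Cartan, effectively the) piece spanned by $e_{\lambda_i}, e_{-\lambda_i}, h_i$ together with how $h_i$ and the dual basis vector $\xi_i \in \h'$ with $\lambda_j(\xi_i) = \delta_{ij}$ interact. There are two cases according to whether $h_i \in \h'$ is nonzero modulo the span of the other $h_j$ (giving a genuine $\so(3,\Complex) \cong \sL(2,\Complex)$ block, by the standard classification of three-dimensional Lie algebras with this bracket shape and nondegenerate Killing behavior) or $h_i$ is ``small'' — landing in the center after passing to the quotient — which produces a $\foc$ block (recall $\foc$ is the complexified diamond algebra, the central extension of $\e(2)\otimes\Complex \cong \e(1,1)\otimes\Complex$). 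Assembling: $\g_0 \simeq \bigoplus \so(3,\Complex) \oplus (\bigoplus \foc)/\lCal_0$ where $\lCal_0$ is the central ideal identifying the various $h_i$ that collapsed, and together with $V$ this is exactly \eqref{ndadecomp_c}; the condition that $\Ker A$ be Cartan is forced because $\h$ is a maximal $\ad$-diagonalizable Abelian subalgebra by construction. I expect the main obstacle to be the combinatorial bookkeeping in the gluing step — controlling exactly which $h_i$'s coincide (equivalently, pinning down the central ideal $\lCal_0$) and verifying that no ``exotic'' brackets $[e_{\lambda_i}, e_{\lambda_j}]$ survive — since this is where one must genuinely use both linear independence of the roots and one-dimensionality of every root space simultaneously, rather than treating the blocks in isolation.
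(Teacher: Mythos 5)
Your overall architecture matches the paper's: both reduce to the root-decomposition criterion of Theorem \ref{ndlp1}, derive the bracket relations $[V_{\lambda_i},V_{\pm\lambda_j}]=0$ for $i\neq j$ and $h_i:=[e_{\lambda_i},e_{-\lambda_i}]\in\h$ from linear independence of the roots, and then classify rank-one blocks as $\so(3,\Complex)$ or $\foc$. However, there is a genuine gap in your first step. You propose to split off $\mathfrak{c}=\bigcap_i\Ker\lambda_i$ as a direct Lie-algebra summand at the outset, "contributing the Abelian factor $V$". This fails whenever a diamond block is present: for such a block $\lambda_j(h_i)=0$ for \emph{all} $j$, so $h_i\in\mathfrak{c}\cap[\g,\g]$, and then $\h'\oplus\sum_i(V_{\lambda_i}+V_{-\lambda_i})$ is not a subalgebra (it fails to contain $[e_{\lambda_i},e_{-\lambda_i}]$). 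The simplest counterexample is $\g=\foc$ itself: there $\mathfrak{c}=\langle h\rangle$ with $h=[e,f]$, and $\foc$ is a nontrivial central extension, not a direct sum $\langle h\rangle\oplus\g_0$. Only a complement to $\zenter(\g)\cap[\g,\g]$ inside the center can be split off, and this is exactly why the paper orders the steps differently: first extract the $\so(3,\Complex)$ summands (those $i$ with $\lambda_i(h_i)\neq 0$), then decompose $\zenter(\g)=(\zenter(\g)\cap[\g,\g])\oplus V$ and split off only $V$, and finally exhibit what remains as $(\bigoplus\foc)/\lCal_0$.

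A second, smaller, inaccuracy is your dichotomy for the rank-one blocks. You distinguish the $\so(3,\Complex)$ case by "$h_i$ nonzero modulo the span of the other $h_j$", but this is not the correct criterion: in $\foc\oplus\foc$ the elements $h_1,h_2$ are linearly independent, yet both blocks are diamond blocks. The correct dichotomy (and the one the paper uses) is $\lambda_i(h_i)\neq 0$ versus $\lambda_i(h_i)=0$; since $\lambda_j(h_i)=0$ for $j\neq i$ automatically (by Jacobi), the second alternative is equivalent to $h_i$ being central, which is the condition producing a $\foc$ block. With the splitting steps reordered and this dichotomy corrected, your argument becomes essentially the paper's proof of Lemma \ref{ndlpLemma}.
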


The proof is given in Section \ref{nlpProofComplex}.
\newpage
  \begin{remark}
  	Recall that a Cartan subalgebra $\h$ of a Lie algebra $\g$ is a self-normalizing nilpotent subalgebra. For $\g$ belonging to the list (\ref{ndadecomp_c}), a Cartan subalgebra is the same as a maximal $\ad$-semisimple Abelian subalgebra.

	  \end{remark}

\begin{theorem}[Classification of real non-degenerate linear pencils]\label{nlpr}
	A real linear pencil $\Pi^{\g,  A}$ is non-degenerate if and only if the following two conditions hold.
	\begin{enumerate}
	\item The algebra $\g$ can be represented as
	\begin{align}\label{ndadecomp}
		\g \simeq \bigoplus \so(3) \oplus \bigoplus \sL(2) \oplus \bigoplus \so(3, \Complex) \oplus \left( \bigoplus \ELL  \oplus \bigoplus \hyp    \oplus \bigoplus \foc \right) / \lCal_{0}   \oplus V,
	\end{align}
	where $V$ is Abelian, and $\lCal_{0} \subset \bigoplus \ELL  \oplus \bigoplus \hyp    \oplus \bigoplus \foc $ is a central ideal.
	 \item $\Ker A$ is a Cartan subalgebra of $\g$.
	 \end{enumerate}	
	 If a real linear pencil $\Pi^{\g,  A}$ is non-degenerate, then the type of $\mathsf{Sing}(\Pi^{\g,  A})$ is $(k_{e}, k_{h}, k_{f})$, where
	\begin{itemize}
		\item $k_{e}=$ the number of $\so(3)$ terms in \eqref{ndadecomp} $+$ the number of $\ELL$ terms in \eqref{ndadecomp} $+$ the number of $\sL(2)$ terms in \eqref{ndadecomp} such that the Killing form on $\sL(2) \cap \Ker A$ is negative;
		\item $k_{h}=$ the number of $\hyp$ terms in \eqref{ndadecomp}  $+$  the number of $\sL(2)$ terms in \eqref{ndadecomp} such that the Killing form on $\sL(2) \cap \Ker A$ is positive;
		\item $k_{f}=$ the number of  $\so(3, \Complex)$ terms in \eqref{ndadecomp} $+$ the number of $\foc$ terms in \eqref{ndadecomp}.
	\end{itemize}	

\end{theorem}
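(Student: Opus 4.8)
The plan is to derive the real classification from its complex counterpart (Theorem \ref{nlpc}) together with the algebraic non-degeneracy criterion (Theorem \ref{ndlp1}), by going through real forms. Write $\h \eqdef \Ker A$. Two preliminary observations make part of the statement automatic. First, whenever the conditions of Theorem \ref{ndlp1} hold, $\h$ is necessarily a Cartan subalgebra: it is nilpotent because abelian, and it is self-normalizing because if $[x,\h]\subset\h$ then, expanding $x$ along the root decomposition \eqref{ndlp1decomp} and using $[\xi,e_{\lambda_i}] = \lambda_i(\xi)e_{\lambda_i}$ with $\lambda_i\neq 0$, every root component of $x$ must vanish, so $x\in\h$. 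Hence on the ``only if'' side condition (2) is not an additional hypothesis, and on the ``if'' side it is exactly what lets us recognize $\h$ as a standard Cartan subalgebra of each summand below. Second, since \eqref{ndlp1decomp} is already a decomposition of $\g\otimes\Complex$ and $\Ker(A\otimes\Complex)=\h\otimes\Complex$, non-degeneracy of $\Pi^{\g,A}$ immediately implies non-degeneracy of the complex linear pencil $\Pi^{\g\otimes\Complex,\,A\otimes\Complex}$.

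For the ``if'' direction, assume $\g$ is as in \eqref{ndadecomp} and $\h$ is a Cartan subalgebra. Writing $\g=\bigoplus_j\g_j\oplus V$ with each $\g_j$ one of $\so(3)$, $\sL(2)$, the realification of $\so(3,\Complex)$, $\ELL$, $\hyp$, the realification of $\foc$, or a central quotient of a sum of the last three, we have $\h=\bigoplus_j\h_j\oplus V$ with $\h_j$ Cartan in $\g_j$, and, since $[\h_j,\g_k]=0$ for $j\ne k$, the root decomposition of $(\g,\h)$ is the disjoint union of those of the $(\g_j,\h_j)$, with $V$ contributing no roots. As the criterion of Theorem \ref{ndlp1} depends only on this root data, it remains to verify it on each elementary block and to check that it is preserved by ``quotient by a central ideal'' and ``add an abelian summand''. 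The block computations are direct: $\so(3)$, the compact Cartan of $\sL(2)$, and $\ELL$ with $\h_j=\langle h,t\rangle$ each produce one pair $\pm\lambda$ of purely imaginary roots with one-dimensional root spaces; the split Cartan of $\sL(2)$ and $\hyp$ produce one pair of real roots; the realifications of $\so(3,\Complex)$ and $\foc$ have a two-dimensional $\h_j$ and, after complexifying (which splits them as $\sL(2,\Complex)\oplus\sL(2,\Complex)$ and $\foc\oplus\foc$ respectively), exactly one conjugate pair of non-real roots, again with one-dimensional root spaces; in every case the at most two roots are linearly independent. A central ideal $\lCal_0$ sits inside $\h$ and is annihilated by all roots, so the quotient only shrinks $\h$ and leaves the root data unchanged, while adding $V$ only enlarges $\h$. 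By Theorem \ref{ndlp1}, $\Pi^{\g,A}$ is non-degenerate; summing the block contributions ($k_e$ from $\so(3)$, $\ELL$, and $\sL(2)$ with compact Cartan; $k_h$ from $\hyp$ and $\sL(2)$ with split Cartan; $k_f$ from the realifications of $\so(3,\Complex)$ and $\foc$) yields the stated type, the compact/split dichotomy for $\sL(2)$ being detected by the sign of the Killing form on $\sL(2)\cap\h$.

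For the ``only if'' direction, suppose $\Pi^{\g,A}$ is non-degenerate. Then $\h$ is a Cartan subalgebra by the first paragraph, so condition (2) holds; and by the second observation combined with Theorem \ref{nlpc}, $\g\otimes\Complex\simeq\bigoplus\so(3,\Complex)\oplus(\bigoplus\foc)/\lCal_0\oplus V^{\Complex}$, with $\foc$ here the complex diamond algebra. The real form $\g$ is the fixed-point set of the conjugation $\sigma$ on $\g\otimes\Complex$, an involutive anti-linear automorphism; I would run the standard real-forms bookkeeping. As $\sigma$ preserves the radical and the center, it induces an involution on the Levi quotient $\bigoplus\so(3,\Complex)$ permuting its simple ideals: an orbit of size one gives a real form of $\sL(2,\Complex)$, namely $\so(3)$ or $\sL(2)$, and an orbit of size two gives one realification of $\so(3,\Complex)$. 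On the solvable radical $\sigma$ preserves $V^{\Complex}$ and permutes the indecomposable diamond ideals of $\bigoplus\foc$: fixed ones give $\ELL$ or $\hyp$, swapped pairs give a realification of $\foc$; since $\lCal_0$ is a $\sigma$-invariant central ideal it is the complexification of a real central ideal and the quotient commutes with the passage to real forms. Collecting the summands produces exactly \eqref{ndadecomp}, and the type formula follows from combining this correspondence with the per-block count of the ``if'' direction.

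The step I expect to be the main obstacle is the real-forms bookkeeping for the solvable part: unlike for the semisimple Levi quotient, the decomposition of the radical as $(\bigoplus\foc)/\lCal_0\oplus V^{\Complex}$ is not canonical, so one must first isolate the maximal ``genuinely abelian'' direct summand intrinsically in order to know that $\sigma$ preserves $V^{\Complex}$, and only then argue that $\sigma$ permutes the indecomposable diamond blocks and lifts consistently through the central quotient. The semisimple part and the six elementary block computations are essentially mechanical once Theorems \ref{ndlp1} and \ref{nlpc} are in hand.
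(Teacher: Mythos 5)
Your ``if'' direction and your type computation essentially coincide with what the paper does: the verification of the root-decomposition criterion of Theorem \ref{ndlp1} block by block is the implication $2\Rightarrow 1$ of the paper's Lemma \ref{ndlpLemmaReal} (which the authors dismiss as straightforward), and your detection of the compact/split Cartan of $\sL(2)$ by the sign of the Killing form on $\sL(2)\cap\Ker{A}$ is exactly the computation $\tr{(\ad z)^2}=\mp 2\mu_j(z)^2$ carried out in Section \ref{typeProof}. Your observation that condition (2) is automatic given the root decomposition (self-normalization forced by the nonvanishing of the independent roots) also matches the first line of the paper's argument. Where you genuinely diverge is the ``only if'' direction: you propose to complexify, invoke Theorem \ref{nlpc}, and descend by real-forms bookkeeping with the conjugation $\sigma$. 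The paper explicitly mentions this route at the start of Section \ref{nlpProofReal} and deliberately avoids it, instead re-running the complex argument directly over $\R$: from the complexified root decomposition it extracts a real basis $e_{\pm i}, f_{\pm i}, \hat e_{\pm i}, \hat f_{\pm i}$ organized by real, purely imaginary and genuinely complex roots, computes the surviving brackets, and splits off $\sL(2)$, $\so(3)$, $\so(3,\Complex)$ and then the diamond-type blocks by hand. The direct route has the advantage that the real/imaginary/complex count of roots, hence the type, falls out of the same computation; your route gets the classification and the type from two separate arguments.

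The one substantive gap is the step you yourself flag: the $\sigma$-equivariance of the solvable part of the complex decomposition. In Theorem \ref{nlpc} the presentation $\left(\bigoplus\foc\right)/\lCal_0\oplus V$ is not canonical --- neither the abelian summand $V$, nor the individual diamond ideals, nor the central ideal $\lCal_0$ are intrinsically determined (e.g.\ two copies of $\foc$ with identified centers admit many decompositions into ``blocks''), so there is no a priori reason that the particular $V^{\Complex}$ and $\lCal_0$ produced by Theorem \ref{nlpc} are preserved by $\sigma$, and the assertions ``$\sigma$ permutes the indecomposable diamond ideals'' and ``$\lCal_0$ is the complexification of a real central ideal'' need a $\sigma$-compatible choice of presentation that your proposal does not construct. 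This is recoverable (one can redo the normalization of the radical equivariantly, essentially reproducing the paper's real computation inside the radical), but as written the ``only if'' direction is incomplete precisely at the point where the real-forms approach is harder than the paper's direct one. Everything else in the proposal is sound.
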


The proof of the first part of the theorem is given in Section \ref{nlpProofReal}. The proof of the second part is given in Section \ref{typeProof}.


\section{Periodic Toda lattice}\label{TodaSection}
In this section we consider the classical periodic Toda lattice.
As it was shown by Foxman and Robbins \cite{Robbins}, singularities of this system coincide with those points where the corresponding Lax operator has double periodic or anti-periodic eigenvalues, and all singularities are non-degenerate elliptic. Their approach is based on the study of the higher Lax flows, i.e. the Lax flows corresponding to the integrals of the system. Our aim is to show that these results are easily obtainable by the bi-Hamiltonian approach, without considering the integrals or the corresponding Lax flows. As it will be seen later, the singularities of the Toda lattice are, in essence, defined by the geometry of the corresponding quadratic Poisson bracket.
\subsection{Bi-Hamiltonian structure}
The phase space $M_{T}$ of the periodic Toda lattice with $n$ sites is $\R_{+}^n \times \R^n$ endowed with Flaschka variables $$a = (a_1, \dots, a_n) \in \R_{+}^n, \quad b = (b_1, \dots, b_n) \in \R^n. $$
It is convenient to treat $a$ and $b$ as infinite $n$-periodic sequences
$$a \in \R_{+}^\infty, a_{i+n} = a_i, \quad b \in \R^\infty, b_{i+n}=b_i.$$
The equations of motion are
\begin{align*}
\begin{cases}
\dot a_i = a_i(b_{i+1}-b_i),\\
\dot b_i = 2(a_i^2 - a_{i-1}^2).
\end{cases}
\end{align*}
It is well known that these equations are bi-Hamiltonian. The corresponding pencil $\Pi_T$ is given by
\begin{align*}
\begin{aligned}
\{a_i, b_i\}_0 = a_ib_i, \quad  \{a_{i}, b_{i+1}\}_0 = -a_{i}b_{i+1}, &\quad \{a_i, a_{i+1}\}_0 = -\frac{1}{2}a_ia_{i+1},  \quad \{b_i, b_{i+1}\}_0 =- 2a_i^2,\\
\{a_i, b_i\}_\infty = a_i,  &\quad \{a_{i}, b_{i+1}\}_\infty = -a_{i}.
\end{aligned}
\end{align*}
The corresponding Hamiltonians are
\begin{align*}
H_\lambda = \begin{cases}
\sum\limits_{i=1}^n b_i  \quad \mbox{for } \lambda \neq \infty,\\
 \sum\limits_{i=1}^n a_i^2 + \dfrac{1}{2} \sum\limits_{i=1}^nb_i^2  \quad \mbox{for } \lambda = \infty.
\end{cases}
\end{align*}
\begin{remark}
Different authors use different forms of compatible Poisson structures for the Toda lattice. The above form is similar to the one used in \cite{GZToda, Toda2}.
\end{remark}
The integrals of the Toda lattice are the Casimir functions of the pencil $\Pi_T$. Taking all of these Casimirs,
we obtain an integrable system $\F_{T}$. Our goal is to study the singularities of this system. First, let us make several preliminary remarks.
\begin{enumerate}
\item The pencil $\Pi_T$ admits globally defined Casimirs depending smoothly on $\lambda$, which justifies the application of our scheme to the study of singularities of $\F_{T}$ (see Remark \ref{badRemark}).
\item Our approach is only suitable for points $x$ satisfying $\rank \Pi_T(x) = \rank \Pi_T$. However, since the rank of $P_\infty$ is constant (since $a_i>0$), all points $x \in M_{T}$ satisfy this condition.
\item Our results can be applied to the restriction of $\F_{T}$ to any symplectic leaf of $P_\infty$, since all symplectic leafs of this bracket are regular.
\end{enumerate}

{ According to the general scheme, in order to study the singularities of $\F_{T}$, we need to do the following.
\begin{enumerate}
	\item For each point $x \in M_{T}$ determine the spectrum of the pencil at $x$. The point $x$ is singular if and only if the spectrum is non-empty.
	\item If $x$ is singular, then for each $\lambda$ in the spectrum, check the following conditions:
	\begin{itemize}
		\item $\dim \Ker \left(P_{\infty}(x)\mid_{\Ker P_{\lambda}(x)}\right) = \corank \Pi_T(x) $;
		\item The linearized pencil $\diff_\lambda \Pi_T(x)$ is non-degenerate.
		\end{itemize}
		The point $x$ is non-degenerate if and only if these conditions are satisfied for each $\lambda$ in the spectrum.
	 If $x$ is non-degenerate, determine its type by adding up the types of $\diff_\lambda \Pi(x)$.
\end{enumerate}
}
\subsection{Computation of the spectrum}
First of all, compute the corank of the pencil. Note that all brackets of the pencil possess a common Casimir function
$$
C = \sum \log a_i,
$$
so $\corank \Pi \geq 2$. On the other hand, $\corank P_\infty = 2$, so $\corank \Pi_T = 2$.\par
Let $x = (a,b) \in M_T$.
Following \cite{GZToda}, consider the map $\mathcal{T}_\lambda \colon M_T \to M_T$ sending $(a,b)$ to $(a, b - \lambda)$. Clearly, $\mathcal{T}_\lambda$ maps the bracket $P_\lambda$ to the bracket $P_0$. So, $\lambda \in \Lambda(x)$ if and only if the rank of $P_0$ drops at $\mathcal{T}_\lambda(x)$. This observation reduces the study of singularities of the pencil to the singularities of $P_0$. \par
The kernel of $P_0$ consists of infinite $n$-periodic sequences $\alpha, \beta \in \R^{\infty}$ satisfying difference equations
\begin{align}\label{diffEq}
\begin{cases}
a_{i+1}\alpha_{i+1} -  a_{i-1}\alpha_{i-1} - 2b_i\beta_i +  2b_{i+1}\beta_{i+1} = 0,\\
a_ib_i\alpha_i -a_{i-1}b_i\alpha_{i-1} + 2a_i^2\beta_{i+1} -2a_{i-1}^2\beta_{i-1} = 0.
\end{cases}
\end{align}
Clearly, the space of all (not necessarily periodic) solutions of these equations is $4$-dimensional, so $\dim \Ker P_0 \leq 4$.  On the other hand, $\dim \Ker P_0 \geq 2$. Consequently, $\dim \Ker P_0$ is $2$ or $4$. Singular points of $P_0$ are exactly those where $\dim \Ker P_0 = 4$.
To describe these points, 
consider the infinite Lax matrix
\begin{align*}
\mathcal L(a,b) = \left(\begin{array}{cccccc}\ddots & \ddots  & \ddots &  &  &  \\ & a_{i-1} & b_i & a_{i} &  &  \\ &  & a_{i} & b_{i+1} & a_{i+1} &  \\ &  &  &  \ddots & \ddots & \ddots\end{array}\right),
\end{align*}
and the equation 
\begin{align}\label{LaxEq}\mathcal L(a,b)\,\xi = 0,\end{align} where  $\xi \in \R^\infty$. This equation can be written as an infinite sequence of difference equations
$$
a_{i-1}\xi_{i-1} + b_i\xi_{i} + a_{i}\xi_{i+1} = 0.
$$
Denote the space of  solutions of (\ref{LaxEq}) by $\Ker \mathcal{L}$.
Clearly, $\Ker \mathcal{L}$ is two-dimensional.
\begin{statement}
Assume that $\xi,\eta \in \Ker \mathcal L$. Then 
\begin{align}\label{prodRule}
\begin{cases}
\alpha_i = \xi_i\eta_{i+1} + \xi_{i+1}\eta_i,\\
\beta_i =  \xi_i\eta_i
\end{cases}
\end{align}
is a solution of (\ref{diffEq}).

\end{statement}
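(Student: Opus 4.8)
The plan is to verify directly that the sequences $\alpha,\beta$ defined by \eqref{prodRule} satisfy the two difference equations \eqref{diffEq}, using only the single scalar recurrence $a_{i-1}\xi_{i-1}+b_i\xi_i+a_i\xi_{i+1}=0$ (and the analogous one for $\eta$) coming from $\xi,\eta\in\Ker\mathcal L$. The key observation is that \eqref{prodRule} is exactly the ``symmetric square'' construction: $\beta_i=\xi_i\eta_i$ and $\alpha_i=\xi_i\eta_{i+1}+\xi_{i+1}\eta_i$ are the natural bilinear combinations one gets from two solutions of a second-order linear difference equation, so the claim is essentially that $\Ker\mathcal L\odot\Ker\mathcal L$ maps into $\Ker P_0$. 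Periodicity of $\alpha,\beta$ is immediate since $\xi,\eta$ need not themselves be periodic but the products $\xi_i\eta_i$, $\xi_i\eta_{i+1}+\xi_{i+1}\eta_i$ will be $n$-periodic precisely on the singular set; however, for this proposition we only need that \emph{if} $\xi,\eta$ produce periodic data then \eqref{diffEq} holds, and periodicity is formally inherited term-by-term, so I would not dwell on it.

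The first step is to isolate, from the Lax recurrence, the two substitutions $a_i\xi_{i+1}=-(a_{i-1}\xi_{i-1}+b_i\xi_i)$ and $a_i\eta_{i+1}=-(a_{i-1}\eta_{i-1}+b_i\eta_i)$, together with their shifted versions obtained by replacing $i$ with $i\pm 1$. The second step is to substitute \eqref{prodRule} into the first equation of \eqref{diffEq}, namely $a_{i+1}\alpha_{i+1}-a_{i-1}\alpha_{i-1}-2b_i\beta_i+2b_{i+1}\beta_{i+1}=0$: expand $a_{i+1}\alpha_{i+1}=a_{i+1}\xi_{i+1}\eta_{i+2}+a_{i+1}\xi_{i+2}\eta_{i+1}$, use the recurrence to rewrite $a_{i+1}\eta_{i+2}$ and $a_{i+1}\xi_{i+2}$ in terms of lower-index quantities, do the same for $a_{i-1}\alpha_{i-1}$ (here one uses $a_{i-1}\xi_i$-type relations by shifting the recurrence down), and check that everything cancels. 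The third step is the parallel computation for the second equation of \eqref{diffEq}, $a_ib_i\alpha_i-a_{i-1}b_i\alpha_{i-1}+2a_i^2\beta_{i+1}-2a_{i-1}^2\beta_{i-1}=0$: here $a_i^2\beta_{i+1}=a_i^2\xi_{i+1}\eta_{i+1}=(a_i\xi_{i+1})(a_i\eta_{i+1})$, and one substitutes $a_i\xi_{i+1}=-(a_{i-1}\xi_{i-1}+b_i\xi_i)$ and likewise for $\eta$, expands the product, and similarly for the $a_{i-1}^2\beta_{i-1}$ term, then collects terms and verifies the identity.

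The main obstacle is purely bookkeeping: after the substitutions, each equation becomes a sum of roughly a dozen monomials in $a_j,b_j,\xi_j,\eta_j$ with indices ranging over $i-1,i,i+1$, and one must see the cancellations. To keep this under control I would organize the terms by their $\xi$–$\eta$ index pattern (e.g.\ collect all terms proportional to $\xi_{i-1}\eta_i$, all terms proportional to $\xi_i\eta_i$, etc.), using the manifest symmetry under $\xi\leftrightarrow\eta$ to halve the work, and observe that within each such group the coefficient is a telescoping combination of $a_j$'s and $b_j$'s that vanishes. I expect no conceptual difficulty once the recurrence substitutions are made in a disciplined order; the verification is the discrete analogue of the classical fact that a product of two eigenfunctions of a Schr\"odinger operator solves the third-order ``Lax pair'' equation for its square.
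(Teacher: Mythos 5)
Your proposal is correct and matches the paper, which simply states that ``the proof is a straightforward computation'': substituting $a_i\xi_{i+1}=-(a_{i-1}\xi_{i-1}+b_i\xi_i)$ (and its shifts, and the $\eta$-analogues) into \eqref{prodRule} does yield the identities $a_{i\pm1}\alpha_{i\pm1}=-a_i\alpha_i-2b_{i(+1)}\beta_{i(+1)}$ and $a_i^2\beta_{i+1}=a_{i-1}^2\beta_{i-1}+b_ia_{i-1}\alpha_{i-1}+b_i^2\beta_i$, from which both equations of \eqref{diffEq} cancel exactly as you describe. The only inessential blemish is the digression about periodicity, which plays no role in verifying the difference equations themselves.
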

The proof is a straightforward computation.\par
\begin{remark}
A similar observation regarding the KdV equation is made in \cite{GZ2}: the product of two solutions of the Hill equation belongs to the kernel of the Magri bracket.
\end{remark}
The solution of (\ref{diffEq}) given by (\ref{prodRule}) will be denoted by $\xi \times \eta$. 
Let $\xi, \eta$ be a basis in $\Ker \mathcal L$. Then $\xi \times \xi$,  $\eta \times \eta$, and $ \xi \times \eta$ are three solutions of (\ref{diffEq}). A fourth solution is the differential of the common Casimir $C$. It is easy to check that the four solutions
$$
 \xi \times \xi, \quad \eta \times \eta, \quad \xi \times \eta, \quad \diff C
$$ 
are independent and thus span the space of solutions of (\ref{diffEq}). Obviously, all these four solutions are periodic if and only if $\xi$ and $\eta$ are either both periodic, or both anti-periodic. Consequently, $\dim \Ker P_0 = 4$ if and only if the equation $\mathcal L(a,b)\,\xi = 0$ has either two periodic, or two anti-periodic solutions, i.e. zero is a multiplicity-two periodic or anti-periodic eigenvalue of the $\mathcal L(a,b)$.

\begin{remark}
Note that $\dim \Ker P_0 \geq 2$, so the space of periodic solutions of (\ref{diffEq}) is at least two-dimensional. One periodic solution is $\diff C$. A second periodic solution can be constructed as follows. Let $\xi, \eta$ be two solutions of (\ref{LaxEq}). Then it is easy to see that the quantity 
\begin{align}\label{Wronskian}
\mathcal{W}(\xi, \eta) =\mathcal{W}_i(\xi, \eta)  = a_i(\xi_{i+1}\eta_i - \xi_i\eta_{i+1})
\end{align}
does not depend on $i$. Further, consider the operator on $\R^{\infty}$ which shifts a sequence to the left by $n$. Its restriction to $\Ker \mathcal L$ is the monodromy operator $$\mathcal{M} \colon \Ker \mathcal L \to \Ker \mathcal L.$$ Let $\xi, \eta$ be a basis in $\Ker \mathcal{L}$. Then
$$
\det \mathcal{M} = \frac{\mathcal{W}_{n+1}(\xi, \eta)}{\mathcal{W}_1(\xi, \eta)},
$$
so  $\mathcal M \in \mathrm{SL}(2, \R)$. There are two possible cases. \begin{enumerate} \item $\mathcal M$ is diagonalizable. Let $\xi, \eta$ be the eigenvectors of $\mathcal M$. Then the corresponding eigenvalues have product one, which implies that $\xi \times \eta$ is a periodic solution of  (\ref{diffEq}). \item $\mathcal M$ has a Jordan block with an eigenvalue $\pm 1$. Let $\xi$ be the eigenvector of $\mathcal M$. Then $\xi \times \xi$ is a periodic solution of  (\ref{diffEq}).
\end{enumerate}
\end{remark}

Now describe the spectrum and the singular set. Recall that $\lambda \in \Lambda(x)$ if and only if the rank of $P_0$ drops at $\mathcal{T}_\lambda(x)$. Further,
$$
\mathcal L(\mathcal{T}_\lambda(x)) = \mathcal L(x) - \lambda E,
$$
which implies the following.
\begin{theorem}\label{TodaSing}
Consider the Poisson pencil $\Pi_{T}$ associated with the periodic Toda lattice.
\begin{enumerate}
	\item The spectrum of the pencil at a point $x \in M_T$ coincides with the set of multiplicity-two periodic or anti-periodic eigenvalues of the Lax operator $\mathcal L(x)$.
	\item The singularities of the system are exactly those points where there are multiplicity-two periodic or anti-periodic eigenvalues of the Lax operator $\mathcal L(x)$.
\end{enumerate}
\end{theorem}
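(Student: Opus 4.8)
The plan is simply to assemble the facts established just before the statement; there is no genuinely new computation, only organisation. First I would dispose of the value $\lambda=\infty$: since $a_i>0$, the rank of $P_\infty$ is constant on $M_T$, so $\infty\notin\Lambda(x)$ for every $x$, and moreover every point satisfies $\rank\Pi_T(x)=\rank\Pi_T=2n-2$. Hence for finite $\lambda$ one has $\lambda\in\Lambda(x)$ if and only if $\rank P_\lambda(x)<2n-2$, i.e. (kernel dimensions being even and bounded by $4$) $\dim\Ker P_\lambda(x)=4$. Applying the diffeomorphism $\mathcal T_\lambda\colon(a,b)\mapsto(a,b-\lambda)$, which carries $P_\lambda$ to $P_0$, this is equivalent to $\dim\Ker P_0(\mathcal T_\lambda(x))=4$.

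Next I would feed in the description of $\Ker P_0$ obtained above: for a basis $\xi,\eta$ of $\Ker\mathcal L(\mathcal T_\lambda(x))$, the four solutions $\xi\times\xi$, $\eta\times\eta$, $\xi\times\eta$, $\diff C$ of \eqref{diffEq} are independent and span the whole (four-dimensional) solution space, and all four are $n$-periodic precisely when $\xi$ and $\eta$ may be chosen simultaneously periodic or simultaneously anti-periodic. Therefore $\dim\Ker P_0(\mathcal T_\lambda(x))=4$ holds exactly when $\mathcal L(\mathcal T_\lambda(x))\,\xi=0$ admits two independent periodic, or two independent anti-periodic, solutions, i.e. $0$ is a multiplicity-two periodic or anti-periodic eigenvalue of $\mathcal L(\mathcal T_\lambda(x))$. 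Since $\mathcal L(\mathcal T_\lambda(x))=\mathcal L(x)-\lambda E$, this says $\lambda$ is a multiplicity-two periodic or anti-periodic eigenvalue of $\mathcal L(x)$, which is statement~1. For statement~2, Theorem~\ref{BolTh} --- which also yields that the singular points of $\F_T$ restricted to a symplectic leaf of $P_\infty$ are exactly the points where the rank of some bracket of $\Pi_T$ drops --- shows that $x$ is a singular point of the system if and only if $\Lambda(x)\neq\emptyset$; combining this with statement~1 finishes the proof.

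The substantive input --- the product rule $(\xi,\eta)\mapsto\xi\times\eta$ of the preceding proposition and the dichotomy $\dim\Ker P_0\in\{2,4\}$ --- has already been placed before the statement, so I do not expect a serious obstacle here. The steps that require a little care are all bookkeeping: checking that $\infty$ never lies in the spectrum and that every point of $M_T$ is pencil-regular (both consequences of $a_i>0$, which is precisely what lets Theorem~\ref{BolTh} and the translation trick apply), and keeping straight that ``multiplicity-two periodic eigenvalue'' here means a genuinely two-dimensional space of periodic solutions of \eqref{LaxEq}, and not the one-dimensional family that arises when the monodromy $\mathcal M$ carries a nontrivial Jordan block with eigenvalue $\pm1$.
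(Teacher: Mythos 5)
Your proposal is correct and follows essentially the same route as the paper: the paper likewise derives the theorem by combining the translation trick $\mathcal T_\lambda$ (so that $\mathcal L(\mathcal T_\lambda(x))=\mathcal L(x)-\lambda E$) with the dichotomy $\dim\Ker P_0\in\{2,4\}$ and the spanning set $\xi\times\xi,\ \eta\times\eta,\ \xi\times\eta,\ \diff C$, and then invokes Theorem \ref{BolTh} to identify singular points with points of non-empty spectrum. No gaps.
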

\begin{remark}
 Instead of considering the infinite Lax matrix $\mathcal L$, it can be restricted to the double period.  
\end{remark}
Since the Lax operator $\mathcal L$ is symmetric, we obtain the following.
\begin{consequence}
	The spectrum of the pencil $\Pi_T$ is real.
\end{consequence}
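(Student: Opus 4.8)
The plan is to deduce the corollary directly from Theorem \ref{TodaSing}. By that theorem, $\lambda \in \Lambda(x)$ only if $\lambda$ is a periodic or anti-periodic eigenvalue of the Lax operator $\mathcal{L}(x)$, so it suffices to prove that every periodic or anti-periodic eigenvalue of $\mathcal{L}(x)$ is real, for each $x \in M_{T}$.

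First I would reduce the infinite matrix $\mathcal{L}(x)$ to a finite one. By definition, a periodic eigenvalue is an eigenvalue of the restriction of $\mathcal{L}(x)$ to the $n$-dimensional space $V_{+}$ of $n$-periodic sequences $\xi \in \R^{\infty}$ (those with $\xi_{i+n} = \xi_{i}$), and an anti-periodic eigenvalue is an eigenvalue of the restriction to the $n$-dimensional space $V_{-}$ of sequences with $\xi_{i+n} = -\xi_{i}$. Equivalently, as in the remark above, one may restrict $\mathcal{L}(x)$ to the double period, obtaining a $2n \times 2n$ matrix whose eigenvectors split into periodic and anti-periodic ones; either description works.

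Next I would write $\mathcal{L}(x)|_{V_{\pm}}$ out explicitly in the coordinates $\xi_{1}, \dots, \xi_{n}$. Using $a_{0} = a_{n}$ together with $\xi_{0} = \pm \xi_{n}$ and $\xi_{n+1} = \pm \xi_{1}$, one obtains a real tridiagonal $n \times n$ matrix with diagonal $(b_{1}, \dots, b_{n})$, sub/super-diagonal entries $a_{1}, \dots, a_{n-1}$, and the two corner entries both equal to $\pm a_{n}$. The point is that in the anti-periodic case both corners pick up the same sign $-a_{n}$, so the matrix remains symmetric. Being a real symmetric matrix, $\mathcal{L}(x)|_{V_{\pm}}$ has only real eigenvalues. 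Hence every periodic or anti-periodic eigenvalue of $\mathcal{L}(x)$ is real, so $\Lambda(x) \subset \R$ for every $x \in M_{T}$, and therefore the spectrum of $\Pi_{T}$ is real.

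There is essentially no obstacle here: the only thing requiring care is the bookkeeping in the anti-periodic case, where one must check that the sign twist in the boundary conditions yields matching (not opposite) corner entries and therefore preserves symmetry. Everything else is immediate from Theorem \ref{TodaSing} and the symmetry of $\mathcal{L}$.
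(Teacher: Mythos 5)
Your proof is correct and follows the same route as the paper, which simply notes that the reality of the spectrum follows from the symmetry of the Lax operator $\mathcal L$ (restricted to the periodic and anti-periodic subspaces, cf.\ the remark about restricting to the double period). Your explicit check that the anti-periodic boundary condition produces matching corner entries $-a_n$, so that the finite tridiagonal matrix stays real symmetric, is exactly the detail the paper leaves implicit.
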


\subsection{Diagonalizability}
Let $x \in M_T$, and $\lambda \in \Lambda(x)$. In this section we compute the restriction of $P_{\infty}(x)$ to the kernel of $P_\lambda(x)$, and hence prove the diagonalizability of the pencil. \par According to the previous section, we may assume that $\lambda = 0$. The kernel of $P_0(x)$ is generated by
$$
 \xi \times \xi, \quad \eta \times \eta, \quad \xi \times \eta, \quad \diff C,
$$ 
where $\xi$ and $\eta$ are two periodic or anti-periodic solutions of (\ref{LaxEq}). 
\newpage Since $C$ is a common Casimir, we have $\diff C \in \Ker P_0$. A simple computation shows that
\begin{align*}
	P_\infty(\xi \times \xi, \eta \times \eta) &= 4\mathcal{W}(\xi, \eta)\sum\limits_{i=1}^{n} \xi_i\eta_i, \quad
	P_\infty(\xi \times \eta, \xi \times \xi) = -2\mathcal{W}(\xi, \eta)\sum\limits_{i=1}^{n} \xi_i^2,\\
	&P_\infty(\xi \times \eta, \eta \times \eta) = 2\mathcal{W}(\xi, \eta)\sum\limits_{i=1}^{n} \eta_i^2,
\end{align*}
where $\mathcal{W}(\xi, \eta)$ is given by (\ref{Wronskian}). Introduce the scalar product 
$$
\langle \xi, \eta \rangle = \sum\limits_{i=1}^{n} \xi_i\eta_i,
$$
and let $\xi, \eta$ be an orthonormal basis in $\Ker \mathcal L$. Then
$$
P_\infty(\xi \times \xi, \eta \times \eta) =0, \quad P_\infty(\xi \times \eta, \xi \times \xi) = -2\mathcal{W}(\xi, \eta), \quad P_\infty(\xi \times \eta, \eta \times \eta) = 2\mathcal{W}(\xi, \eta),
$$
so the kernel of $P_\infty$ restricted to $\Ker P_0(x)$ is generated by $\diff C$ and $\xi \times \xi + \eta \times \eta$, which proves that the pencil is diagonalizable.

\subsection{Linearization}
Let $x \in M_T$, and $\lambda \in \Lambda(x)$. In this section we compute the $\lambda$-linearization of the pencil and hence prove that all singularities are non-degenerate elliptic. Analogously to the previous section, we may assume that $\lambda = 0$.\par
Compute the Lie structure of $\Ker P_0(x)$. Let $(\alpha, \beta), ( \alpha',  \beta') \in \Ker P_0(x)$. By definition, their commutator is computed to be
$( \alpha'',  \beta'')$, where
\begin{align*}
 \alpha''_i &= \frac{1}{2}a_{i-1}(\alpha_i\alpha_{i-1}' -\alpha'_i\alpha_{i-1} ) + \frac{1}{2}a_{i+1}(\alpha_{i+1}\alpha_{i}' -\alpha'_{i+1}\alpha_{i} )  + b_i(\alpha_i\beta'_i - \alpha'_i\beta_i) \,+ \\&\qquad \quad \qquad+ \,b_i(\alpha'_i\beta_{i+1} - \alpha_i\beta_{i+1}') + 2a_i(\beta_{i+1}\beta_i' - \beta_{i+1}'\beta_i),
 \end{align*}
 and 
 \begin{align*}
\beta''_i = a_i(\alpha_i\beta'_i - \alpha'_i\beta_i) + a_{i-1}(\alpha'_{i-1}\beta_{i} - \alpha_{i-1}\beta_{i}').
\end{align*}
Using formulas (\ref{prodRule}) for the $\times$-product, we obtain
$$
[\xi \times \xi, \eta \times \eta] = 2\mathcal{W}(\xi, \eta) \,\xi \times \eta, \quad [\xi \times \eta, \xi \times \xi] = - 2\mathcal{W}(\xi, \eta) \,\xi \times \xi, \quad [\xi \times \eta, \eta \times \eta] =  2\mathcal{W}(\xi, \eta) \,\eta \times \eta.
$$
Further, $C$ is a Casimir, so $\diff C$ belongs to the center of $\Ker P_0$. Consequently,
$$
\Ker P_0(x) \simeq \sL(2, \R) \oplus \R.
$$
Let $\xi, \eta$ be an orthonormal basis in $\Ker \mathcal{L}$. Then the kernel of $P_\infty$ restricted to $\Ker P_0(x)$ is generated by $\diff C$ and $\xi \times \xi + \eta \times \eta$. Clearly, these two elements span a Cartan subalgebra in $\Ker P_0(x)$, so the linearization $\diff_0 \Pi(x)$ is non-degenerate. Moreover, the eigenvalues of the operator $\ad(\xi \times \xi + \eta \times \eta)$ are pure imaginary, so $\mathrm{Sing}(\diff_0 \Pi(x))$ is an elliptic singularity. We conclude with the following.
\begin{theorem}
All singularities of the periodic Toda lattice are non-degenerate elliptic.
\end{theorem}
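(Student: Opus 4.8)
The plan is to derive the statement from the general criteria of Theorems~\ref{nd1} and~\ref{type1}, feeding into them the three explicit computations carried out above: of $\Ker P_\lambda(x)$, of the restriction $P_\infty|_{\Ker P_\lambda(x)}$, and of the Lie structure on $\Ker P_\lambda(x)$ induced by the linear part of $P_\lambda$. Fix a singular point $x\in M_T$. By Theorem~\ref{TodaSing} the spectrum $\Lambda(x)$ is a nonempty finite set, namely the set of multiplicity-two periodic or anti-periodic eigenvalues of the symmetric Lax operator $\mathcal L(x)$; in particular, since $\mathcal L(x)$ is symmetric, $\Lambda(x)\subset\R$, so no complex points of the spectrum occur. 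By Theorem~\ref{type1} it therefore suffices to show that for each $\lambda\in\Lambda(x)$ the diagonalizability condition of Definition~\ref{diagpencil} holds at $x$ along $\Ker P_\lambda(x)$ and that the linear pencil $\diff_\lambda\Pi_T(x)$ is non-degenerate of elliptic type; then every $k_h(\lambda)$ and $k_f(\lambda)$ vanishes and $x$ is non-degenerate elliptic.

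By the translation $\mathcal T_\lambda\colon(a,b)\mapsto(a,b-\lambda)$, which carries $P_\lambda$ to $P_0$ and fixes $P_\infty$, I reduce to $\lambda=0$. For $\lambda=0$ one identifies $\Ker P_0(x)$ with the span of $\xi\times\xi$, $\eta\times\eta$, $\xi\times\eta$ and $\diff C$, where $\xi,\eta$ is a basis of the two-dimensional solution space $\Ker\mathcal L$ of the Lax difference equation; this is exactly the product-rule proposition together with the independence of these four solutions, a direct computation. Choosing $\xi,\eta$ orthonormal for $\langle\xi,\eta\rangle=\sum_{i=1}^n\xi_i\eta_i$, the computed values of $P_\infty$ on the $\times$-products give $P_\infty(\xi\times\xi,\eta\times\eta)=0$, $P_\infty(\xi\times\eta,\xi\times\xi)=-2\mathcal W(\xi,\eta)$, $P_\infty(\xi\times\eta,\eta\times\eta)=2\mathcal W(\xi,\eta)$ with $\mathcal W(\xi,\eta)\neq 0$, so $\Ker\bigl(P_\infty|_{\Ker P_0(x)}\bigr)=\ls{\diff C,\ \xi\times\xi+\eta\times\eta}$ is two-dimensional. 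Since $\sum\log a_i$ is a common Casimir and $\corank P_\infty=2$, we have $\corank\Pi_T=2$, so this matches $\corank\Pi_T$ and diagonalizability at $x$ follows from Definition~\ref{diagpencil}.

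It remains to analyse $\diff_0\Pi_T(x)$, the pencil on $\Ker P_0(x)$ with linear part the induced bracket and constant part $P_\infty|_{\Ker P_0(x)}$. Writing out the induced bracket and substituting the $\times$-product formulas yields $[\xi\times\xi,\eta\times\eta]=2\mathcal W(\xi,\eta)\,\xi\times\eta$ together with its two cyclic analogues, while $\diff C$ is central; hence $\Ker P_0(x)\simeq\sL(2,\R)\oplus\R$. The kernel $\ls{\diff C,\ \xi\times\xi+\eta\times\eta}$ of the defining cocycle is a maximal $\ad$-semisimple Abelian subalgebra, i.e.\ a Cartan subalgebra of $\sL(2,\R)\oplus\R$, so by Theorem~\ref{ndlp1} (equivalently, by the $\sL(2)$ example of Section~\ref{linPenSing}) $\diff_0\Pi_T(x)$ is non-degenerate, and its type is read off from the eigenvalues of $\ad(\xi\times\xi+\eta\times\eta)$. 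A short computation with the bracket relations gives $\bigl(\ad(\xi\times\xi+\eta\times\eta)\bigr)^2=-8\mathcal W(\xi,\eta)^2\cdot\mathrm{Id}$ on the two-dimensional complement of the centre, so those eigenvalues are purely imaginary and $\mathsf{Sing}(\diff_0\Pi_T(x))$ is elliptic. Summing over $\lambda\in\Lambda(x)$ in Theorem~\ref{type1} then gives that $x$ is a non-degenerate singular point of type $(|\Lambda(x)|,0,0)$.

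The main obstacle --- and the point genuinely responsible for ``elliptic'' rather than ``hyperbolic'' --- is the minus sign in $\bigl(\ad(\xi\times\xi+\eta\times\eta)\bigr)^2=-8\mathcal W(\xi,\eta)^2\cdot\mathrm{Id}$. This sign is what places the relevant element of $\sL(2,\R)$ in the compact Cartan, and it is forced by the fact that the element of $\Ker P_0(x)$ annihilated by $P_\infty$ is $\xi\times\xi+\eta\times\eta$ \emph{with a plus sign}, which in turn comes from the positive-definiteness of $\langle\cdot,\cdot\rangle$ on $\Ker\mathcal L$ (a reflection of $a_i>0$); had it instead been $\xi\times\xi-\eta\times\eta$, the square would be $+8\mathcal W(\xi,\eta)^2\cdot\mathrm{Id}$ and the singularity would be hyperbolic. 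Everything else is either a mechanical difference-equation computation or a direct appeal to the structural theorems already in place, so the only real care needed is in keeping these signs straight.
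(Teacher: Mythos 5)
Your proposal is correct and follows essentially the same route as the paper: reduction to $\lambda=0$ via $\mathcal T_\lambda$, identification of $\Ker P_0(x)$ with the span of the $\times$-products and $\diff C$, the $P_\infty$-computation giving diagonalizability, and the identification $\Ker P_0(x)\simeq\sL(2,\R)\oplus\R$ with $\ls{\diff C,\ \xi\times\xi+\eta\times\eta}$ as a Cartan subalgebra of elliptic type. The only addition is your explicit verification $\bigl(\ad(\xi\times\xi+\eta\times\eta)\bigr)^2=-8\mathcal W(\xi,\eta)^2\,\mathrm{Id}$ on the complement of the centre (which checks out), where the paper simply asserts that the eigenvalues are purely imaginary.
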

\begin{consequence}
	All, including singular, trajectories of the periodic Toda lattice with $n$ sites evolve on $m$-dimensional tori, $m \leq n-1$. These tori are stable in a sense that if two initial conditions are close, so are the tori on which they lie. 
\end{consequence}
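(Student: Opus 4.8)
The plan is to combine the theorem just proved---that all singularities of $\F_T$ are non-degenerate elliptic---with the classical fact that the common level sets of the integrals of the periodic Toda lattice are compact (so that in particular every fiber of $\F_T$ is compact), and then to extract both the torus structure of the fibers and their stability from Eliasson's normal form (Theorem \ref{EliassonThm}). I would work on a fixed symplectic leaf $O=O(\infty,x)$ of $P_\infty$: since $\corank P_\infty\equiv 2$ the leaf $O$ is $(2n-2)$-dimensional, and the Toda flow is tangent to $O$ because it is bi-Hamiltonian, hence Hamiltonian with respect to $P_\infty$. The family $\F_T|_O$ is completely integrable: by Corollary \ref{intCor} it is enough that $S\cap O$ have measure zero, and by Theorem \ref{TodaSing} the set $S$ is the locus where $\mathcal L(x)$ has a double periodic or anti-periodic eigenvalue, which is a proper analytic subset of $M_T$. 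Hence every trajectory of the Toda flow is contained in a fiber of the Lagrangian fibration of $\F_T|_O$, and it remains to describe these fibers.

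First I would show that every fiber $L$ is a torus of dimension $m\le n-1$. Fix $y_0\in L$ and let $r$ be its rank; by the preceding theorem $y_0$ is non-degenerate of type $(k_e,0,0)$ with $k_e+r=n-1$, so Theorem \ref{EliassonThm} gives symplectic coordinates near $y_0$ in which the fibration is the product of $k_e$ elliptic singularities and a trivial fibration $\R^r\times\R^r$. In these coordinates $L$ is locally $\R^r$, namely the common zero set of the elliptic functions $p_j^2+q_j^2$ and of the $r$ coordinates dual to the trivial directions, and along this piece the integrals have constant rank $r$; doing this at every point of $L$ and using that $L$ is connected, the rank is constant along $L$, so $L$ is a compact (by the compactness fact above) connected $r$-dimensional manifold whose tangent space at each point is spanned by commuting complete Hamiltonian vector fields of integrals. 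The usual Arnold--Liouville argument then exhibits $L$ as a single orbit of an $\R^r$-action, i.e. $L\cong\R^r/\Gamma$ with $\Gamma$ discrete, and compactness forces $\Gamma$ to be a full lattice; thus $L\cong T^m$ with $m=r\le n-1$. Since the Toda flow is a translation flow in these coordinates and is tangent to $L$, every trajectory evolves quasi-periodically on such a torus.

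Stability is exactly where the absence of hyperbolic and focus blocks is used. Around an arbitrary point of $O$, Eliasson's normal form exhibits the nearby fibers as $\{p_j^2+q_j^2=c_j,\ P_i=d_i\}$ with all $c_j\ge 0$ and $(c_j,d_i)$ close to the value at that point; as $(c_j,d_i)$ varies continuously these compact fibers vary continuously in the Hausdorff metric, whereas a hyperbolic block $p_jq_j$ would instead produce a separatrix and make the fiber jump discontinuously. Hence the assignment of the invariant torus through a point is continuous in the Hausdorff metric---locally by the normal form, and globally because all fibers are compact---which is the asserted stability (and in particular yields Lyapunov stability of each invariant torus). I expect the main obstacle to be the compactness input: it is needed both to close the $\R^r$-orbits into tori and to run the Hausdorff-continuity argument. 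For the periodic Toda lattice it is classical; for instance $\mathrm{Tr}\,\mathcal L^2=\sum b_i^2+2\sum a_i^2$ is an integral while $\prod a_i$ is a Casimir, so the common level sets are bounded and bounded away from the boundary $\{a_i=0\}$.
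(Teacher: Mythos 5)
Your proposal is correct and follows exactly the route the paper intends: the paper states this corollary without proof as an immediate consequence of the theorem that all singularities are non-degenerate elliptic, and your write-up simply supplies the standard supporting details (compactness of the common level sets, complete integrability on the leaves of $P_\infty$, Eliasson's normal form, constancy of rank along purely elliptic fibers, and Hausdorff continuity of the fibration in the absence of hyperbolic and focus blocks). Nothing in your argument deviates from or adds to the approach implicit in the paper.
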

\begin{remark}
The above analysis suggests the similarity between the geometry of the quadratic Toda bracket and the linear bracket on the Virasoro dual. In particular, the transverse Poisson structure at singular points is $ \sL(2, \R) \oplus \R$ for both brackets. 
\end{remark}
\section{Proofs}\label{proofsSection}


\subsection{Space $L$ and recursion operators}\label{zeroOrderTheory}

In this section we study properties of two compatible Poisson brackets at a point, i.e. properties of a pair of skew-symmetric bilinear forms on a vector space. Note that all of these properties can be easily deduced from the Jordan-Kronecker theorem (see Appendix A). \par 
Consider a pencil $\Pi=\{P_\alpha\}$, and let  $x\in M$ be such that $\rank \Pi(x)=\rank \Pi$. Consider the spectrum of $\Pi$ at the point $x$:
$$
\Lambda = \{ \lambda \in \CP: \rank P_{\lambda}(x) < \rank \Pi(x) \},
$$
and define the subspace of $\T^*_xM$ spanned by the kernels of regular brackets:
$$
L = \sum\limits_{\lambda \in \overline \R \setminus \Lambda} \Ker P_{\lambda}(x) \subset \T^{*}_{x}M.
$$

\begin{statement}[Properties of the space $L$]\label{PropertiesOfL}
\quad\par
\begin{enumerate}
       \item The space $L$ is isotropic with respect to any bracket of the pencil.
       \item  The skew-orthogonal complement to $L$ given by $L^\bot = \{ \xi  \in \T^{*}_{x}M\mid P_\alpha(\xi,L) = 0\}$ does not depend on the choice of $\alpha\in\overline\R$.
       \item Any regular bracket of the pencil is non-degenerate on $L^\bot /L$.
       \item Let $k \geq \dim L$. Then for any distinct $\alpha_{1}, \dots, \alpha_{k} \in \overline \R \setminus \Lambda$ the following equality holds
	$$
		\sum\limits_{i=1}^{k} \Ker P_{\alpha_{i}} = L.
	$$
	\item $\dim (\Ker P_{\lambda} \cap L) = \corank \Pi(x)$ for all $\lambda\in \RP$. \item Similarly, $\dim_\Complex (\Ker P_{\lambda} \cap L \otimes \Complex) = \corank \Pi(x)$ for all $\lambda\in \Complex \setminus \R$.
	\end{enumerate}
\end{statement}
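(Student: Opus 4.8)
The plan is to reduce every claim to the Jordan--Kronecker normal form of the pair of skew-symmetric forms $P_{0}(x),P_{\infty}(x)$ on $V=\T^{*}_{x}M$ (Appendix~A). Write $V=\bigoplus_{i}V_{i}$ as a direct sum, orthogonal with respect to every $P_{\alpha}$, of Jordan blocks $J_{2m}(\mu)$, $\mu\in\CP$, and Kronecker blocks of odd dimension. The per-block facts I will use are the following. On a Jordan block $J_{2m}(\mu)$ the form $P_{\lambda}$ is non-degenerate for every $\lambda\neq\mu$. On a Kronecker block $K$ of size $2k+1$, every $P_{\lambda}$ ($\lambda\in\CP$) has a one-dimensional kernel $\ell(\lambda)$; all the lines $\ell(\lambda)$ lie in a fixed $(k+1)$-dimensional subspace $S\subset K$ which is isotropic with respect to both $P_{0}$ and $P_{\infty}$; and in a suitable basis $p_{0},\dots,p_{k}$ of $S$ one has $\ell(\lambda)=\langle p_{0}+\lambda p_{1}+\dots+\lambda^{k}p_{k}\rangle$ for finite $\lambda$ and $\ell(\infty)=\langle p_{k}\rangle$. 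In particular $\corank\Pi(x)$ equals the number of Kronecker blocks, and $\RP\setminus\Lambda$ is infinite since $\Lambda$ is finite.

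The first step is to locate $L$ in this decomposition. Because $\Ker P_{\lambda}(x)=\bigoplus_{i}\Ker(P_{\lambda}|_{V_{i}})$, and a regular $\lambda$ contributes $0$ on each Jordan block and the line $\ell_{i}(\lambda)$ on each Kronecker block $K_{i}$, the space $L$ splits along the blocks: $L\cap V_{i}=0$ on Jordan blocks and $L\cap K_{i}=\sum_{\lambda\ \mathrm{regular}}\ell_{i}(\lambda)$ on Kronecker blocks. A Vandermonde argument shows that $k_{i}+1$ of the vectors $p_{0}+\lambda p_{1}+\dots+\lambda^{k_{i}}p_{k_{i}}$ with distinct $\lambda$ are independent and span $S_{i}$; as there are infinitely many regular $\lambda$, this gives $L\cap K_{i}=S_{i}$, hence $L=\bigoplus_{\mathrm{Kron}}S_{i}$. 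Claim (1) is now immediate: each $S_{i}$ is isotropic for $P_{0}$ and $P_{\infty}$, hence for every $P_{\alpha}$, and distinct blocks are mutually orthogonal. For (2), inside a Kronecker block the elementary identity $\dim S_{i}^{\perp_{P_{\alpha}}}=\dim K_{i}-\dim S_{i}+\dim(S_{i}\cap\Ker P_{\alpha})$ together with $\Ker(P_{\alpha}|_{K_{i}})=\ell_{i}(\alpha)\subset S_{i}$ and the isotropy $S_{i}\subseteq S_{i}^{\perp_{P_{\alpha}}}$ forces $S_{i}^{\perp_{P_{\alpha}}}=S_{i}$ for every $\alpha$; on a Jordan block $(L\cap V_{i})^{\perp_{P_{\alpha}}}=V_{i}$ trivially; and since skew-orthogonal complements distribute over orthogonal direct sums, $L^{\perp}=\bigoplus_{\mathrm{Jordan}}V_{i}\ \oplus\ \bigoplus_{\mathrm{Kron}}S_{i}$, independently of $\alpha\in\RP$. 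For (3), this description gives $L^{\perp}/L\cong\bigoplus_{\mathrm{Jordan}}V_{i}$ with induced form the direct sum of the $P_{\lambda}|_{V_{i}}$, which for regular $\lambda$ (so $\lambda\neq\mu_{i}$ for all $i$) is non-degenerate.

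The remaining statements are a matter of counting. In (4), $\sum_{i=1}^{k}\Ker P_{\alpha_{i}}\subseteq L$ is clear from the definition of $L$, and the reverse inclusion is checked block by block: nothing on Jordan blocks, and on a Kronecker block $K_{j}$ the $k\geq\dim L\geq\dim S_{j}=k_{j}+1$ distinct lines $\ell_{j}(\alpha_{i})$ span $S_{j}$ by the same Vandermonde argument. In (5), $\Ker P_{\lambda}(x)\cap L=\bigoplus_{i}\bigl(\Ker(P_{\lambda}|_{V_{i}})\cap(L\cap V_{i})\bigr)$, which is $0$ on Jordan blocks and equals $\ell_{i}(\lambda)\cap S_{i}=\ell_{i}(\lambda)$, one-dimensional, on each Kronecker block; so the total is the number of Kronecker blocks, namely $\corank\Pi(x)$. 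For (6) one runs the same computation over $\Complex$: real Kronecker blocks stay Kronecker after complexification, $L\otimes\Complex=\bigoplus_{\mathrm{Kron}}S_{i}\otimes\Complex$ still meets every complex Jordan block in $0$, and for non-real $\lambda$ the kernel of $P_{\lambda}$ in each Kronecker block is a line inside $S_{i}\otimes\Complex$; the count is again $\corank\Pi(x)$.

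The hard part will be the equality $L\cap K_{i}=S_{i}$, i.e.\ that the kernel lines of the regular brackets already exhaust the maximal isotropic subspace of each Kronecker block; this is exactly where the Vandermonde non-degeneracy is used, and parts (2)--(6) all rest on it. Everything else is bookkeeping with the orthogonal block decomposition and the elementary facts about single blocks recalled from Appendix~A.
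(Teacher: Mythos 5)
Your proposal is correct and follows exactly the route the paper intends: the paper gives no detailed proof of Proposition \ref{PropertiesOfL}, stating only that all items ``can be easily deduced from the Jordan-Kronecker theorem (see Appendix A),'' and your block-by-block reduction, with the Vandermonde argument identifying $L\cap K_i$ with the maximal isotropic subspace $S_i$ of each Kronecker block, is precisely that deduction carried out. All six items check out, including the dimension count $\dim S_i^{\perp_{P_\alpha}}=\dim K_i-\dim S_i+\dim(S_i\cap\Ker P_\alpha)$ used for item~2 and the separate treatment of complex $\lambda$ in item~6.
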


\begin{consequence}\label{dFisL}
	Let $\F$ be the system of functions defined in Section \ref{ihs}. Then $\diff \F = L$.
\end{consequence}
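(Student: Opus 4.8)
The plan is to identify $\diff\F(x)$ explicitly as a sum of kernels of regular brackets of the pencil and then apply Proposition \ref{PropertiesOfL}. Recall that $\F = \F_{\alpha,\eps}$ is generated, as a vector subspace of $\Cont^\infty(U(x))$, by the local Casimir functions of all brackets $P_\nu$ with $|\nu - \alpha| < \eps$; by construction each such $P_\nu$ has locally constant rank $\rank\Pi$ near $x$ and admits exactly $k = \corank\Pi$ functionally independent local Casimirs in $U(x)$. The first step is the observation that the differentials at $x$ of the Casimirs of a single such $P_\nu$ span precisely $\Ker P_\nu(x)$: they clearly lie in $\Ker P_\nu(x)$, and since there are $k$ independent ones while $\dim\Ker P_\nu(x) = \dim M - \rank\Pi = \corank\Pi = k$, the span is all of $\Ker P_\nu(x)$. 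Summing over $\nu$ in the interval $(\alpha-\eps,\alpha+\eps)$ gives
$$
\diff\F(x) = \sum_{|\nu - \alpha| < \eps} \Ker P_\nu(x).
$$

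Next, the inclusion $\diff\F(x) \subseteq L$ is immediate: every $\nu$ with $|\nu-\alpha|<\eps$ satisfies $\rank P_\nu(x) = \rank\Pi$, hence $\nu \in \RP\setminus\Lambda(x)$, so $\Ker P_\nu(x)$ is one of the summands in the definition of $L$.

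For the reverse inclusion I would invoke Proposition \ref{PropertiesOfL}(4). Since $L$ is a subspace of the finite-dimensional space $\T^*_x M$, fix $N \geq \dim L$ and choose $N$ pairwise distinct values $\nu_1,\dots,\nu_N$ inside $(\alpha-\eps,\alpha+\eps)$; all of them lie in $\RP\setminus\Lambda(x)$ by the choice of $U(x)$ and $\eps$. Proposition \ref{PropertiesOfL}(4) then yields $\sum_{i=1}^{N} \Ker P_{\nu_i}(x) = L$, and the left-hand side is contained in $\diff\F(x)$ by the formula above. Hence $L \subseteq \diff\F(x)$, and together with the previous paragraph this gives $\diff\F(x) = L$.

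I do not expect a genuine obstacle here; the only points requiring care are the dimension count showing that $k = \corank\Pi$ independent Casimirs of a constant-rank bracket span the full kernel $\Ker P_\nu(x)$, and the (trivial) remark that the $\eps$-neighborhood of $\alpha$ contains arbitrarily many distinct regular parameters, so that Proposition \ref{PropertiesOfL}(4) is applicable.
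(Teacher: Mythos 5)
Your proposal is correct and follows essentially the same route as the paper: express $\diff\F(x)$ as the sum of the kernels $\Ker P_\nu(x)$ over the (infinitely many) regular parameters $\nu$ near $\alpha$, observe the inclusion in $L$, and invoke item 4 of Proposition \ref{PropertiesOfL} for the reverse inclusion. The only difference is that you spell out the dimension count showing the Casimir differentials span each kernel, which the paper leaves implicit.
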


\begin{proof}
	By construction, $\F$ is generated by local Casimir functions of an infinite number of regular brackets. So,
\begin{align}\label{dfsum}
\diff \F = \sum\limits_{\alpha \in I} \Ker P_{\alpha},
\end{align}
where $I \subset \RP \setminus \Lambda$ is an infinite set. Consequently, $\diff \F \subset L$. On the other hand, by item 4 of Proposition \ref{PropertiesOfL}, any $k=\dim L$ summands of (\ref{dfsum}) generate $L$, so $\diff \F = L$.
\end{proof}

	Since $L$ is isotropic w.r.t. $P_{\beta}$,  the form  $P_\beta$ is well defined on $L^\bot /L$. Moreover, if $\beta$ is regular, i.e., $\beta\notin \Lambda$, then $L$ contains the kernel of $P_\beta$ and therefore $P_\beta$ on $L^\bot /L$ is non-degenerate.    Thus, the recursion operator
	$$R_{\alpha}^{\beta} = P_{\beta}^{-1}P_{\alpha} \colon L^\bot/L \to L^\bot/L$$
	 is well-defined.

\begin{statement}[Properties of recursion operators]\label{PropertiesOfR}
	\quad\par
	\begin{enumerate}
	\item For any $\alpha,\gamma$ and regular $\beta, \delta$ we can find constants $a,b,c,d$ such that
	\begin{align*}
		R_{\alpha}^{\beta}  = (aR_{\gamma}^{\delta} + b \Id)^{-1} (cR_{\gamma}^{\delta} + d\Id).
	\end{align*}
	Consequently, the operators $R_{\alpha}^{\beta}$ and $R_{\gamma}^{\delta}$ commute and have common eigenspaces. If one of the recursion operators is diagonalizable, then all of them are diagonalizable.
	\item Let $P_\infty\in\Pi$ be regular at $x$. Then the spectrum of the recursion operator $R_0^\infty$ is minus the spectrum of the pencil, i.e.
	$$
	 \sigma(R_0^\infty) = \{ -\lambda\}_{\lambda \in \Lambda}.
	$$
	The $\lambda$-eigenspace of $R_0^\infty$ is
	$$
		\Ker (R_0^\infty - \lambda \Id) = \Ker \left(P_{-\lambda}\mid_{L^\bot / L}\right).
	$$
	\item The eigenspaces of a recursion operator are pairwise orthogonal with respect to all brackets of the pencil.
	\item A pencil is diagonalizable at the point $x$ (in the sense of Definition \ref{diagpencil}) if and only if the recursion operators are $\Complex$-diagonalizable.
	\end{enumerate}
\end{statement}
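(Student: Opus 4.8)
The plan is to trace every recursion operator back to a single operator on $L^\bot/L$, invoking the Jordan--Kronecker normal form only where it is genuinely needed (item 4); items 1--3 then have short direct proofs. Fix a regular value $\delta\in\RP$, so that $P_\delta$ is nondegenerate on $L^\bot/L$ by Proposition~\ref{PropertiesOfL}(3), and put $W=P_\delta^{-1}P_\infty$ on $L^\bot/L$. From $P_\alpha-P_\beta=(\alpha-\beta)P_\infty$ one reads off $R_\alpha^\delta=\Id+(\alpha-\delta)W$ and $R_\alpha^\beta=(R_\beta^\delta)^{-1}R_\alpha^\delta$, hence
$$
	R_\alpha^\beta=\bigl(\Id+(\beta-\delta)W\bigr)^{-1}\bigl(\Id+(\alpha-\delta)W\bigr)
$$
for every $\alpha$ and every regular $\beta$; note that $\Id+(\beta-\delta)W=R_\beta^\delta=P_\delta^{-1}P_\beta$ is automatically invertible for regular $\beta$. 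So every recursion operator is a fractional-linear function of the one operator $W$. Item~1 is then immediate: a composition of fractional-linear substitutions is again one, which gives the stated form $R_\alpha^\beta=(aR_\gamma^\delta+b\Id)^{-1}(cR_\gamma^\delta+d\Id)$ (assuming $\gamma\ne\delta$, else $R_\gamma^\delta=\Id$ carries no information); functions of a single operator commute and share generalized eigenspaces; and since $R_\gamma^\delta=\Id+(\gamma-\delta)W$ recovers $W$ by an affine substitution while each $R_\alpha^\beta$ is obtained from $W$ by a substitution nonsingular on $\sigma(W)$, diagonalizability passes freely among $W$ and all the $R_\alpha^\beta$ (over $\Complex$).

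For item~2, with $P_\infty$ regular, injectivity of $P_\infty^{-1}$ gives, for a scalar $\mu$,
$$
	\Ker(R_0^\infty-\mu\Id)=\Ker\bigl(P_\infty^{-1}(P_0-\mu P_\infty)\bigr)=\Ker\bigl(P_{-\mu}\mid_{L^\bot/L}\bigr),
$$
so $\mu\in\sigma(R_0^\infty)$ iff $P_{-\mu}$ is degenerate on $L^\bot/L$. By Proposition~\ref{PropertiesOfL}(3) this is impossible when $-\mu$ is regular; and when $-\mu\in\Lambda$ it does occur, because by Proposition~\ref{PropertiesOfL}(5) one can choose $\eta\in\Ker P_{-\mu}(x)\setminus L$, and this $\eta$ actually lies in $L^\bot$ (for regular $\nu$, writing $P_\nu=P_{-\mu}+(\nu+\mu)P_\infty$ and using $\eta\in\Ker P_{-\mu}$, $\nu+\mu\ne0$ forces $P_\infty(\eta,\Ker P_\nu)=0$, whence $P_\alpha(\eta,L)=0$), so the class of $\eta$ is a nonzero element of $\Ker(P_{-\mu}\mid_{L^\bot/L})$. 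This yields $\sigma(R_0^\infty)=\{-\lambda:\lambda\in\Lambda\}$ together with the displayed eigenspace formula; non-real values of $\Lambda$ are treated identically over $\Complex$ via Proposition~\ref{PropertiesOfL}(6).

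Item~3 I would prove for $R=R_0^\infty$, which suffices by item~1 since all recursion operators have the same eigenspaces. The operator $R$ is $P_\mu$-symmetric for every $\mu$: by skew-symmetry $P_\infty(Rv,w)=\langle P_0v,w\rangle=P_0(v,w)=P_\infty(v,Rw)$, and likewise with $P_0$ in place of $P_\infty$, hence for $P_\mu=P_0+\mu P_\infty$. Then if $Rv=\lambda v$, $Rw=\lambda'w$ with $\lambda\ne\lambda'$, the chain $\lambda P_\mu(v,w)=P_\mu(Rv,w)=P_\mu(v,Rw)=\lambda'P_\mu(v,w)$ forces $P_\mu(v,w)=0$; replacing eigenvectors by root vectors in the same computation extends orthogonality to the generalized eigenspaces.

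Finally item~4. Here I would pass to the Jordan--Kronecker normal form of the pair $(P_0,P_\infty)$ at $x$ (Appendix~A): restricted to $L^\bot/L$ every regular bracket is nondegenerate, so there are no Kronecker blocks and the pencil is a direct sum of Jordan blocks $J_{2k}(\lambda)$. A one-block computation — the $(k,1)$-entry of the triangular block representing $P_\alpha$ vanishes exactly when $k\ge2$ — shows that for $\lambda\in\Lambda$ and any $\alpha\ne\lambda$ one has $\dim\Ker\bigl(P_\alpha(x)\mid_{\Ker P_\lambda(x)}\bigr)=\corank\Pi(x)+2\cdot\#\{J_{2k}(\lambda)\text{-blocks with }k\ge2\}$. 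Hence the equality in Definition~\ref{diagpencil} holds at $\lambda$ iff every Jordan block at $\lambda$ is the minimal one $J_2(\lambda)$, which is precisely diagonalizability of the recursion operator on its $(-\lambda)$-root space; imposing this at all $\lambda\in\Lambda$ is exactly $\Complex$-diagonalizability of the recursion operators, by item~1. The genuinely delicate point, and the natural candidate for ``the hard part'', is this last computation: $\Ker\bigl(P_\alpha\mid_{\Ker P_\lambda}\bigr)$ is the radical of a \emph{restricted} form, not $\Ker P_\alpha\cap\Ker P_\lambda$, so one must carefully track the contribution of each Kronecker and Jordan block to it, and routing the argument through the normal form is what keeps this bookkeeping under control.
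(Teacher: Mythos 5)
Your proof is correct. The paper itself gives essentially no argument here --- Proposition \ref{PropertiesOfR} is dispatched with ``The proof is straightforward,'' the intended route being signalled at the top of Section \ref{zeroOrderTheory}: put the pair $(P_0(x),P_\infty(x))$ into Jordan--Kronecker normal form and read all four items off the blocks. You take that route only where it is genuinely needed. For items 1--3 you work coordinate-freely: the identity $R_\alpha^\delta=\Id+(\alpha-\delta)P_\delta^{-1}P_\infty$ makes every recursion operator a M\"obius function of a single operator $W$, which yields item 1 (your caveat $\gamma\ne\delta$ is a real, if minor, imprecision in the statement as printed); the computation $\Ker(R_0^\infty-\mu\Id)=\Ker\bigl(P_{-\mu}\mid_{L^\bot/L}\bigr)$ together with Proposition \ref{PropertiesOfL}(3),(5) gives item 2, and you correctly handle the one non-obvious point, namely that an element of $\Ker P_{-\mu}\setminus L$ automatically lies in $L^\bot$ and hence witnesses degeneracy on the quotient; and $P_\mu$-self-adjointness of $R_0^\infty$ gives item 3, in fact for generalized eigenspaces, which is what the rest of the paper actually uses. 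Only item 4 goes through the normal form, and that is where it earns its keep: Definition \ref{diagpencil} concerns the radical of $P_\alpha$ restricted to $\Ker P_\lambda(x)$ inside the full cotangent space (Kronecker blocks included), and relating that to diagonalizability of $R_0^\infty$ on $L^\bot/L$ requires exactly the block-by-block count you perform; your formula $\corank\Pi(x)+2\cdot\#\{J_{2k}(\lambda)\text{-blocks with }k\ge 2\}$ is right, since the $(k,1)$ entry of $J(\lambda)-\alpha\mathrm{E}$ is $\lambda-\alpha$ for $k=1$ and $0$ for $k\ge 2$. The trade-off between the two approaches is the usual one: the all-normal-form route is shorter once the Jordan--Kronecker theorem is granted, while your version isolates which claims are elementary consequences of $P_\alpha-P_\beta=(\alpha-\beta)P_\infty$ and which truly need the normal form.
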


The proof is straightforward.


\subsection{Operator $\D_{f}P$ and non-degeneracy}\label{firstOrderTheory}

Since in our paper we deal with Poisson brackets, it will be convenient to reformulate the standard  definition of non-degenerate singularities (Definition \ref{nd}), given in terms of the symplectic form, in the dual language of Poisson structures. To that end, in this section we replace linearizations of Hamiltonian vector fields by dual operators acting on the cotangent space and study properties of such operators.

 Let $P$ be a Poisson bracket on $M$, $x \in M$, and $f$ be a function such that $\diff f(x) \in \Ker P(x)$.
	Define $\D_{f}P\colon \T^{*}_{x}M \to \T^{*}_{x}M$ by the following formula
	$$
		 \D_{f}P(\xi)  = \diff \{ f, g \}(x),
	$$
	 where $g$ is an arbitrary function such that $\diff g(x) = \xi$.

	\begin{statement}[Properties of $\D_f P$]\label{dfpProperties}
	\quad\par
	\begin{enumerate}
		\item In local coordinates $\D_f P$ reads $$ (\D_{f}P(\xi))_{k}  = \diffFX{P^{ij}}{x^{k}} \diffFX{f}{x^i} \xi_{j} + P^{ij} \diffFXY{f}{x^{i}}{x^{k}} \xi_{j},$$ and therefore $\D_{f}P(\xi)$ does not depend on the choice of $g$.
		\item $\D_{f}P$ is dual to the linearization of the vector field $\sgrad f$ at the point $x$.
		\item $\D_{f}P$ is skew-symmetric with respect to the form $P$, i.e.
$$
P(\D_{f}P(\xi), \eta) + P(\xi, \D_{f}P(\eta)) = 0.
$$
		\item $\Ker P(x)$ is invariant with respect to $\D_{f}P$.
		\item Let $\xi \in \Ker P(x)$. Then
		 $$
		 	 \D_{f}P(\xi)  = [\diff f(x), \xi],
		$$
		where $[~,~]$ is the bracket on $\Ker P(x)$ viewed as the linearization of $P$ at $x$ (see Section \ref{linpen}). Equivalently, $\D_f P\mid_{\Ker P(x)} = \mathrm{ad}\,{\diff f(x)}$.
		\item If $\rank P(x) = \rank P$, then $ \D_{f}P$ vanishes on $\Ker P(x)$.
				\item If $\diff f(x)=0$, then $\D_f P(\xi) = \diff^2f(P(\xi))$.
	\end{enumerate}
	\end{statement}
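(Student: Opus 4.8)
The plan is to reduce every item to the coordinate expression of the bracket, $\{f,g\}=P^{ij}\,\partial_i f\,\partial_j g$, together with the single standing hypothesis $\diff f(x)\in\Ker P(x)$, which in coordinates says $P^{ij}(x)\,\partial_i f(x)=0$ for all $j$. First I would settle the three computational items 1, 2 and 7. Differentiating the bracket gives $\partial_k\{f,g\}=(\partial_k P^{ij})\,\partial_i f\,\partial_j g+P^{ij}\,(\partial_{ki}f)\,\partial_j g+P^{ij}\,(\partial_i f)\,(\partial_{kj}g)$; at the point $x$ the last term dies because $P^{ij}(x)\,\partial_i f(x)=0$, while the first two involve $g$ only through $\partial_j g(x)=\xi_j$, which yields item 1 and the independence of the choice of $g$. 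For item 2 I would write $(\sgrad f)^j=P^{ij}\,\partial_i f$, linearize this vector field at $x$ (which is a zero of it since $\diff f(x)\in\Ker P(x)$), transpose the resulting matrix, and match it entry by entry with the formula of item 1. For item 7 I would substitute $\partial_i f(x)=0$ into that formula: the first term disappears and what is left is exactly $\diff^2 f(x)$ applied to the vector $P(x)\xi$.

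Next I would obtain the skew-symmetry of item 3 from the Jacobi identity. Representing $\xi=\diff g(x)$ and $\eta=\diff h(x)$, the two terms read $P(\D_{f}P(\xi),\eta)=\{\{f,g\},h\}(x)$ and $P(\xi,\D_{f}P(\eta))=\{g,\{f,h\}\}(x)$, so by the derivation form of Jacobi, $\{\{f,g\},h\}+\{g,\{f,h\}\}=\{f,\{g,h\}\}$, their sum equals $\{f,\{g,h\}\}(x)=P(x)\big(\diff f(x),\diff\{g,h\}(x)\big)$, which vanishes because $\diff f(x)\in\Ker P(x)$.

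The remaining items are then formal. For item 4, if $\xi\in\Ker P(x)$ then by item 3 we get $P(\D_{f}P(\xi),\eta)=-P(\xi,\D_{f}P(\eta))=0$ for every $\eta$, hence $\D_{f}P(\xi)\in\Ker P(x)$. Item 5 is then immediate: since $\diff f(x)\in\Ker P(x)$ by hypothesis, for any $g$ with $\diff g(x)=\xi$ we have $\D_{f}P(\xi)=\diff\{f,g\}(x)$, which is exactly $[\diff f(x),\xi]$ for the bracket on $\Ker P(x)$ introduced in Section \ref{linpen}; equivalently $\D_{f}P\mid_{\Ker P(x)}=\ad_{\diff f(x)}$. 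Finally item 6 follows from item 5 and the earlier Proposition (Section \ref{linpen}) that $\Ker P(x)$ is an Abelian Lie algebra whenever $\rank P(x)=\rank P$: then $\ad_{\diff f(x)}$ acts by zero on $\Ker P(x)$.

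I do not expect a genuine obstacle here. The one spot requiring attention is item 2, where the sign conventions for $\sgrad f=\omega^{-1}\diff f$ and for the identification of the Poisson tensor with a bivector must be reconciled so that ``dual to the linearization'' comes out with the right sign rather than off by a minus. A minor secondary point is to check in item 1 that the term $P^{ij}(\partial_i f)(\partial_{kj}g)$ is annihilated precisely by the hypothesis $\diff f(x)\in\Ker P(x)$, which is exactly why that hypothesis is imposed in the statement.
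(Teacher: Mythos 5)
Your proof is correct and is exactly the ``straightforward'' verification the paper omits: item 1 by differentiating $\{f,g\}=P^{ij}\partial_i f\,\partial_j g$ and killing the $P^{ij}\partial_i f\,\partial_{kj}g$ term via $\diff f(x)\in\Ker P(x)$, items 2 and 7 by specializing that formula, item 3 from the Jacobi identity plus $\diff f(x)\in\Ker P(x)$, and items 4--6 as formal consequences (with item 6 using the earlier proposition that $\Ker P(x)$ is Abelian at regular points). No gaps.
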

	The proof is straightforward.

 If we consider $P$ as a map $P:\T^*_x M \to  \T_x O(x)$, then
 items 2, 3 and 4 of Proposition \ref{dfpProperties} imply  the commutativity of the diagram
\begin{align}\label{commdiag}
       				 \begin{CD}
           			 	\T^*_x M @> \D_{f}P >> \T^*_x M\\
            				@VV P V  @VV P V \\
          				 \T_x O(x) @> A_{f} >>\T_xO(x)
       			 \end{CD}
   			 \end{align}
where $A_f$ is the linearization of $\sgrad f$ at $x\in M$.

This simple observation allows us to reformulate Definition \ref{nd} of a non-degenerate point in terms of operators $\D_{f}P$. We simply need to modify the scheme preceding Definition \ref{nd}  by replacing $A_f$ with $\D_{f}P$ and passing to the dual objects.

Let $\F\subset \Cont^\infty (M)$ be a commutative subalgebra, which defines an integrable system on the symplectic leaf $O(x)$ passing through $x\in M$.  For simplicity, we assume that $O(x)$ is regular, i.e., $\rank P(x)=\rank P$.

Consider the subspace
$$
V=\diff \F(x)=\{ \diff f(x)\}_{f\in\F} \subset \T^*_xM.
$$
Since $O(x)$ is regular then, without loss of generality, we may assume that $\F$ contains (local) Casimirs of $P$ so that $\Ker P \subset V$.  If not, we simply set $V=\diff \F(x)+\Ker P$ and then the construction below can also  be applied to singular symplectic leaves.

Consider the skew-orthogonal complement to $V$ with respect to $P$, i.e. the subspace
$V^\bot = \{ \xi \in \T^*_x M ~|~  P(\xi, V)=0\}$. Since $\F$ is commutative,  then $V$ is isotropic and $V\subset V^\bot$. Moreover, since $V$ contains $\Ker P$, the quotient $V^{\bot}/V$ carries a natural symplectic form induced by $P$.
\newpage
Similarly, set
$$
W = \{\sgrad f(x) = P\diff f(x)\}_{f \in \F} \subset T_xO(x),
$$
and let $W^\bot$ be the skew-orthogonal complement to $W$ with respect to the symplectic form $\omega$ on $O(x)$ induced by $P$. As $\F$ is commutative, then $W$ is isotropic and the space $W^\bot / W$ is symplectic with respect to $\omega$.

Now let  $f \in \F$ and $\diff f(x) \in \Ker P$. Since $\F$ is commutative, $\D_fP$ vanishes on $V$, and item 3 of Proposition \ref{dfpProperties} implies that $V^\bot$ is invariant under  $\D_fP$.  Consequently, the operator $\D_fP$ is well-defined on the space $V^{\bot} / V$. Similarly, the linearization $A_f$ is a well-defined operator on $W^{\bot}/W$.

	\begin{statement}\label{isom}
		 Consider the map $P: \T^{*}_{x}M \rightarrow \T_{x} O(x)$. The following is true:
		 \begin{enumerate}
		 	\item  $\omega(P(\xi), P(\eta)) = P(\xi, \eta), \  \xi,\eta\in \T^*_x M.$
		 	\item $P(V) = W$.
			 \item $P(V^{\bot}) = W^{\bot}$.
             \item $P$ induces an isomorphism between the symplectic spaces $(V^\bot/V,P)$ and $(W^\bot/W,\omega)$.
			 \item Let  $f \in \F$, $\diff f(x) \in \Ker P$. Then the following diagram is commutative:
			     \begin{align*}
       				 \begin{CD}
           			 	V^{\bot} / V @> \D_{f}P >> V^{\bot} / V\\
            				@VV P V  @VV P V \\
          				 W^\bot / W @> A_{f} >>W^\bot / W
       			 \end{CD}
   			 \end{align*}
		 \end{enumerate}
	\end{statement}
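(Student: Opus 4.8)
The plan is to verify each of the five claims in order, using the coordinate description of $P$ and the elementary linear algebra of a single skew form together with the observations already assembled in Proposition \ref{dfpProperties}.

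\textbf{Items 1--4.} For item 1, I would simply unwind definitions: $\omega$ on $O(x)$ is by construction the pushforward of $P$ under the map $\xi\mapsto P\xi$, so $\omega(P\xi,P\eta)=P(\xi,\eta)$ is essentially a tautology (one checks that the induced form is well defined precisely because $\Ker P$ is the radical of $P$). For item 2, $W=\{\sgrad f(x)\}_{f\in\F}=\{P\,\diff f(x)\}_{f\in\F}=P(V)$ is immediate from the definitions of $V$ and $W$. For item 3, I would argue by double inclusion: if $\xi\in V^\bot$ then for all $\eta\in V$, $\omega(P\xi,P\eta)=P(\xi,\eta)=0$ by item 1, and since $W=P(V)$ this gives $P\xi\in W^\bot$, so $P(V^\bot)\subset W^\bot$; conversely, given $w\in W^\bot$ choose $\xi\in\T_x^*M$ with $P\xi=w$ (possible since $P:\T_x^*M\to\T_xO(x)$ is onto), then $P(\xi,\eta)=\omega(w,P\eta)=0$ for all $\eta\in V$, so $\xi\in V^\bot$ and $w=P\xi\in P(V^\bot)$. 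For item 4, the map $P$ sends $V$ onto $W$ and $V^\bot$ onto $W^\bot$, hence induces a linear surjection $V^\bot/V\to W^\bot/W$; injectivity follows because $\Ker P\subset V$, so any $\xi\in V^\bot$ with $P\xi\in W=P(V)$ satisfies $\xi\in V+\Ker P=V$. Finally, this induced map intertwines the two symplectic forms by item 1, so it is a symplectomorphism.

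\textbf{Item 5.} Here the key input is the commutative diagram \eqref{commdiag} from Proposition \ref{dfpProperties} (items 2--4), which already says $P\circ\D_fP = A_f\circ P$ as maps $\T_x^*M\to\T_xO(x)$. What remains is to check that this descends to the quotients $V^\bot/V$ and $W^\bot/W$: one needs that $\D_fP$ preserves $V$ and $V^\bot$ (commutativity of $\F$ gives $\D_fP(V)=0$ since $\D_fP|_{\Ker P}=\ad\,\diff f$ vanishes by item 6 of Proposition \ref{dfpProperties} on $\Ker P$, and more generally $\D_fP\,\diff g=\diff\{f,g\}=0$ for $g\in\F$; invariance of $V^\bot$ is item 3 of Proposition \ref{dfpProperties}), and likewise that $A_f$ preserves $W$ and $W^\bot$ (standard, since the flow of $\sgrad f$ preserves $\omega$ and fixes $W$ pointwise to first order). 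Granting this, the descended diagram commutes because the original one does, and the vertical arrows are the isomorphism of item 4.

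\textbf{Main obstacle.} None of the steps is deep; the only thing requiring care is the bookkeeping of \emph{well-definedness on the quotients}—making sure that each operator genuinely preserves both the subspace we quotient by and its skew-orthogonal complement, and that the induced forms are non-degenerate (which is exactly where the hypothesis $\Ker P\subset V$, i.e. that $\F$ contains local Casimirs of $P$, or the modification $V=\diff\F(x)+\Ker P$, gets used). I would state these reductions explicitly and then note that everything else is a direct consequence of Proposition \ref{dfpProperties} and diagram \eqref{commdiag}.
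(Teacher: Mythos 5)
Your proposal is correct and follows exactly the route the paper intends: the paper's own "proof" is the single remark that the statement follows from Proposition \ref{dfpProperties} and is a reduction of the commutative diagram \eqref{commdiag}, and your write-up supplies precisely those details (the tautological item 1, the double-inclusion for item 3, injectivity on the quotient via $\Ker P\subset V$, and the descent of \eqref{commdiag} using the invariance of $V$, $V^\bot$, $W$, $W^\bot$ already noted in the paragraph preceding the proposition). No gaps.
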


The proof easily follows from Proposition \ref{dfpProperties} and can be understood as a reduction of the commutative diagram \eqref{commdiag}.

    \begin{consequence}\label{precriterion}
        A singular point $x$ of $\F$  is non-degenerate on the regular symplectic leaf $O(x)$ of $P$ passing through $x$ if and only if the set of operators
        $$\D_\F=\left\{ \D_{f}P\mid_{V^\bot/V}  \right\}_{f \in \F, \diff f \in \Ker P}$$
         is a Cartan subalgebra of $\sP( V^{\bot} / V, P)$. The type of the point $x$ coincides with the type of this Cartan subalgebra.
     \end{consequence}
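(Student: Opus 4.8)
The plan is to deduce the corollary directly from Proposition \ref{isom}, which shows that the Poisson map $P$ intertwines the ``Poisson-side'' picture on $V^\bot/V$ with the ``symplectic-side'' picture on $W^\bot/W$ used in Definition \ref{nd}. First I recall that, by Definition \ref{nd}, the point $x$ is a non-degenerate singular point of $\F\mid_{O(x)}$ exactly when $A_\F=\{A_f\}$ (with $f\in\F$ and the differential of $f\mid_{O(x)}$ vanishing at $x$) is a Cartan subalgebra of $\sP(W^\bot/W,\omega)$, and that the type of $x$ is by definition the type of this Cartan subalgebra. So it suffices to transport the data $(\D_\F,\ \sP(V^\bot/V,P))$ onto the data $(A_\F,\ \sP(W^\bot/W,\omega))$ by a symplectic isomorphism.

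The first point to check is that the two index sets agree: for $f\in\F$, the restriction of $f$ to $O(x)$ has vanishing differential at $x$ if and only if $\diff f(x)$ annihilates $\T_xO(x)=\Imm P(x)$, which, by skew-symmetry of $P(x)$, is equivalent to $\diff f(x)\in\Ker P(x)$. Hence $\D_\F$ and $A_\F$ are parametrised by the same functions. Next, as observed just before Proposition \ref{isom}, commutativity of $\F$ together with item 3 of Proposition \ref{dfpProperties} shows that each $\D_{f}P$ kills $V$ and preserves $V^\bot$, so $\D_{f}P\mid_{V^\bot/V}$ is well defined and, being skew-symmetric with respect to $P$, descends to an element of $\sP(V^\bot/V,P)$. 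Finally, items 2--4 of Proposition \ref{isom} give $P(V)=W$, $P(V^\bot)=W^\bot$, and assert that $P$ descends to a symplectic isomorphism $\bar P\colon(V^\bot/V,P)\xrightarrow{\ \sim\ }(W^\bot/W,\omega)$.

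A symplectic isomorphism $\bar P$ induces a Lie-algebra isomorphism $\sP(V^\bot/V,P)\to\sP(W^\bot/W,\omega)$ by $B\mapsto\bar PB\bar P^{-1}$. By item 5 of Proposition \ref{isom} this isomorphism sends $\D_{f}P\mid_{V^\bot/V}$ to $A_f\mid_{W^\bot/W}$, hence sends the set $\D_\F$ onto $A_\F$; in particular $\D_\F$ is a commutative subalgebra of $\sP(V^\bot/V,P)$. Since a Lie-algebra isomorphism maps Cartan subalgebras onto Cartan subalgebras, $\D_\F$ is a Cartan subalgebra of $\sP(V^\bot/V,P)$ if and only if $A_\F$ is a Cartan subalgebra of $\sP(W^\bot/W,\omega)$, i.e. if and only if $x$ is non-degenerate in the sense of Definition \ref{nd}. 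Moreover, because $\bar P$ is symplectic, conjugation by $\bar P$ preserves the characteristic eigenvalue pattern of elements of $\sP$, so a generic element of $\D_\F$ has the same eigenvalues as the corresponding element of $A_\F$; hence the Williamson type of the Cartan subalgebra $\D_\F$ coincides with that of $A_\F$, which is the type of $x$.

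I do not expect a serious obstacle: essentially all the work is already contained in Propositions \ref{dfpProperties} and \ref{isom}, and what remains is only the bookkeeping identification of the conditions ``$\diff f(x)\in\Ker P(x)$'' and ``the differential of $f\mid_{O(x)}$ vanishes at $x$'', together with the remark that both ``being a Cartan subalgebra'' and ``Williamson type'' are invariant under the symplectic isomorphism $\bar P$.
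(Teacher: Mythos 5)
Your proposal is correct and follows essentially the same route as the paper: both reduce the claim to Definition \ref{nd} and then transport $\D_\F$ onto $A_\F$ via the symplectic isomorphism and commutative diagram of Proposition \ref{isom}. You merely spell out the bookkeeping (matching the index sets via $\sgrad f(x)=0 \Leftrightarrow \diff f(x)\in\Ker P(x)$, and the invariance of Cartan subalgebras and Williamson type under conjugation by a symplectic isomorphism) that the paper leaves implicit.
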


    \begin{proof}
    	According to Definition \ref{nd}, $x$ is non-degenerate if and only if the linearizations $A_f$ of the Hamiltonian vector fields $\sgrad f$, where $f \in \F$ and $\sgrad f(x)=0$,  span a Cartan subalgebra in $\sP(W^\bot / W, \omega)$. Now we apply the isomorphism and commutative diagram constructed above in Proposition \ref{isom}.
    \end{proof}


\subsection{Operator $\D_{f}P$ and bi-Hamiltonian systems}

In this section, we consider the Poisson pencil $\Pi=\{P_\lambda\}$ and the corresponding family of commuting functions $\F$ from Section \ref{ihs}.  Let   $O(x,\alpha)$ be a regular symplectic leaf of $P_\alpha\in\Pi$, i.e. $\rank P_\alpha(x)=\rank \Pi$.   By  Corollary \ref{precriterion},  the non-degeneracy of $x$ can be formulated in terms of the commutative family of operators $\D_f P_\alpha$ for $f\in \F$.  Notice that in this case, the subspace $V=\diff F(x)$ coincides with $L$ defined in Section \ref{zeroOrderTheory}  (Proposition \ref{PropertiesOfL}) so that
\begin{equation}
\label{DF}
\D_\F=\left\{ \D_{f}P_\alpha\mid_{L^\bot/L}  \right\}_{f \in \F, \diff f \in \Ker P_\alpha}
         \subset \sP( L^{\bot} / L, P_\alpha).	
\end{equation}

In this section, we shall see that $\D_\F$ possesses some very special algebraic properties.

The following lemma allows us to rewrite the operator $\D_fP_\alpha, f \in \F$ as $\D_g P_\lambda$ for an appropriate function $g\in\F$. In particular, this implies that in \eqref{DF} the choice of $\alpha$ is not important (see Corollary \ref{dFPalphaInd}).

\begin{lemma}\label{biham}
        Let $\alpha_1, \dots, \alpha_k \in \Complex$, and  $f_{\alpha_i}$ be an arbitrary Casimir function of $P_{\alpha_i}$, $i=1,\dots, k$. Consider
        $$f = \sum\limits_{i=1}^{k} f_{\alpha_i} ,$$ and assume that $\diff f(x) \in \Ker P_\alpha(x)$. Let also $\beta \in \Complex \setminus \{\alpha_1, \dots, \alpha_k\}$ and consider the function
        $$
            g =  \sum\limits_{i=1}^{k} \frac{\alpha - \alpha_i}{\beta - \alpha_i} f_{\alpha_i}.
        $$
\newpage
        Then
        \begin{enumerate}
        \item $\diff g(x) \in \Ker P_\beta(x)$,
        \item $\D_fP_\alpha = \D_g P_\beta$.
        \end{enumerate}
\end{lemma}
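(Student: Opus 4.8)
The plan is to prove both statements by essentially the same computation, working directly with the local coordinate formula for $\D_f P$ from item 1 of Proposition \ref{dfpProperties}, or better, by a coordinate-free argument built on the compatibility of the brackets. First I would observe that it suffices to treat the case $k=1$: if $f_{\alpha_i}$ is a Casimir of $P_{\alpha_i}$ and we set $g_{\alpha_i} = \frac{\alpha-\alpha_i}{\beta-\alpha_i} f_{\alpha_i}$, then proving $\diff g_{\alpha_i}(x) \in \Ker P_\beta(x)$ and $\D_{f_{\alpha_i}} P_\alpha = \D_{g_{\alpha_i}} P_\beta$ for each $i$ individually gives the general claim by summing, since $\D$ is linear in its function argument and $\diff f(x) \in \Ker P_\alpha(x)$ forces each $\diff f_{\alpha_i}(x)$ to already lie in $\Ker P_{\alpha_i}(x)$ — wait, more carefully, $\diff f(x)\in\Ker P_\alpha(x)$ is a condition on the sum, so I should keep the sum but note that each $\D_{f_{\alpha_i}}P_\alpha$ can be handled separately and then recombined. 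Actually the cleanest route: prove the pointwise identity $\D_{f_{\alpha_i}} P_\alpha(x) = \D_{g_{\alpha_i}} P_\beta(x)$ as operators $\T_x^* M \to \T_x^* M$, valid whenever $f_{\alpha_i}$ is a Casimir of $P_{\alpha_i}$, without assuming anything about the sum; then summing over $i$ gives $\D_f P_\alpha = \D_g P_\beta$, and separately $\diff g(x) = \sum \frac{\alpha-\alpha_i}{\beta-\alpha_i}\diff f_{\alpha_i}(x) \in \Ker P_\beta(x)$ will follow from the same circle of identities.

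Next I would set up the key pointwise computation. Write $P_\lambda = P_0 + \lambda P_\infty$, so $P_\alpha - P_\beta = (\alpha-\beta) P_\infty$ and more usefully $P_\alpha = P_{\alpha_i} + (\alpha - \alpha_i) P_\infty$. Since $f_{\alpha_i}$ is a Casimir of $P_{\alpha_i}$, we have $P_{\alpha_i}\diff f_{\alpha_i} \equiv 0$ identically near $x$, hence $\D_{f_{\alpha_i}} P_{\alpha_i} = 0$ as an operator (here I use item 1 of Proposition \ref{dfpProperties}: $\D_{f}P$ depends only on $P$ and the $1$- and $2$-jets of $f$, and differentiating the identity $P_{\alpha_i}^{ij}\partial_i f_{\alpha_i} = 0$ gives exactly the vanishing of the coordinate expression for $\D_{f_{\alpha_i}}P_{\alpha_i}$). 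Because $\D_\bullet P$ is linear in $P$ for fixed $f$ — again clear from the coordinate formula, which is linear in $P^{ij}$ and its derivatives — we get
\begin{align*}
\D_{f_{\alpha_i}} P_\alpha &= \D_{f_{\alpha_i}} P_{\alpha_i} + (\alpha-\alpha_i)\, \D_{f_{\alpha_i}} P_\infty = (\alpha - \alpha_i)\, \D_{f_{\alpha_i}} P_\infty,
\end{align*}
and likewise, since $g_{\alpha_i}$ is proportional to $f_{\alpha_i}$ it is a Casimir of $P_{\alpha_i}$ as well, so
\begin{align*}
\D_{g_{\alpha_i}} P_\beta &= (\beta - \alpha_i)\,\D_{g_{\alpha_i}} P_\infty = (\beta-\alpha_i)\cdot\frac{\alpha-\alpha_i}{\beta-\alpha_i}\,\D_{f_{\alpha_i}} P_\infty = (\alpha-\alpha_i)\,\D_{f_{\alpha_i}} P_\infty.
\end{align*}
Comparing the two right-hand sides yields $\D_{f_{\alpha_i}} P_\alpha = \D_{g_{\alpha_i}} P_\beta$ pointwise, and summing over $i$ gives item 2.

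For item 1, I would argue as follows: the hypothesis $\diff f(x) \in \Ker P_\alpha(x)$ means $P_\alpha(x)\diff f(x) = 0$, i.e. $\sum_i (P_{\alpha_i}(x) + (\alpha-\alpha_i)P_\infty(x))\diff f_{\alpha_i}(x) = 0$; since each $P_{\alpha_i}(x)\diff f_{\alpha_i}(x) = 0$, this reduces to $P_\infty(x)\sum_i (\alpha-\alpha_i)\diff f_{\alpha_i}(x) = 0$. Then $P_\beta(x)\diff g(x) = \sum_i \frac{\alpha-\alpha_i}{\beta-\alpha_i}(P_{\alpha_i}(x)+(\beta-\alpha_i)P_\infty(x))\diff f_{\alpha_i}(x) = P_\infty(x)\sum_i (\alpha-\alpha_i)\diff f_{\alpha_i}(x) = 0$, so $\diff g(x)\in\Ker P_\beta(x)$. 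The main obstacle — really the only thing requiring care — is justifying the two structural facts about $\D_\bullet P$ that the argument leans on: linearity in $P$ for fixed $f$, and the vanishing $\D_h P_\lambda = 0$ whenever $h$ is a Casimir of $P_\lambda$. Both are immediate from the explicit local formula in Proposition \ref{dfpProperties}(1), but I would want to state them cleanly (perhaps as a one-line sublemma) since the rest of the proof is then just the bookkeeping above. I should also double-check that $\beta \notin \{\alpha_1,\dots,\alpha_k\}$ is exactly what makes the coefficients $\frac{\alpha-\alpha_i}{\beta-\alpha_i}$ well-defined, and note that complex values of the $\alpha_i$ cause no trouble in view of Remark \ref{complCotSpace}.
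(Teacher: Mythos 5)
Your proof is correct, and since the paper dismisses this lemma with ``The proof is a straightforward computation,'' your argument is precisely the computation being left to the reader: linearity of $\D_fP$ in both $f$ and $P$ (visible from the coordinate formula of Proposition \ref{dfpProperties}(1)), the identity $\D_{h}P_{\lambda}=0$ for $h$ a Casimir of $P_{\lambda}$ near $x$, and the resulting reduction of both sides to $\sum_i(\alpha-\alpha_i)\,\D_{f_{\alpha_i}}P_\infty$. The one point you rightly flag --- that the individual summands $\D_{f_{\alpha_i}}P_\infty$ must be read as the coordinate-formula expressions, since $\diff f_{\alpha_i}(x)$ need not lie in $\Ker P_\infty(x)$ --- is handled adequately by your appeal to that formula.
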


The proof is a straightforward computation.

\begin{consequence}\label{dFPalphaInd}
	Let $f \in \F$ and $\alpha, \beta \in \CP$. Then there exists $g \in \F$ such that $\D_fP_\alpha = \D_g P_\beta$.
\end{consequence}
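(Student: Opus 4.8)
The plan is to obtain this corollary directly from Lemma \ref{biham}. Recall that, by its construction in Section \ref{ihs}, the family $\F$ is the vector subspace of $\Cont^\infty(U(x))$ spanned by local Casimir functions of regular brackets of $\Pi$. Hence any $f\in\F$ can be written as a \emph{finite} sum
$$
f=\sum_{i=1}^{k}f_{\alpha_i},
$$
where each $f_{\alpha_i}$ is a local Casimir of a regular bracket $P_{\alpha_i}$; after grouping terms we may assume the $\alpha_i$ pairwise distinct. (As usual, writing $\D_fP_\alpha$ presupposes $\diff f(x)\in\Ker P_\alpha(x)$.) If $\beta\notin\{\alpha_1,\dots,\alpha_k\}$, then Lemma \ref{biham} applied to this decomposition produces $g=\sum_{i=1}^{k}\frac{\alpha-\alpha_i}{\beta-\alpha_i}f_{\alpha_i}$ with $\diff g(x)\in\Ker P_\beta(x)$ and $\D_fP_\alpha=\D_gP_\beta$. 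Each summand of $g$ is a scalar multiple of a Casimir of the regular bracket $P_{\alpha_i}$, hence again a generator of $\F$; since $\F$ is a linear subspace, $g\in\F$, which is exactly what is claimed.

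It remains to treat the ``resonant'' case $\beta\in\{\alpha_1,\dots,\alpha_k\}$, and this is the only point that requires a little extra care. Here one peels off the offending term: write $f=f_\beta+\tilde f$ with $\tilde f=\sum_{\alpha_i\neq\beta}f_{\alpha_i}$. The computation underlying Lemma \ref{biham} in fact yields the pointwise identity $\{\tilde f,h\}_\alpha=\{\tilde g,h\}_\beta$ for $\tilde g=\sum_{\alpha_i\neq\beta}\frac{\alpha-\alpha_i}{\beta-\alpha_i}f_{\alpha_i}\in\F$, so $\D_{\tilde f}P_\alpha=\D_{\tilde g}P_\beta$; and for the Casimir $f_\beta$ of $P_\beta$ one has $\{f_\beta,h\}_\alpha=(\alpha-\beta)\{f_\beta,h\}_\infty$, which must still be realised as $\{g_0,h\}_\beta$ for a suitable $g_0\in\F$. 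One way is to re-expand $f$ through a finite collection of regular brackets avoiding $\beta$, which is possible because difference quotients $(f_{\lambda_1}-f_{\lambda_2})/(\lambda_1-\lambda_2)$ of Casimirs again lie in $\F$, giving enough room to eliminate any single bracket from the expansion; alternatively, for the only use we make of the corollary, namely the independence of $\alpha$ in \eqref{DF}, one may use that $\diff f_\beta(x)\in L$ (as $\beta$ is regular) together with the invariance statements in Proposition \ref{PropertiesOfL} to absorb the resonant contribution after passing to the quotient $L^\bot/L$. Setting $g=\tilde g+g_0$ then completes the argument.

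I expect this last bookkeeping step — handling the resonance $\beta\in\{\alpha_i\}$ and confirming that $\F$ is indeed rich enough to absorb the term $\D_{f_\beta}P_\alpha$ — to be the main (and essentially the only) obstacle; everywhere else the statement is an immediate corollary of Lemma \ref{biham} combined with the single structural fact that $\F$ is a vector space generated by Casimir functions of regular brackets of the pencil.
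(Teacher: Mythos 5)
Your non-resonant case is exactly the paper's argument, but the resonant case $\beta\in\{\alpha_1,\dots,\alpha_k\}$ --- which you correctly identify as the whole content of the corollary --- is not actually proved. Neither of your two suggested workarounds is justified. The claim that one can ``re-expand $f$ through a finite collection of regular brackets avoiding $\beta$'' presupposes that a Casimir $f_\beta$ of $P_\beta$ is a \emph{finite} linear combination of Casimirs of other brackets of the pencil; nothing in the construction of $\F$ gives this, and the remark about difference quotients only says that $\F$ is a vector space, which does not produce such a re-expansion. The second alternative is weaker than the statement being proved: knowing $\diff f_\beta(x)\in L$ does not make the operator $\D_{f_\beta}P_\alpha$ vanish on $L^\bot/L$ (indeed $\D_{f_\beta}P_\alpha$ is governed by the second differential of $f_\beta$, not just by $\diff f_\beta(x)$), and in any case the corollary is stated for the operators on $\T^*_xM$, before any reduction. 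So the term $\D_{f_\beta}P_\alpha=(\alpha-\beta)\,\D_{f_\beta}P_\infty$ that you isolate is never shown to lie in $\{\D_hP_\beta\}_{h\in\F}$.

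The paper closes this gap by a continuity argument rather than by algebraic bookkeeping: replace the offending Casimir $f_{\alpha_1}=f_\beta$ by a Casimir $f_t$ of the nearby regular bracket $P_t$, depending smoothly on $t$, set $f(t)=f_t+\sum_{i\ge 2}f_{\alpha_i}$, and apply Lemma \ref{biham} to get $g(t)\in\F$ with $\D_{f(t)}P_\alpha=\D_{g(t)}P_\beta$. Letting $t\to\alpha_1$ shows that $\D_fP_\alpha$ lies in the closure of the set $\{\D_hP_\beta\}_{h\in\F}$, and this set is a finite-dimensional linear subspace of $\mathrm{End}(\T^*_xM)$, hence closed; therefore $\D_fP_\alpha=\D_gP_\beta$ for some $g\in\F$. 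The two ingredients you would need to add to your sketch are precisely these: the smooth dependence of Casimirs on the pencil parameter (available by the construction of $\F$ in Section \ref{ihs}), and the observation that $\{\D_hP_\beta\}_{h\in\F}$ is finite-dimensional and therefore closed, so that limits of the non-resonant identities remain inside it.
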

\begin{proof}
	Let  $$f = \sum\limits_{i=1}^{k} f_{\alpha_i} ,$$ where $f_{\alpha_i}$ is a Casimir function of $P_{\alpha_i}$. If $\beta \neq \alpha_i$ for any $i$, then the statement follows from Lemma \ref{biham}. Assume that $\beta = \alpha_1$. Let $t \to \alpha_1$ and let $f_{t}$ be a Casimir function of $P_{t}$ depending smoothly on $t$. Consider
	$$
	f(t) = f_t + \sum\limits_{i=2}^{k} f_{\alpha_i}.
	$$
	By Lemma \ref{biham}, there exists $g(t) \in \F$ such that 
	$$
	\D_{f(t)}P_\alpha = \D_{g(t)} P_\beta.
	$$
	Consequently,
	$$
	 \D_fP_\alpha = \lim_{t\to \alpha_1} \D_{f(t)}P_\alpha = \lim_{t\to \alpha_1} \D_{g(t)}P_\beta,
	$$
	so $ \D_fP_\alpha$ belongs to the closure of $ \{  \D_{h}P_\beta\}_{h \in \F}$. Now note that $ \{  \D_{h}P_\beta\}_{h \in \F} \subset \mathrm{End}(\T_x^*M)$ is a finite-dimensional vector space, so it is closed. Therefore, $\D_fP_\alpha = \D_g P_\beta$ for some $g \in \F$, q.e.d.

\end{proof}

\begin{consequence}\label{skewSymm}
        Let $f \in \F$ and $\diff f(x) \in \Ker P_\alpha$. Then $\D_f P_\alpha$ is skew-symmetric with respect to all brackets of the pencil.
\end{consequence}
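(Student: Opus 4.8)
The plan is to leverage the preceding two corollaries to reduce skew-symmetry with respect to an arbitrary bracket of the pencil to the already-established skew-symmetry with respect to a single bracket. Recall from item 3 of Proposition \ref{dfpProperties} that for any Poisson bracket $P$ and any $g$ with $\diff g(x)\in\Ker P(x)$, the operator $\D_g P$ is skew-symmetric with respect to $P$ itself. So the only real content here is to change the bracket in two places at once: the bracket $P_\alpha$ used to build the operator, and the bracket $P_\beta$ with respect to which we measure skew-symmetry.

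First I would fix $\beta\in\CP$ arbitrary and apply Corollary \ref{dFPalphaInd} to the function $f\in\F$ with $\diff f(x)\in\Ker P_\alpha(x)$: this produces a function $g\in\F$ with $\D_f P_\alpha = \D_g P_\beta$. Note that Corollary \ref{dFPalphaInd} (via Lemma \ref{biham}) automatically guarantees $\diff g(x)\in\Ker P_\beta(x)$, so $g$ is a legitimate input for the operator $\D_\bullet P_\beta$. Then item 3 of Proposition \ref{dfpProperties}, applied to the bracket $P_\beta$ and the function $g$, gives
$$
P_\beta(\D_g P_\beta(\xi),\eta) + P_\beta(\xi, \D_g P_\beta(\eta)) = 0
$$
for all $\xi,\eta\in\T^*_xM$. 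Substituting $\D_g P_\beta = \D_f P_\alpha$ yields exactly that $\D_f P_\alpha$ is skew-symmetric with respect to $P_\beta$. Since $\beta$ was arbitrary, this is skew-symmetry with respect to every bracket of the pencil.

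There is essentially no obstacle here — the statement is a formal consequence of the machinery just developed, and the proof is three lines. The one point requiring a moment's care is making sure that the function $g$ delivered by Corollary \ref{dFPalphaInd} genuinely satisfies the kernel condition $\diff g(x)\in\Ker P_\beta(x)$ needed to invoke the skew-symmetry property of $\D_g P_\beta$; this is exactly part (1) of Lemma \ref{biham} in the generic case, and survives the limiting argument in the degenerate case $\beta=\alpha_i$ since the condition $\diff g(t)(x)\in\Ker P_\beta(x)$ is closed and passes to the limit. I would simply remark that this is immediate from the statement of Corollary \ref{dFPalphaInd} and conclude.
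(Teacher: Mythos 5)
Your proof is correct and follows essentially the same route as the paper: rewrite $\D_f P_\alpha$ as $\D_g P_\beta$ and invoke item 3 of Proposition \ref{dfpProperties} for the bracket $P_\beta$. The paper is slightly more economical --- it observes that skew-symmetry with respect to just two distinct brackets already implies it for the whole pencil, so it only needs Lemma \ref{biham} for a single generic $\gamma \neq \alpha$ and never has to pass through the limiting argument of Corollary \ref{dFPalphaInd} that you carefully justify for the degenerate case $\beta = \alpha_i$.
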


\begin{proof}
It is enough to show that $\D_f P_\alpha$ is skew-symmetric with respect to two brackets of the pencil. By Proposition \ref{dfpProperties}, it is skew-symmetric with respect to $P_\alpha$.
	By Lemma \ref{biham} we can find $\gamma \neq \alpha$ and a function $g$ such that
	$
            \D_fP_\alpha =\D_{g} P_\gamma
        $.
        Therefore, $\D_f P_\alpha$ is skew-symmetric with respect to $P_{\gamma}$ as well, q.e.d.
\end{proof}


	It follows from the previous section that  the operator $\D_fP_\alpha$ is well-defined on the space $L^{\bot} / L$ for any $f \in \F$, $\diff f\in \Ker P_\alpha$ and moreover $\D_fP_\alpha\mid_{L^{\bot} / L}$ belongs to the symplectic Lie algebra $\sP(L^{\bot} / L, P_{\alpha})$.  As was shown in  Section \ref{zeroOrderTheory},   each $P_\beta$  induces a skew-symmetric form on $L^{\bot} / L$ and if $P_\beta$ is regular,  then this form is non-degenerate. Thus, $L^\bot/L$ is endowed with a pencil of skew-symmetric forms $\{ P_\lambda\mid_{L^\bot/L}\}$ and almost all of them are non-degenerate. Thus, it makes sense to  introduce the bi-symplectic Lie algebra associated with the pencil $\Pi$ by setting
$$
\sP(L^\bot/L, \Pi)=\bigcap_\beta \sP(L^\bot/L,P_\beta) = \sP(L^\bot/L,P_{\beta_1})\cap \sP(L^\bot/L,P_{\beta_2}),
$$    	
where $P_\beta, P_{\beta_1}$ and $P_{\beta_2}$ are regular in $\Pi$, $\beta_1\ne\beta_2$.

 \begin{statement}\label{dfpOnQuotient}
The commutative subalgebra $\D_\F$ defined by \eqref{DF} possesses the following properties.
            \begin{enumerate}
            \item $\D_\F \subset \sP(L^{\bot} / L, \Pi)$.
            \item $\D_\F$ commutes with the recursion operators, i.e., $[D,R_\beta^\alpha]=0$ for all $D\in \D_\F$.
            \item Common eigenspaces of the recursion operators are invariant under $\D_\F$.
            \item For any  $D\in \D_\F$, we have $D R_\beta^\alpha \in \D_\F$, i.e. $\D_\F$ is invariant under multiplication by $R_\beta^\alpha$.
                \label{st4}
            \end{enumerate}
 \end{statement}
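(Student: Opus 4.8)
The plan is to verify the four properties in turn, using the earlier results as building blocks. For item 1, recall that by Corollary \ref{skewSymm} each generator $\D_f P_\alpha$ (with $f\in\F$, $\diff f\in\Ker P_\alpha$) is skew-symmetric with respect to \emph{all} brackets of the pencil; restricting to $L^\bot/L$ and picking two regular brackets $P_{\beta_1}, P_{\beta_2}$ shows $\D_f P_\alpha\mid_{L^\bot/L}\in\sP(L^\bot/L,P_{\beta_1})\cap\sP(L^\bot/L,P_{\beta_2})=\sP(L^\bot/L,\Pi)$. This is essentially immediate once one is careful that the operator is well-defined on the quotient, which was established in Section \ref{firstOrderTheory}.

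For item 2, the idea is to use Corollary \ref{dFPalphaInd} to express the same operator $D=\D_f P_\alpha$ in two different forms $D=\D_g P_\beta$ and $D=\D_h P_\gamma$ for suitable $g,h\in\F$ and two distinct regular values $\beta,\gamma$. Then, on $L^\bot/L$ one has $D = P_\beta^{-1}\circ(\text{something})$ and also $D=P_\gamma^{-1}\circ(\text{something})$; more precisely, combining skew-symmetry of $D$ with respect to both $P_\beta$ and $P_\gamma$ (item 1) with the fact that $R_\gamma^\beta = P_\beta^{-1}P_\gamma$, one gets that $D$ commutes with $R_\gamma^\beta$. Indeed, a linear operator that is skew-symmetric with respect to two non-degenerate forms $P_\beta, P_\gamma$ on a symplectic space automatically commutes with $P_\beta^{-1}P_\gamma$: if $P_\beta(Dx,y)+P_\beta(x,Dy)=0$ and the same for $P_\gamma$, then writing $P_\gamma=P_\beta R$ one computes $P_\beta(RDx,y) = P_\gamma(Dx,y) = -P_\gamma(x,Dy) = -P_\beta(x,RDy) = P_\beta(R D y, x)\cdot(-1)\cdots$ — after chasing signs and using skew-symmetry of $R$ with respect to $P_\beta$ (which holds since both forms are skew), one obtains $RD=DR$. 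Since by Proposition \ref{PropertiesOfR}(1) all recursion operators are rational functions of a single one, commuting with $R_\gamma^\beta$ for one pair gives $[D,R_\beta^\alpha]=0$ for all of them.

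Item 3 is then a formal consequence of item 2: if $D$ commutes with all recursion operators, it preserves each of their eigenspaces, hence also every common eigenspace. For item 4, I would argue as follows. Fix $D=\D_f P_\alpha\in\D_\F$ and a recursion operator $R_\beta^\alpha$. Using Lemma \ref{biham} (or Corollary \ref{dFPalphaInd}) write $D=\D_g P_\beta$ for appropriate $g\in\F$ with $\diff g\in\Ker P_\beta$. Now I claim that multiplication by $R_\beta^\alpha = P_\alpha^{-1}P_\beta$ on the left turns $\D_g P_\beta$ into $\D_g P_\alpha$ (up to the identifications on $L^\bot/L$) — heuristically because $\D_g P$ depends on $P$ in a way that is "linear in $P$" along the pencil direction at a point where $\diff g$ lies in the kernel, and $R_\beta^\alpha$ is precisely the operator implementing the change of symplectic form from $P_\beta$ to $P_\alpha$. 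More carefully: on $L^\bot/L$ the operator $\D_g P_\lambda$ for $\lambda$ in a suitable family can be written via the commutative diagram of Proposition \ref{isom} as $P_\lambda^{-1}A_g^\lambda$ where $A_g^\lambda$ is the linearization of the $P_\lambda$-Hamiltonian flow of $g$; but the Hamiltonian \emph{vector field} $\sgrad_\lambda g = P_\lambda\diff g$ and hence $A_g^\lambda$ vary affinely in $\lambda$ along the pencil thanks to the bi-Hamiltonian recursion (this is exactly the content of Magri's scheme, Proposition \ref{Magri}, together with Lemma \ref{biham}), and $R_\beta^\alpha \D_g P_\beta$ reproduces $\D_{g'} P_\alpha$ for the shifted Casimir combination $g'$ prescribed by Lemma \ref{biham}; since $g'\in\F$, we get $R_\beta^\alpha D\in\D_\F$. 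The main obstacle, and the step requiring genuine care rather than citation, is item 4: one must pin down the precise dependence of $\D_g P_\lambda$ on $\lambda$ and match it with the explicit Casimir-combination formula of Lemma \ref{biham}, being attentive to the passage to the quotient $L^\bot/L$ and to which recursion operator ($R_\beta^\alpha$ versus its inverse or an affine-fractional transform of it) actually appears; items 1–3 are comparatively routine consequences of the already-established skew-symmetry and commutation facts.
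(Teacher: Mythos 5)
Your items 1--3 are essentially correct and are what the paper dismisses as ``straightforward''; in particular, your observation that skew-symmetry of $D$ with respect to two non-degenerate forms of the pencil forces $D$ to commute with the recursion operator is a clean way to get item 2, and item 3 follows formally. The genuine gap is in item 4, and you have in fact put your finger on it yourself without closing it. Your argument rests on the claim that $\D_g P_\lambda$ depends ``linearly on $P_\lambda$'' along the pencil, so that composing with a recursion operator trades $P_\alpha$ for $P_\beta$. This is false as stated: by item 1 of Proposition \ref{dfpProperties}, in coordinates
$$(\D_{f}P(\xi))_{k}  = \diffFX{P^{ij}}{x^{k}} \diffFX{f}{x^i} \xi_{j} + P^{ij} \diffFXY{f}{x^{i}}{x^{k}} \xi_{j},$$
and the first term, which involves the derivatives of the Poisson tensor rather than the tensor itself, does not transform correctly under $P_\alpha\mapsto P_\beta$. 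Linearity in $P$ holds only when $\diff f(x)=0$, in which case the first term vanishes and $\D_f P = \diff^2 f\circ P$ (item 7 of Proposition \ref{dfpProperties}).

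The missing step, which is the whole content of the paper's proof of item 4, is the reduction to that case: choose $\alpha$ so that all Casimirs of $P_\alpha$ lie in $\F$ (allowed, since the choice of $\alpha$ in \eqref{DF} is immaterial by Corollary \ref{dFPalphaInd}), pick a Casimir $f_\alpha$ of $P_\alpha$ with $\diff f_\alpha(x)=\diff f(x)$, and replace $f$ by $f-f_\alpha$; this does not change $D=\D_f P_\alpha\mid_{L^\bot/L}$ but makes the differential vanish at $x$, whence $D=\diff^2(f-f_\alpha)\bigl(P_\alpha\mid_{L^\bot/L}\bigr)$ and
$$DR_\beta^\alpha=\diff^2(f-f_\alpha)\bigl(P_\alpha\mid_{L^\bot/L}\,R_\beta^\alpha\bigr)=\diff^2(f-f_\alpha)\bigl(P_\beta\mid_{L^\bot/L}\bigr)=\D_{f-f_\alpha}P_\beta\mid_{L^\bot/L}\in\D_\F.$$
Without this subtraction your affine-dependence heuristic cannot be made rigorous, and Lemma \ref{biham} alone (which only asserts the equality $\D_fP_\alpha=\D_gP_\beta$, not anything about the composition $\D_fP_\alpha\, R_\beta^\alpha$) does not supply item 4. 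A minor additional remark: you speak of multiplication by $R_\beta^\alpha$ on the left, while the statement concerns right multiplication; by your own item 2 the two coincide on $\D_\F$, so this is harmless, but it is a further sign that the mechanism actually producing $DR_\beta^\alpha\in\D_\F$ has not been identified.
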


 \begin{proof}
 	Items 1--3 are straightforward. Let us prove item 4. Since all the recursion operators are rational functions of each other, the choice of $\alpha$ is not important, and we may assume that $\alpha$ is such that all Casimir functions of $P_\alpha$ belong to $\F$.   Let $D=\D_{f }P_\alpha\mid_{L^{\bot} / L}$ for some $f\in\F$, $\diff f \in\Ker P_\alpha$. Take a Casimir function $f_\alpha$ of $P_\alpha$ such that $\diff f_\alpha = \diff f$. Then $\diff(f - f_\alpha) = 0$ and
	$$
	D = \D_fP_\alpha\mid_{L^{\bot} / L} = \D_{f - f_\alpha}P_\alpha\mid_{L^{\bot} / L} .
	$$
	On the other hand, by item 7 of Proposition \ref{dfpProperties},
	$$
	\D_{f - f_\alpha}P_\alpha\mid_{L^{\bot} / L} = \diff^2(f - f_\alpha)\left(P_\alpha\mid_{L^{\bot} / L}\right).
	$$
	Therefore,
	$$
	 DR_\beta^\alpha =  \diff^2(f - f_\alpha)\left(P_\alpha\mid_{L^{\bot} / L} R_\beta^\alpha \right) = \diff^2(f - f_\alpha)\left(P_\beta\mid_{L^{\bot} / L}\right) = \D_{f - f_\alpha}P_\beta\mid_{L^{\bot} / L},
	 	$$
		so $D R_\beta^\alpha \in \D_\F$, q.e.d.
%
 \end{proof}

The following Proposition allows us to calculate $\D_fP_\alpha$ on an eigenspace of the recursion operator.

\begin{statement}\label{dfpIsAd}\quad\par
	 \begin{enumerate}
	 \item
	 Let $f = \sum\limits_{i=1}^{k} f_{\alpha_i}$, where $f_{\alpha_i}$ is a Casimir function of a regular bracket $P_{\alpha_i}$.  Let also $\diff f(x) \in \Ker P_\alpha$, and $\lambda \in \Lambda(x)$. Then
 $\D_fP_\alpha\mid_{\Ker P_{\lambda}} = \ad\,\xi$, where
	 $$
            		\xi  = \sum\limits_{i=1}^{k} \frac{\alpha - \alpha_i}{\lambda - \alpha_i} \diff f_{\alpha_i},
	 $$
	 and $\ad\,\xi$ is the adjoint operator on $\g_{\lambda}=\Ker P_\lambda$.
	 \item	 The following sets of operators are equal
	 $$
	 	\left\{ \D_fP_\alpha\mid_{\Ker P_{\lambda}}\right\}_{f \in \F, \diff f \in \Ker P_{\alpha}} = \{ \ad\,{\xi}\}_{\xi \in \g_{\lambda} \cap L},
	$$
	where $\ad\,{\xi}$ is the adjoint operator on $\g_{\lambda}=\Ker P_\lambda$.
	\end{enumerate}
\end{statement}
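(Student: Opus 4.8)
The plan is to derive both parts of Proposition \ref{dfpIsAd} from Lemma \ref{biham} together with item 5 of Proposition \ref{dfpProperties}, which already identifies $\D_g P_\mu$ on $\Ker P_\mu(x)$ with $\ad(\diff g(x))$; the whole argument then reduces to moving the ``base point'' of the operator $\D P$ from the regular parameter $\alpha$ to the singular parameter $\lambda$ and back. For part 1, start from $f=\sum_{i=1}^k f_{\alpha_i}$ with $\diff f(x)\in\Ker P_\alpha(x)$ and $\lambda\in\Lambda(x)$, and observe that $\lambda\neq\alpha_i$ for every $i$, since each $P_{\alpha_i}$ is regular at $x$ while $P_\lambda$ is not. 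Hence Lemma \ref{biham} applies with $\beta=\lambda$: the function $g=\sum_{i=1}^k\frac{\alpha-\alpha_i}{\lambda-\alpha_i}f_{\alpha_i}$ satisfies $\diff g(x)\in\Ker P_\lambda(x)=\g_\lambda$ and $\D_f P_\alpha=\D_g P_\lambda$. Restricting to $\g_\lambda$ and invoking item 5 of Proposition \ref{dfpProperties} gives $\D_f P_\alpha\mid_{\g_\lambda}=\D_g P_\lambda\mid_{\g_\lambda}=\ad(\diff g(x))=\ad\,\xi$, where $\xi=\diff g(x)=\sum_{i=1}^k\frac{\alpha-\alpha_i}{\lambda-\alpha_i}\diff f_{\alpha_i}(x)$ is precisely the element in the statement. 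Finally, since each $P_{\alpha_i}$ is regular, $\Ker P_{\alpha_i}(x)\subset L$, so $\diff f_{\alpha_i}(x)\in L$ and therefore $\xi\in\g_\lambda\cap L$; this last remark is what makes part 2 work.

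For part 2 the inclusion $\subseteq$ is immediate: any $f\in\F$ with $\diff f(x)\in\Ker P_\alpha(x)$ is, by the definition of $\F$ in Section \ref{ihs}, a finite linear combination of local Casimir functions of regular brackets, i.e. $f=\sum_i f_{\alpha_i}$ with $f_{\alpha_i}$ a Casimir of a regular $P_{\alpha_i}$, so part 1 yields $\D_f P_\alpha\mid_{\g_\lambda}=\ad\,\xi$ with $\xi\in\g_\lambda\cap L$. For the reverse inclusion, take $\xi\in\g_\lambda\cap L$. Since $\Lambda(x)$ is finite, I can pick $k=\dim L$ pairwise distinct regular values $\mu_1,\dots,\mu_k$ in the punctured $\eps$-neighbourhood of $\alpha$ used to define $\F$, all different from $\alpha$ itself; by item 4 of Proposition \ref{PropertiesOfL}, $L=\sum_i\Ker P_{\mu_i}(x)$, so $\xi=\sum_i\eta_i$ with $\eta_i\in\Ker P_{\mu_i}(x)$. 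Regularity of $P_{\mu_i}$ at $x$ means its local Casimirs have differentials spanning $\Ker P_{\mu_i}(x)$, so $\eta_i=\diff h_i(x)$ for some Casimir $h_i$ of $P_{\mu_i}$, and $g:=\sum_i h_i\in\F$ satisfies $\diff g(x)=\xi\in\g_\lambda$. Now apply Lemma \ref{biham} ``in the other direction'', with $g$ playing the role of $f$, the $\mu_i$ playing the role of the $\alpha_i$, $\lambda$ playing the role of $\alpha$, and $\alpha$ playing the role of $\beta$ (legitimate because all $\mu_i\neq\alpha$ and all $\mu_i\neq\lambda$): this produces $f:=\sum_i\frac{\lambda-\mu_i}{\alpha-\mu_i}h_i\in\F$ with $\diff f(x)\in\Ker P_\alpha(x)$ and $\D_f P_\alpha=\D_g P_\lambda$. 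Restricting to $\g_\lambda$ and using item 5 of Proposition \ref{dfpProperties} once more gives $\D_f P_\alpha\mid_{\g_\lambda}=\ad(\diff g(x))=\ad\,\xi$, which is exactly what is needed.

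The only genuinely delicate step is the parameter bookkeeping in the reverse inclusion: one must ensure that the auxiliary values $\mu_i$ can simultaneously be taken distinct from $\alpha$, distinct from one another, regular, and inside the admissible range $|\mu_i-\alpha|<\eps$, so that Lemma \ref{biham} applies with no division by zero and the resulting $f$ still belongs to $\F$. This rests on the finiteness of $\Lambda(x)$ and on item 4 of Proposition \ref{PropertiesOfL}; once it is in place the rest is the routine computation already flagged as straightforward after Lemma \ref{biham} and Proposition \ref{dfpProperties}. For complex $\lambda\in\Lambda(x)\setminus\R$ the argument goes through verbatim after complexification, with $L$ replaced by $L\otimes\Complex$, with complex-valued local Casimirs, and with $\ad$ taken on the complex Lie algebra $\g_\lambda$.
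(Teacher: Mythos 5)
Your proof is correct and follows exactly the route the paper intends: the paper's own ``proof'' is the single sentence that the statement follows directly from Proposition \ref{dfpProperties} and Lemma \ref{biham}, and your argument is precisely that deduction spelled out (Lemma \ref{biham} with $\beta=\lambda$ plus item 5 of Proposition \ref{dfpProperties} for part 1, and the same tools together with item 4 of Proposition \ref{PropertiesOfL} for the reverse inclusion in part 2). The parameter bookkeeping you flag is handled correctly, so there is nothing to add.
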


\begin{proof}
	This directly follows from Proposition \ref{dfpProperties} and Lemma \ref{biham}.
\end{proof}
\begin{remark}
	If $\lambda$ is complex, $L$ should be replaced by $L \otimes \Complex$ in item 2.
\end{remark}

Notice that if the recursion operators are diagonalizable, we are now able to express $\D_fP_\alpha$ on the whole $L^{\bot} / L$ via adjoint operators.


    \subsection{Regular cocycles and central extensions}\label{race}

    In this section we give a condition for a $2$-cocycle to be regular in terms of the associated central extension.
	Let $\g$ be a Lie algebra over a field $\K$. Suppose that $A$ is a $2$-cocycle on $\g$. Let us consider the space $\g_{A} = \g + \K^{1}$, where $\K^{1} = \langle z \rangle$ is a one-dimensional vector space, and define a commutator $\LieBracket_{A}$ on $g_{A}$ by the following rule:
\begin{align*}
	[x,y]_{A} = [x,y] + A(x,y)z&, \mbox{ for any } x,y \in \g \subset \g_{A},\\
	[z,\g_{A}]_{A} &= 0.
\end{align*}
It is easy to see that if $A$ is closed, then the commutator $\LieBracket_{A}$ turns  $\g_{A}$ into a Lie algebra. Also note that $\g = \g_{A} / \langle z \rangle$, and the lift of $A$ to $g_{A}$ is an exact form. This means that every closed $2$-form on a Lie algebra becomes exact after being lifted to a certain one-dimensional central extension.\par
The following is straightforward.

\begin{statement}\label{regularitycriterion}
	A $2$-cocycle $A$ on $\g$ is regular if and only if its lift to $\g_A$ is a coboundary of a regular element $a \in \g_A^{*}$.
\end{statement}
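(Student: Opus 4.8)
The plan is to unwind the definitions on both sides and match the two pencils directly. On the $\g_A$ side we have the linear pencil $\Pi^{\g_A, A_a}$ where $a \in \g_A^*$: its Poisson tensor at a point is $\langle \cdot, [\,\cdot\,,\,\cdot\,]_A\rangle + \lambda\langle a, [\,\cdot\,,\,\cdot\,]_A\rangle$ on $\g_A \simeq \T^*_\cdot \g_A^*$. The key observation is that $z$ is central in $\g_A$, so $z$ lies in the kernel of the Lie–Poisson bracket of $\g_A$ everywhere; hence along the affine subspace $\{x \in \g_A^* : \langle x, z\rangle = 1\}$ (which is preserved by all coadjoint flows) the Lie–Poisson bracket of $\g_A$ restricts, after identifying this subspace with $\g^*$, to exactly $\langle \bar x, [\,\cdot\,,\,\cdot\,]\rangle + A(\cdot,\cdot) = P^{\g,A}_{1}$ evaluated appropriately — more precisely, the bracket on $\g_A^*$ restricted to this hyperplane, in the coordinate $\bar x \in \g^*$, reproduces the \emph{shifted} bracket, and the argument-shift construction on $\g_A$ with shift $a$ restricts to the pencil $\Pi^{\g,A}$ on $\g^*$. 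I would make this precise by writing $a = \bar a + c\,z^*$ in a splitting $\g_A^* = \g^* \oplus \K z^*$ and computing $P^{\g_A, A_a}_\lambda$ block-wise; the block transverse to $z^*$ is what governs the rank.

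With this identification in hand, the statement becomes a rank comparison. By Definition \ref{regcocycle}, $A$ is regular iff $\rank \Pi^{\g,A} = \rank A$. On the other side, $a \in \g_A^*$ is regular (in the sense used for the frozen-argument bracket) iff $\rank A_a = \rank \Pi^{\g_A, A_a}$, where $A_a(\xi,\eta) = \langle a, [\xi,\eta]_A\rangle$. The plan is to show two things: (i) $\rank A_a = \rank A$ as forms — indeed $A_a$ is the frozen bracket of the lift of $A$, and since the lift of $A$ to $\g_A$ is the coboundary $\diff a$ (this is precisely the content recalled just before the statement: "the lift of $A$ to $\g_A$ is an exact form", equal to $dA_a$-type expression), one has $A_a = $ the lift of $A$, whose rank equals $\rank A$ because adding the central direction $z$ to $\g$ does not change the value of $A$ on $\g \times \g$ and $z$ is in the kernel; and (ii) $\rank \Pi^{\g_A, A_a} = \rank \Pi^{\g, A}$, which follows from the restriction picture above together with the fact that $z$ contributes a fixed one-dimensional kernel direction to every bracket of the $\g_A$-pencil, so the ranks agree on the nose. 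Combining (i) and (ii): $\rank \Pi^{\g,A} = \rank A \iff \rank \Pi^{\g_A,A_a} = \rank A_a$, i.e. $A$ regular $\iff$ $a$ regular.

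The step I expect to require the most care is the precise matching of the $\lambda$-direction, i.e. checking that the \emph{whole} pencil $\Pi^{\g_A, A_a}$ (not just one bracket) restricts to $\Pi^{\g,A}$, and in particular that the parameter $\lambda$ on one side corresponds to $\lambda$ (up to an affine reparametrization) on the other. Since $A_a$ on $\g_A$ is a coboundary and hence $P^{\g_A,A_a}_\lambda = P^{\g_A}_0 + \lambda\, dA_a$ differs from $P^{\g_A}_0$ by an infinitesimal shift, the standard fact (used implicitly in the "frozen argument bracket" example) is that $\{P^{\g_A}_0 + \lambda\, dA_a\}$ is, after the coordinate change $x \mapsto x + \lambda a$ on $\g_A^*$, the same as the single Lie–Poisson bracket — so really one must track how this shift interacts with the hyperplane $\langle x,z\rangle=1$. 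The cleanest way is probably to not restrict to a hyperplane at all but to argue purely at the level of forms on the cotangent space: for any $p \in \g_A^*$ lying over $x \in \g^*$, the form $P^{\g_A,A_a}_\lambda(p)$ on $\g_A$ has $z$ in its kernel and its restriction to $\g$ equals $P^{\g,A}_\lambda(x)$ up to the affine shift $x \mapsto x + (\text{something})\lambda$; hence $\rank P^{\g_A,A_a}_\lambda(p) = \rank P^{\g,A}_{\lambda'}(x')$ for the corresponding $\lambda', x'$, and taking maxima over everything gives $\rank \Pi^{\g_A,A_a} = \rank \Pi^{\g,A}$. Once the bookkeeping of this affine shift is pinned down, the rest is immediate from Definition \ref{regcocycle} and the definition of a regular element.
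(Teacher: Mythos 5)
Your proposal is correct and is exactly the unwinding the authors have in mind (the paper offers no written proof, declaring the proposition straightforward after noting that the lift of $A$ to $\g_A$ is exact): since $z$ is central and lies in the kernel of the lifted form, every bracket $P^{\g_A,A_a}_\lambda(p)$ kills $z$ and restricts on $\g$ to $P^{\g,A}_{\langle p,z\rangle+\lambda}(\bar p)$, which gives your rank identities (i) and (ii) and hence the equivalence with Definition \ref{regcocycle}. The only cosmetic caveat is the one you already flag yourself: argue at the level of forms on the cotangent spaces rather than by restricting the pencil to the hyperplane $\langle x,z\rangle=1$, since the $\lambda$-shift moves points between hyperplanes.
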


\begin{consequence}\label{abelianProp}
	If $A$ is a regular cocycle on $\g$, then $\Ker A$ is Abelian.
\end{consequence}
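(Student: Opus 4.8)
The plan is to deduce Corollary \ref{abelianProp} from Proposition \ref{regularitycriterion} combined with the general fact that the coadjoint stabilizer of a regular element of (the dual of) a Lie algebra is abelian. First I would unwind the reduction. By Proposition \ref{regularitycriterion}, if $A$ is regular then its lift $\widehat A$ to the one-dimensional central extension $\g_A = \g + \langle z\rangle$ coincides with the frozen-argument cocycle $A_a$ of some regular element $a \in \g_A^*$ (concretely, one may take $a$ to be the functional vanishing on $\g$ with $\langle a, z\rangle = 1$). On the one hand, the kernel of $A_a$ as a bilinear form on $\g_A$ is the coadjoint stabilizer $(\g_A)_a = \{\xi \in \g_A : \ad^*_\xi a = 0\}$. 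On the other hand, since $\widehat A$ is just $A$ extended by zero to $\g_A$, this kernel equals $\Ker A \oplus \langle z\rangle$. As $z$ is central and $A$ vanishes on $\Ker A$, the bracket of $\g_A$ restricted to $\Ker A$ agrees with the bracket of $\g$; hence it suffices to prove that $(\g_A)_a$ is abelian.

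So everything reduces to the following statement: if $a$ is a regular element of $\h^*$ for a finite-dimensional Lie algebra $\h$, then its stabilizer $\h_a = \{\xi \in \h : \ad^*_\xi a = 0\}$ is abelian. I would prove this by a deformation argument. Fix $x \in \h_a$ and an arbitrary $b \in \h^*$. The regular locus of $\h^*$ is open, so $a + tb$ is regular for all small $t$, and therefore the skew forms $(\xi,\eta)\mapsto\langle a + tb, [\xi,\eta]\rangle$ have constant rank near $t = 0$; their kernels $\h_{a+tb}$ thus form a smooth vector subbundle over a neighbourhood of $0$, and there is a smooth curve $v(t)$ with $v(t)\in\h_{a+tb}$ and $v(0) = x$. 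Differentiating $\ad^*_{v(t)}(a + tb) = 0$ at $t = 0$ gives $\ad^*_{v'(0)}a = -\ad^*_x b$, so $\ad^*_x b$ lies in the image of the map $\xi \mapsto \ad^*_\xi a$. Since this map is, up to sign, the one induced by the skew form $\langle a, [\cdot,\cdot]\rangle$ whose kernel is $\h_a$, its image is the annihilator of $\h_a$ in $\h^*$. Hence $\langle b, [x,y]\rangle = 0$ for every $y \in \h_a$; as $b$ was arbitrary, $[x,y] = 0$, and as $x, y \in \h_a$ were arbitrary, $\h_a$ is abelian. Applying this with $\h = \g_A$ finishes the proof.

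The only genuinely delicate point is passing from ``$a$ regular'' to ``the stabilizers $\h_{a+tb}$ form a locally constant, hence locally trivial, family'', which is what licenses differentiating along a curve of stabilizers; once that is in place the rest is a one-line computation. In the real case one should observe that the regular locus is Zariski-open and contains $a$, hence is a Euclidean neighbourhood of $a$; in the complex case this is automatic. Alternatively, the lemma on regular stabilizers may simply be cited as known, in which case the whole proof is the bookkeeping of the first paragraph.
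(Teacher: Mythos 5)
Your proof is correct and follows essentially the same route as the paper: reduce via Proposition \ref{regularitycriterion} to the central extension $\g_A$, where the lifted cocycle is $A_a$ for a regular $a$, and invoke the fact that the stabilizer of a regular element is Abelian. The only difference is that the paper takes that last fact as known, whereas you supply the standard deformation-of-stabilizers proof of it; your bookkeeping of the reduction is accurate.
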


\begin{proof}
	Let $\wave A$ be the lift of $A$ to $\g_{A}$. Since $\wave A$ is exact and regular, $\Ker \wave A$ is Abelian. Let $\pi \colon \g_A \to \g$ be the natural projection. Then $\Ker A = \pi(\Ker \wave A)$. Consequently, $\Ker A$ is also Abelian, q.e.d.
\end{proof}

\begin{consequence}
	If $A$ is a regular cocycle on $\g$, then the set of operators \begin{align}\label{adAlegbra}\{ \ad\,{\xi}\mid_{\g / \Ker A} \}_{ \xi \in  \Ker A}\end{align} is an Abelian subalgebra of $\sP(\g / \Ker A, A)$.
\end{consequence}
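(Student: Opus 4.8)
The plan is to derive everything from Corollary~\ref{abelianProp} together with the cocycle identity. First I would recall that, since $A$ is regular, $\Ker A$ is an Abelian subalgebra of $\g$; in particular, for $\xi\in\Ker A$ we have $[\xi,\Ker A]=0\subset\Ker A$, so $\ad\,\xi$ preserves $\Ker A$ and hence descends to a well-defined linear operator on the quotient $\g/\Ker A$. Likewise, $A$ descends to a bilinear form on $\g/\Ker A$, and this induced form is non-degenerate by the very definition of $\Ker A$, so that the symplectic Lie algebra $\sP(\g/\Ker A, A)$ makes sense.

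Next I would check that $\ad\,\xi\mid_{\g/\Ker A}$ is skew-symmetric with respect to this induced form. For $\xi\in\Ker A$ and arbitrary $\eta,\zeta\in\g$, the cocycle identity $A([\xi,\eta],\zeta)+A([\eta,\zeta],\xi)+A([\zeta,\xi],\eta)=0$ has vanishing middle term because $\xi\in\Ker A$, which yields $A([\xi,\eta],\zeta)+A(\eta,[\xi,\zeta])=0$, i.e. precisely the condition $\ad\,\xi\in\sP(\g,A)$. Since $A$ and $\ad\,\xi$ both pass to the quotient, this identity descends, giving $\ad\,\xi\mid_{\g/\Ker A}\in\sP(\g/\Ker A, A)$.

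Finally, the map $\xi\mapsto\ad\,\xi\mid_{\g/\Ker A}$ is linear, so its image is a linear subspace of $\sP(\g/\Ker A, A)$; and since $\Ker A$ is Abelian, $[\ad\,\xi,\ad\,\eta]=\ad\,[\xi,\eta]=0$ for all $\xi,\eta\in\Ker A$, so the operators in the image pairwise commute. A linear subspace of pairwise commuting operators is an Abelian subalgebra, which is the claim. I do not anticipate any genuine obstacle here: the only non-formal ingredient is Corollary~\ref{abelianProp}, and the remainder is the cocycle identity together with the standard homomorphism property $[\ad\,\xi,\ad\,\eta]=\ad\,[\xi,\eta]$.
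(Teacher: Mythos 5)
Your argument is correct and is exactly the reasoning the paper leaves implicit: the paper states this corollary without proof as an immediate consequence of Corollary \ref{abelianProp} and the cocycle identity (the same computation showing $\ad\,\xi\in\sP(\g/\Ker A, A)$ appears in the remark on paired roots in Section \ref{cndlp}). Your write-up simply fills in those routine details — well-definedness on the quotient, skew-symmetry from the cocycle identity with the middle term killed by $\xi\in\Ker A$, and commutativity from $\Ker A$ being Abelian — and nothing is missing.
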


It is proved in the next section that  $\Pi^{\g, A}$ is non-degenerate if and only if the set (\ref{adAlegbra}) is not just an Abelian subalgebra, but a Cartan subalgebra.


\subsection{Proof of Theorem \ref{ndlp1}}

    Let $\Pi^{\g, A}$ be an integrable linear pencil. Construct the system $\F$ (see Section \ref{ihs}) for this pencil. The origin is the singular point of $\F$ restricted to the symplectic leaf of $A$. By Corollary \ref{precriterion}, to check non-degeneracy and find the type of this point, we need to calculate the operators $\D_{f}A$ on $L^{\bot} / L$ for each $f \in F$ such that $ \diff f \in \Ker A$.\par
    By Proposition \ref{dfpIsAd}, we have
    $$
    	\left\{ \D_fA\mid_{\Ker P_{0}} \right\}_{f \in \F, \diff f \in \Ker A} = \{ \ad\,{\xi} \}_{\xi \in  \Ker P_0 \cap L},
    $$
    where $P_0$ is the Lie-Poisson bracket. Since $\Ker P_{0} = \g$, and $\Ker A = L$, we have
     $$
    	\left\{ \D_fA \right\}_{f \in \F, \diff f \in \Ker A} = \{ \ad\,{\xi} \}_{\xi \in \Ker A}.
    $$
    Since $L^{\bot} = \g$, we have
     $$
    	\left\{ \D_fA\mid_{L^{\bot} / L} \right\}_{ f \in \F, \diff f \in \Ker A} = \{ \ad\,{\xi}\mid_{\g / \Ker A} \}_{ \xi \in  \Ker A}.
    $$
    Taking into account Corollary \ref{precriterion}, this proves the following.

    \begin{lemma}\label{linPrecriterion}
    		An integrable linear pencil $\Pi^{\g, A}$ is non-degenerate if and only if the set of operators
		$$\{ \ad\,{\xi}\mid_{\g / \Ker A} \}_{ \xi \in  \Ker A}$$ is a Cartan subalgebra in $\sP(\g / \Ker A, A)$. The type of $\mathsf{Sing}(\Pi^{\g, A})$ coincides with the type of this subalgebra.
    \end{lemma}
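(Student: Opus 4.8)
The plan is to specialize the general non-degeneracy criterion of Corollary \ref{precriterion} to the pencil $\Pi^{\g,A}$ at the origin $0\in\g^*$, using as the reference bracket the constant one $P=A=P_\infty$, and then to rewrite every space occurring in that criterion in purely Lie-algebraic terms. Everything of substance has already been isolated in Proposition \ref{dfpIsAd} and Corollary \ref{precriterion}, so the work is essentially bookkeeping.

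First I would identify $L$ and $L^\bot$ at the point $0$. Since the Lie--Poisson tensor $P_0$ vanishes at the origin, $\Ker P_0(0)=\T_0^*\g^*=\g$; moreover $0$ is the only point of the spectrum, because $P_\lambda(0)=\lambda A$ has rank $\rank A=\rank\Pi$ for every $\lambda\neq 0$ (here $A$ is regular, so $\rank\Pi=\rank A$). Hence every regular bracket of the pencil has kernel $\Ker A$ at the origin, so
$$
L=\sum_{\lambda\in\RP\setminus\Lambda(0)}\Ker P_\lambda(0)=\Ker A,
$$
and since $A(\cdot,\Ker A)=0$ identically we get $L^\bot=\g$. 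Therefore $L^\bot/L=\g/\Ker A$ with the symplectic form induced by $A$, and $\D_\F\subset\sP(\g/\Ker A,A)$ by Proposition \ref{dfpProperties}(3) (skew-symmetry of $\D_fA$ with respect to $A$); equivalently this is the cocycle identity used in Section \ref{race}.

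Next I would compute $\D_\F$ explicitly. Applying Proposition \ref{dfpIsAd}(2) with regular parameter $\alpha=\infty$ (the bracket $A$) and spectral parameter $\lambda=0$ (the bracket $P_0$) gives
$$
\left\{\D_f A\mid_{\Ker P_0}\right\}_{f\in\F,\ \diff f\in\Ker A}=\{\ad\,\xi\}_{\xi\in\g_0\cap L}.
$$
Since $\Ker P_0=\g$ is the whole cotangent space the restriction on the left is vacuous, while $\g_0\cap L=\g\cap\Ker A=\Ker A$; passing to the quotient $L^\bot/L=\g/\Ker A$ then yields $\D_\F=\{\ad\,\xi\mid_{\g/\Ker A}\}_{\xi\in\Ker A}$.

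Finally I would feed this into Corollary \ref{precriterion}: the origin is a non-degenerate singular point of the system $\F$ restricted to the symplectic leaf of $A$ — that is, $\Pi^{\g,A}$ is non-degenerate in the sense of Definition \ref{nondegpencil} — if and only if $\{\ad\,\xi\mid_{\g/\Ker A}\}_{\xi\in\Ker A}$ is a Cartan subalgebra of $\sP(\g/\Ker A,A)$, and in that case the Williamson type of $\mathsf{Sing}(\Pi^{\g,A})$ coincides with the type of that Cartan subalgebra. I do not anticipate a genuine obstacle: the analytic content is entirely carried by Proposition \ref{dfpIsAd} and Corollary \ref{precriterion}, and the only thing needing care is the triple of identifications $\Ker P_0(0)=\g$, $L=\Ker A$, $L^\bot=\g$, each of which is immediate from the fact that one generator of the pencil is linear and vanishes at $0$ while the other is constant.
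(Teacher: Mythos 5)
Your proposal is correct and follows essentially the same route as the paper: specialize Corollary \ref{precriterion} to the origin, use Proposition \ref{dfpIsAd} with $\lambda=0$ to identify $\D_\F$ with the adjoint operators, and make the identifications $\Ker P_0(0)=\g$, $L=\Ker A$, $L^\bot=\g$. The only difference is that you spell out why $L=\Ker A$ (via $\Lambda(0)=\{0\}$ and regularity of $A$), which the paper asserts without comment.
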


    \begin{consequence}\label{adss}
    		If $\Pi^{\g, A}$ is non-degenerate, then $\Ker A$ consists of $\ad$-semisimple elements.
    \end{consequence}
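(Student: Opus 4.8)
The plan is to deduce the statement from Lemma~\ref{linPrecriterion} by transporting semisimplicity from the quotient $\g/\Ker A$ to $\g$ itself, using a generic (``regular'') element of the Cartan subalgebra produced by that lemma.

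First, recall that an integrable linear pencil has, by definition, a regular cocycle $A$, so by Corollary~\ref{abelianProp} the subalgebra $\Ker A$ is abelian. In particular $\ad\,\xi$ annihilates $\Ker A$ for every $\xi\in\Ker A$, and all the operators $\ad\,\xi|_{\g/\Ker A}$, $\xi\in\Ker A$, commute with one another; by Lemma~\ref{linPrecriterion} they span a Cartan subalgebra $\h\subset\sP(\g/\Ker A,A)$, a symplectic Lie algebra of rank $n=\tfrac12\rank A$, so $\dim(\g/\Ker A)=2n$. By Williamson's theorem $\h$ is conjugate inside $\sP(\g/\Ker A,A)$ to a standard Cartan subalgebra, and reading off the corresponding normal form one sees that, over $\Complex$, the $2n$ eigenvalues of an element of $\h$ (the numbers $\pm\lambda_j,\pm\mu_j,\pm\alpha_j\pm\beta_j i$) are pairwise distinct and nonzero away from a finite union of proper hyperplanes of $\h$. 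The underlying fact is that the standard representation of $\sP(2n)$ has no zero weight, so in particular a regular element has no zero eigenvalue on it.

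Next I would fix such a regular element: choose $\xi_0\in\Ker A$ so that $\ad\,\xi_0$ has $2n$ pairwise distinct nonzero eigenvalues on $(\g/\Ker A)\otimes\Complex$, and then look at $\ad\,\xi_0$ acting on $\g\otimes\Complex$. It vanishes on $\Ker A\otimes\Complex$; and since $0$ is not among its eigenvalues modulo $\Ker A$, nothing outside $\Ker A\otimes\Complex$ can lie in its kernel, so $\ker(\ad\,\xi_0)=\Ker A\otimes\Complex$, whose dimension equals the algebraic multiplicity of the eigenvalue $0$. Each of the remaining $2n$ eigenvalues has algebraic multiplicity one modulo $\Ker A$, hence also on $\g\otimes\Complex$, so it contributes a one-dimensional eigenspace. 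Counting dimensions, $\ad\,\xi_0$ is semisimple on $\g\otimes\Complex$, with eigenspace decomposition $\g\otimes\Complex=(\Ker A\otimes\Complex)\oplus\ell_1\oplus\dots\oplus\ell_{2n}$ where $\dim\ell_j=1$ and the $\ell_j$ carry distinct nonzero eigenvalues.

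Finally, for an arbitrary $\xi\in\Ker A$ we have $[\xi,\xi_0]=0$ ($\Ker A$ is abelian), hence $[\ad\,\xi,\ad\,\xi_0]=0$, so $\ad\,\xi$ preserves every eigenspace of $\ad\,\xi_0$: it vanishes on $\Ker A\otimes\Complex$ and acts as a scalar on each line $\ell_j$. Therefore $\ad\,\xi$ is semisimple on $\g\otimes\Complex$, and hence on $\g$, which is the claim. I do not expect any genuine obstacle; the one point needing care is the ``no zero weight'' property of the standard representation of $\sP(2n)$ invoked in the second step, since it is precisely what rules out a Jordan block of $\ad\,\xi_0$ mixing $\Ker A$ with its complement — without it the lifting of semisimplicity from $\g/\Ker A$ to $\g$ would fail.
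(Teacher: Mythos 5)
Your argument is correct, and it follows the same basic route as the paper: Corollary~\ref{adss} is obtained as a consequence of Lemma~\ref{linPrecriterion}. The paper, however, states the corollary with no further justification, implicitly relying on the standard fact that elements of a Cartan subalgebra of a symplectic Lie algebra are semisimple; strictly speaking, that fact only yields semisimplicity of the induced operators $\ad\,\xi\mid_{\g/\Ker A}$, and passing from there to semisimplicity of $\ad\,\xi$ on $\g$ itself is a genuine (if small) extra step, precisely because an operator vanishing on $\Ker A$ and semisimple on the quotient can still have a nilpotent Jordan block mixing $\Ker A$ with its complement when $0$ occurs as an eigenvalue on the quotient. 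Your proof supplies exactly this missing step: you use Williamson's normal form to find $\xi_0\in\Ker A$ whose eigenvalues on $(\g/\Ker A)\otimes\Complex$ are pairwise distinct and nonzero (the standard representation of $\sP(2n)$ having no zero weight), conclude by a multiplicity count that $\ad\,\xi_0$ is semisimple on $\g\otimes\Complex$ with kernel exactly $\Ker A\otimes\Complex$ and one-dimensional nonzero eigenspaces, and then transport semisimplicity to every $\xi\in\Ker A$ using commutativity of $\Ker A$ (which you correctly justify via Corollary~\ref{abelianProp}). All steps check out; the net effect is a more careful justification of what the paper treats as immediate, and it is consistent with how the paper subsequently uses the corollary (simultaneous diagonalization of all $\ad\,\xi$, $\xi\in\Ker A$, on $\g\otimes\Complex$ to produce the root decomposition).
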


  Assume that $\Ker A$ is a commutative subalgebra which consists of semisimple elements. Then all operators $\ad\,{\xi}, \xi \in \Ker A$ may be simultaneously diagonalized (over $\Complex$). So, we can consider the root decomposition of $\g$:
  $$
  	\g \otimes \Complex  = \Ker A  \otimes \Complex + \sum\limits_{i=1}^{n}(V_{\lambda_{i}} + V_{-\lambda_{i}}),
  $$
  where each $V_{\pm \lambda_{i}}$ is spanned by one common eigenvector corresponding to the eigenvalue $\pm \lambda(\xi)$. Eigenvalues enter in pairs because the operators $\ad\,\xi$ are symplectic.

Lemma \ref{linPrecriterion} implies the following.

  \begin{statement}\label{indRoots}
  	Let $\Pi^{\g, A}$ be an integrable linear pencil. If $\Ker A$ is diagonalizable, then the pencil is non-degenerate if and only if $\lambda_{1}, \dots \lambda_{n}$ are linearly independent as linear functions on $\Ker A$. Type of $\mathsf{Sing}(\Pi^{\g, A})$ is $(k_{e}, k_{h}, k_{f})$ where $k_{e}$ is the number of pure imaginary $\lambda_{i}$'s, $k_{h}$ is the number of real $\lambda_{i}$'s, and $k_{f}$ is the number of pairs of complex conjugate $\lambda_{i}$'s.
  \end{statement}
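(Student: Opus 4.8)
The plan is to obtain everything from Lemma~\ref{linPrecriterion}, which reduces the non-degeneracy of $\Pi^{\g,A}$, and the computation of the type of $\mathsf{Sing}(\Pi^{\g,A})$, to the corresponding questions about the commutative algebra
\[
\h \;=\; \{\ad\,\xi\mid_{\g/\Ker A}\}_{\xi\in\Ker A}\;\subset\;\sP(\g/\Ker A,\,A).
\]
Here $\dim(\g/\Ker A)=\rank A=2n$, so $\sP(\g/\Ker A,A)\cong\sP(2n,\R)$, a semisimple Lie algebra of rank $n$; and the whole proof amounts to deciding when $\h$ is a Cartan subalgebra and, if so, reading off its Williamson type.

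First I would compute $\dim\h$. Since $\Pi^{\g,A}$ is integrable, the cocycle $A$ is regular, so $\Ker A$ is Abelian by Corollary~\ref{abelianProp}; combined with the hypothesis that $\Ker A$ is $\ad$-diagonalizable, this shows that $\h$ — and hence its complexification $\h_\Complex$ — is a commutative algebra of semisimple operators. On the root line $V_{\pm\lambda_i}$ the operator $\ad\,\xi$ acts by the scalar $\pm\lambda_i(\xi)$, so the kernel of the surjection $\Ker A\otimes\Complex\to\h_\Complex$, $\xi\mapsto\ad\,\xi\mid_{(\g/\Ker A)\otimes\Complex}$, is exactly $\bigcap_{i=1}^n\Ker\lambda_i$. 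Hence $\dim\h=\dim\h_\Complex=\dim\ls{\lambda_1,\dots,\lambda_n}\le n$, with equality precisely when $\lambda_1,\dots,\lambda_n$ are linearly independent.

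Next I would identify when $\h$ is Cartan. Every toral (i.e.\ semisimple-element) subalgebra of the semisimple algebra $\sP(2n,\Complex)$ has dimension at most its rank $n$, with equality if and only if it is a Cartan subalgebra; applying this to $\h_\Complex$, it is a Cartan subalgebra of $\sP(2n,\Complex)$ exactly when $\dim\h_\Complex=n$, i.e.\ when the $\lambda_i$ are independent. Since a real subalgebra is a Cartan subalgebra if and only if its complexification is, the same condition characterizes $\h$ being a Cartan subalgebra of $\sP(\g/\Ker A,A)$. Together with Lemma~\ref{linPrecriterion}, this yields the stated non-degeneracy criterion.

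Finally, for the type, assume $\Pi^{\g,A}$ is non-degenerate, so $\h$ is Cartan and, by Williamson's theorem, its type $(k_e,k_h,k_f)$ is the eigenvalue type of a generic $\ad\,\xi\mid_{\g/\Ker A}$, whose eigenvalues are the $2n$ scalars $\pm\lambda_i(\xi)$. As $\g$ is a real form of $\g\otimes\Complex$, complex conjugation permutes the root lines, so the root list $\{\pm\lambda_i\}$ is conjugation-invariant; thus each $\lambda_i$ is either real, or purely imaginary, or genuinely complex, in which case $\overline{\lambda_i}$ is another root $\lambda_j$ with $j\ne i$ (after adjusting the sign of the representative). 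For a generic real $\xi$ — linear independence of the $\lambda_i$ rules out accidental vanishings and coincidences — these three cases contribute respectively a hyperbolic pair $\pm\mu$, an elliptic pair $\pm\beta i$, and a focus-focus quadruple $\pm\alpha\pm\beta i$; summing the blocks gives the type $(k_e,k_h,k_f)$ with $k_e,k_h,k_f$ the numbers of purely imaginary, real, and conjugate-pair roots among $\lambda_1,\dots,\lambda_n$, as asserted. I expect the only point needing real care to be this transfer between the real and complexified pictures, together with the standard fact that a top-dimensional toral subalgebra of $\sP(2n,\Complex)$ is automatically a Cartan subalgebra; the remaining steps are routine bookkeeping with the root decomposition.
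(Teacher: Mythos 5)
Your proposal is correct and follows the same route the paper intends: the paper derives Proposition \ref{indRoots} directly from Lemma \ref{linPrecriterion} together with the root decomposition, leaving the details implicit, and you have simply filled them in — the dimension count $\dim\h=\dim\ls{\lambda_1,\dots,\lambda_n}$, the standard fact that a toral subalgebra of $\sP(2n,\Complex)$ is Cartan exactly when its dimension equals the rank $n$, and the eigenvalue bookkeeping for the Williamson type. All three steps are sound, so this is a faithful (and more complete) version of the paper's argument.
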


%

\begin{proof}[Proof of Theorem \ref{ndlp1}]
 Taking into account Corollary \ref{adss} and Proposition \ref{indRoots}, it suffices to show that if $\g$ admits decomposition (\ref{ndlp1decomp}) satisfying the conditions of Theorem \ref{ndlp1}, then $A$ is regular and $\Pi^{\g, A}$ is integrable.\par
To prove regularity, note that the conditions of the theorem  are still satisfied if we pass to the central extension associated with $A$. So, by Proposition \ref{regularitycriterion}, it is enough to consider the case when $A = A_a$. In Sections \ref{nlpProofComplex}, \ref{nlpProofReal} we show that the conditions of the theorem imply that $\g$ belongs to the list (\ref{ndadecomp_c}) or (\ref{ndadecomp}). For these Lie algebras, the dimension of a maximal Abelian $\ad$-diagonalizable subalgebra equals the index. Therefore, $a$ is a regular element of $\g^*$, q.e.d.\par
   Now note that integrability follows from non-degeneracy at the origin.
\end{proof}


\subsection{Proof of Theorems \ref{nd1} and \ref{type1}}\label{mainProof}

	By Corollary \ref{precriterion}, a singular point $x$ is non-degenerate on a regular symplectic leaf of a bracket $P_\alpha$ if and only if the set of operators
	$$\D_\F = \{ \D_{f}P_\alpha\mid_{L^{\bot} / L} \}_{ f \in \F, \diff f \in \Ker P_\alpha}$$
	 is a Cartan subalgebra in $\sP(L^{\bot} / L, P_\alpha)$. The type of the singular point coincides with the type of $\D_\F$.

   \begin{statement}\label{partDiag}
   	Suppose that a point $x$ is non-degenerate. Then the pencil $\Pi$ is diagonalizable at  $x$.
   \end{statement}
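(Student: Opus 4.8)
The plan is to show that non-degeneracy of $x$ forces the recursion operator $R=R_\beta^\alpha$ on $L^\bot/L$ to be $\Complex$-diagonalizable; by item~4 of Proposition~\ref{PropertiesOfR} this is precisely the diagonalizability of $\Pi$ at $x$. By Corollary~\ref{precriterion}, non-degeneracy of $x$ means that $\D_\F$, defined by \eqref{DF}, is a Cartan subalgebra of $\sP(L^\bot/L,P_\alpha)$. The two properties of $\D_\F$ I would use are both supplied by Proposition~\ref{dfpOnQuotient}: $\D_\F$ commutes with $R$ (item~2), and $\D_\F$ is stable under right multiplication by $R$, i.e. $DR\in\D_\F$ for all $D\in\D_\F$ (item~4).

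The next step is to pick a regular element $D_0\in\D_\F$. Since $\D_\F$ is a Cartan subalgebra of $\sP(2m,\R)$ with $2m=\dim(L^\bot/L)$, it is Abelian and all its elements are semisimple (Williamson, see Section~\ref{ndsp}); moreover a regular element of $\D_\F$ has no zero eigenvalue, for a zero eigenvalue shared by every element of $\D_\F$ would mean that $\D_\F$ acts trivially on a nonzero symplectic subspace of $L^\bot/L$, contradicting the fact that a Cartan subalgebra is self-normalizing. Hence $D_0$ is an invertible semisimple operator on $V=(L^\bot/L)\otimes\Complex$, and $D_0^{-1}$ is again semisimple and commutes with whatever $D_0$ commutes with, in particular with $R$.

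Now $D_0R\in\D_\F$ by item~4 of Proposition~\ref{dfpOnQuotient}, so $D_0R$ is semisimple, being an element of the Cartan subalgebra $\D_\F$. Since $D_0^{-1}$ and $D_0R$ are commuting semisimple operators, their product $R=D_0^{-1}(D_0R)$ is semisimple as well, i.e. $R$ is diagonalizable over $\Complex$. By item~4 of Proposition~\ref{PropertiesOfR} this means that $\Pi$ is diagonalizable at $x$, as desired. The one point requiring care — and the step I regard as the main obstacle — is the claim that a regular element of the Cartan subalgebra $\D_\F\subset\sP(2m,\R)$ is invertible (equivalently, that $\D_\F$ consists of semisimple elements and its generic element has trivial kernel); this is a standard consequence of the structure of Cartan subalgebras of $\sP$ described by Williamson, but it deserves to be recorded explicitly. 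Everything else is the two-line computation with the commuting semisimple operators $D_0^{-1}$ and $D_0R$, in which item~4 of Proposition~\ref{dfpOnQuotient} does the essential work.
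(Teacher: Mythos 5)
Your argument is correct, and it reaches the paper's conclusion by a slightly different mechanism. The paper's own proof is shorter: since $\D_\F$ is a Cartan subalgebra of $\sP(L^{\bot}/L, P_\alpha)$, it contains a semisimple element $D$ whose $2m$ eigenvalues ($2m=\dim L^{\bot}/L$) are pairwise distinct --- immediate from the Williamson eigenvalue list recalled in Section \ref{ndsp} --- and any operator commuting with such a $D$ preserves each of its one-dimensional eigenspaces and is therefore diagonalizable; so item 2 of Proposition \ref{dfpOnQuotient} alone already forces the recursion operators to be diagonalizable, and Proposition \ref{PropertiesOfR} finishes. You instead combine item 2 with item 4 of Proposition \ref{dfpOnQuotient} (namely $D_0R_\beta^\alpha\in\D_\F$) and the invertibility of a regular $D_0\in\D_\F$, writing $R_\beta^\alpha=D_0^{-1}(D_0R_\beta^\alpha)$ as a product of commuting semisimple operators. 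This is valid: your invertibility claim is correct (a zero eigenvalue of a generic element would force a weight of $\D_\F$ to vanish identically, producing a nonzero symplectic subspace on which $\D_\F$ acts trivially and hence a centralizer strictly larger than $\D_\F$, contradicting that a Cartan subalgebra of $\sP(2m,\R)$ is its own centralizer; equivalently, the generic eigenvalue list $\pm\lambda_j i,\ \pm\mu_j,\ \pm\alpha_j\pm\beta_j i$ contains no zero), and a product of commuting semisimple operators is indeed semisimple. The only cost of your route is that it leans on item 4 of Proposition \ref{dfpOnQuotient}, which is the nontrivial part of that proposition (its proof requires passing to $\diff^2(f-f_\alpha)$ and the identity $P_\alpha\!\mid_{L^{\bot}/L}R_\beta^\alpha = P_\beta\!\mid_{L^{\bot}/L}$), whereas the commutation statement of item 2 plus the distinct-eigenvalue observation suffices. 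You are right to flag the invertibility of a regular element as the step deserving explicit justification; with that recorded, your proof stands.
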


   \begin{proof}
    Indeed, since $\D_\F$ is a Cartan subalgebra in $\sP(L^{\bot} / L, P_\alpha)$, it contains a semisimple operator $D$ with distinct eigenvalues. On the other hand, by Proposition \ref{dfpOnQuotient}, any recursion operator commutes with $D$. Therefore, all recursion operators are diagonalizable. So, by Proposition \ref{PropertiesOfR}, the pencil is diagonalizable at $x$.
    \end{proof}
Without loss of generality, assume that $\infty \notin \Lambda(x)$. Denote
\begin{align*}
\Lambda_\R = \Lambda(x) \cap \R, \quad \Lambda_\Complex =  \Lambda(x) \cap (\Complex \setminus \R).
\end{align*}
    In the diagonalizable situation, the space $L^{\bot} / L$ is (symplectically) decomposed into the direct sum of common eigenspaces of the recursion operators (Proposition \ref{PropertiesOfR}):
    \begin{equation}
    \label{splitting}
	L^{\bot} / L = \bigoplus_{\substack{\lambda \in \Lambda_\R}}K_{\lambda} \,\,\oplus  \!
\bigoplus_{\substack{\{\lambda,\bar\lambda\} \in \Lambda_\Complex}}K_{\lambda,\bar\lambda},
    \end{equation}
where $K_\lambda$ is the kernel of the operator $R^\infty_0+\lambda  \Id$ if $\lambda\in\mathbb R$, and $K_{\lambda,\bar\lambda}$ is the kernel of $(R^\infty_0+\lambda  \Id)(R^\infty_0 + \bar\lambda \Id)$ for a pair of complex conjugate eigenvalues $\{\lambda,\bar\lambda\}\subset \Lambda_\Complex$.

Decomposition (\ref{splitting}) is determined by the pencil. All these spaces are pairwise skew-orthogonal w.r.t. $P_\alpha$, and the restriction of $P_\alpha$ on each of them is non-degenerate.

Notice that in the case of a diagonalized recursion operator, the bi-symplectic Lie algebra
$\sP\left( L^\bot/L, \Pi\right)$ naturally splits into the direct sum
$$
\bigoplus_{\substack{\lambda \in \Lambda_\R}}
\sP(K_{\lambda})
\,\,\oplus\!\bigoplus_{\substack{\{\lambda,\bar\lambda\} \in \Lambda_\Complex}}
\sP_\Complex(K_{\lambda,\bar\lambda}).
$$
This follows immediately from the fact that $\sP\left( L^\bot/L, \Pi\right)$ is the intersection of
$\sP\left( L^\bot/L, P_\alpha\right)$ with the centralizer of the recursion operator $R_0^\infty$ which is the direct sum
$$
\bigoplus_{\substack{\lambda \in \Lambda_\R}}
\mathfrak{gl}(K_{\lambda})
 \,\,\oplus \!
\bigoplus_{\substack{\{\lambda,\bar\lambda\} \in \Lambda_\Complex}}
\mathfrak{gl}_{\Complex}(K_{\lambda,\bar\lambda})
.
$$

Here $\mathfrak{gl}_\Complex(K_{\lambda,\bar\lambda})$ denotes the algebra of operators that commute with the natural complex structure $J:  K_{\lambda,\bar\lambda} \to K_{\lambda,\bar\lambda}$ defined by the restriction of the recursion operator $R_0^\infty$ onto $K_{\lambda,\bar\lambda}$. Namely, if $\lambda=\alpha+i\beta$, then this restriction has the form $\alpha \Id + \beta J$, where $J^2=-\Id$. Notice that the dimension of $K_{\lambda,\bar\lambda}$ is divisible by 4, and $\sP_\Complex(K_{\lambda,\bar\lambda})$ is isomorphic to $\sP(2n,\Complex)$.

These eigenspaces $K_\lambda$ and $K_{\lambda,\bar\lambda}$ are invariant with respect to the operators $D\in \D_\F$ (Proposition \ref{dfpOnQuotient}).     Denote
\begin{align*}
	\D_{\lambda} = \D_\F\mid_{K_\lambda} = \left\{  D\mid_{K_\lambda}, D\in\D_\F\right\}\subset \sP(K_\lambda),
\end{align*}
and, similarly,
\begin{align*}
	\D_{\lambda,\bar\lambda} = \D_\F\mid_{K_{\lambda,\bar\lambda}} =
\left\{  D\mid_{K_{\lambda,\bar\lambda}}, D\in\D_\F\right\}\subset \sP(K_{\lambda,\bar\lambda}).
\end{align*}

Notice that $\D_{\lambda,\bar\lambda}$ can also be considered as a subalgebra of the complex symplectic Lie algebra  $\sP_\Complex(K_{\lambda,\bar\lambda})$, i.e., $\D_{\lambda,\bar\lambda}\subset \sP_\Complex(K_{\lambda,\bar\lambda}) \subset \sP(K_{\lambda,\bar\lambda})$.

Using the decomposition \eqref{splitting}
we can define the natural projection
$$
\pi \colon \D_\F \to \bigoplus_{\substack{\lambda \in \Lambda_\R}}
\D_{\lambda} 
\,\,\oplus \!
\bigoplus_{\substack{\{\lambda,\bar\lambda\} \in \Lambda_\Complex}}
\D_{\lambda,\bar\lambda} 
.
$$

\begin{statement}\label{part0}
The projection $\pi$ is an isomorphism.

    \end{statement}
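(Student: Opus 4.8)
The map $\pi$ sends an operator $D\in\D_\F$ to the tuple of its restrictions to the eigenspaces $K_\lambda$ and $K_{\lambda,\bar\lambda}$. Since each of these subspaces is $D$-invariant (Proposition \ref{dfpOnQuotient}) and the sum \eqref{splitting} is direct, the map $\pi$ is automatically a well-defined injective homomorphism of Lie algebras: if all restrictions $D|_{K_\lambda}$ and $D|_{K_{\lambda,\bar\lambda}}$ vanish, then $D=0$ because the $K$'s span $L^\bot/L$. So the entire content of the statement is \emph{surjectivity}: given an arbitrary collection of operators, one from each summand $\D_\lambda$ (resp. $\D_{\lambda,\bar\lambda}$), I must exhibit a single $D\in\D_\F$ whose restrictions realize exactly that collection. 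This is where the argument has substance, and it will be the main obstacle.

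The plan is to exploit the rational-function structure relating the $\D_f P_\alpha$'s. By Proposition \ref{dfpIsAd}, if $f=\sum_i f_{\alpha_i}$ is a sum of Casimirs of regular brackets $P_{\alpha_i}$, then on the eigenspace $\Ker P_\lambda$ (equivalently, up to the quotient, on $K_\lambda$) the operator $\D_f P_\alpha$ acts as $\ad\,\xi_\lambda$ where $\xi_\lambda=\sum_i \tfrac{\alpha-\alpha_i}{\lambda-\alpha_i}\diff f_{\alpha_i}$. Thus the restriction of $\D_f P_\alpha$ to $K_\lambda$ depends on $f$ only through the complex numbers $c_i(\lambda)=\tfrac{\alpha-\alpha_i}{\lambda-\alpha_i}$. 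The key point is that as $\lambda$ ranges over the \emph{finite} spectrum $\Lambda(x)$ and the $\alpha_i$ over regular values, the vectors $(c_i(\lambda))_{\lambda\in\Lambda}$ can be made to realize prescribed independent values in each coordinate: interpolation. Concretely, I would first prove surjectivity onto a single factor — that for each fixed $\mu\in\Lambda_\R$, every element of $\D_\mu$ extends to some $D\in\D_\F$ that restricts to it, with the restrictions to all other $K_\lambda$, $\lambda\ne\mu$, controllable (ideally zero). Using item 4 of Proposition \ref{dfpOnQuotient} (invariance of $\D_\F$ under multiplication by the recursion operator $R_0^\infty$) one can then kill the unwanted components: multiplying $D$ by a polynomial in $R_0^\infty$ that vanishes at $-\lambda$ for all $\lambda\ne\mu$ and equals $1$ at $-\mu$ leaves an operator still in $\D_\F$ whose only nonzero restriction is on $K_\mu$ (and $K_\mu$ itself must be handled so the polynomial is $1$ there, which is where the diagonalizability of $R_0^\infty$ — Proposition \ref{partDiag} / \ref{PropertiesOfR} — is essential). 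The complex eigenspaces $K_{\lambda,\bar\lambda}$ are treated the same way with a real polynomial in $R_0^\infty$ vanishing on all other eigenvalues.

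Having isolated each factor, surjectivity of $\pi$ follows by adding: given $(D_\lambda)_\lambda$, pick for each $\lambda$ an operator $\widetilde D_\lambda\in\D_\F$ supported only on $K_\lambda$ (resp. $K_{\lambda,\bar\lambda}$) and restricting to $D_\lambda$ there, and set $D=\sum_\lambda \widetilde D_\lambda\in\D_\F$; then $\pi(D)=(D_\lambda)_\lambda$. The one remaining subtlety is to confirm that the polynomial-in-$R_0^\infty$ operation does not move us outside $\D_\F$ and that it acts as the identity on the targeted block: both are guaranteed by item 4 of Proposition \ref{dfpOnQuotient} together with the fact that, on the block $K_\mu$, the recursion operator $R_0^\infty$ is the scalar $-\mu\,\Id$ (on $K_{\lambda,\bar\lambda}$ it is $\alpha\,\Id+\beta J$, and one chooses the real interpolating polynomial accordingly, using $J^2=-\Id$). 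Thus the hard part — which I expect to absorb most of the work — is the interpolation argument showing that the restrictions $\D_f P_\alpha|_{K_\lambda}$ can be prescribed independently across the finitely many points of $\Lambda(x)$; once that is in place, $\pi$ being an isomorphism is a short bookkeeping conclusion.
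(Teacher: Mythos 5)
Your proposal is correct and follows essentially the same route as the paper: injectivity is immediate since the blocks span $L^\bot/L$, and surjectivity onto each factor is obtained by right-multiplying by a polynomial in $R_0^\infty$ that realizes the projection onto a single block $K_\mu$ (resp.\ $K_{\lambda,\bar\lambda}$), which stays inside $\D_\F$ by item 4 of Proposition \ref{dfpOnQuotient} together with diagonalizability of the recursion operator. The only remark is that the interpolation argument via Proposition \ref{dfpIsAd}, which you flag as the main obstacle, is not actually needed: the projection trick alone does all the work, and the paper's proof consists of exactly that step.
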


    \begin{proof}
    Clearly, $\pi$ is injective, so it is enough to show that $\pi(\D_\F) \supset \D_{\lambda}$ and
      $\pi(\D_\F) \supset \D_{\lambda,\bar\lambda}$ for each $\lambda \in \Lambda(x)$. We prove this for real $\lambda$. For a pair of complex eigenvalues, the proof is similar.  
      
     Since $R_0^\infty$ is diagonalizable, we can find a polynomial $p_\lambda$ such that $p(R_0^\infty)$ is the projection onto $K_\lambda$. By, Proposition \ref{dfpOnQuotient}, $\D_\F$ is invariant under right multiplication by $R_0^\infty$, so
    $$
    \D_\F\, p_\lambda(R_0^\infty) \subset D_\F.
    $$
    On the other hand, by definition of $p_\lambda$,
    $$
    \D_\F\, p_\lambda(R_0^\infty) = \D_{\lambda},
    $$
    which proves the proposition.
    \end{proof}

\begin{statement}
\label{part1}
	Let $\Pi$ be diagonalizable at a point $x$, and  $\alpha \notin \Lambda(x)$. Then the singular point $x$ is non-degenerate on the symplectic leaf of $P_\alpha$ if and only if
$$
\D_{\lambda} \subset \sP\left( K_\lambda\right) \quad \text{and} \quad
\D_{\lambda,\bar\lambda} \subset \sP\left( K_{\lambda,\bar\lambda}\right)
$$
are Cartan subalgebras for all $\lambda \in \Lambda(x)$. The type of $x$ is the sum of the types of  $\D_{\lambda}$ and  $\D_{\lambda,\bar\lambda}$ over $\lambda \in \Lambda(x) \cap \{ \Imm \lambda \geq 0 \}$. 
 \end{statement}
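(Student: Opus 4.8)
The plan is to deduce the statement from Corollary~\ref{precriterion}, which already reduces non-degeneracy of $x$ on the symplectic leaf of $P_\alpha$ to the single question of whether the commutative subalgebra $\D_\F\subset\sP(L^\bot/L,P_\alpha)$ is a Cartan subalgebra, and which identifies the type of $x$ with the type of $\D_\F$. Thus everything comes down to understanding $\D_\F$ relative to the splitting~\eqref{splitting}.

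First I would record that $\D_\F$ \emph{is} the block-diagonal direct sum $\bigoplus_{\lambda\in\Lambda_\R}\D_\lambda\oplus\bigoplus_{\{\lambda,\bar\lambda\}\in\Lambda_\Complex}\D_{\lambda,\bar\lambda}$, placed inside $\bigoplus_{\lambda\in\Lambda_\R}\sP(K_\lambda)\oplus\bigoplus_{\{\lambda,\bar\lambda\}}\sP(K_{\lambda,\bar\lambda})\subset\sP(L^\bot/L,P_\alpha)$. Indeed, by Proposition~\ref{dfpOnQuotient} every $D\in\D_\F$ preserves each summand of~\eqref{splitting}, so $D$ becomes the tuple of its restrictions $(D\mid_{K_\lambda},D\mid_{K_{\lambda,\bar\lambda}})=\pi(D)$; since Proposition~\ref{part0} asserts that $\pi$ is an isomorphism onto $\bigoplus\D_\lambda\oplus\bigoplus\D_{\lambda,\bar\lambda}$, the embedded image of $\D_\F$ is exactly this direct sum. (Here each $K_{\lambda,\bar\lambda}$ is treated merely as a real symplectic space; its complex structure plays no role.)

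The core of the argument is then a self-contained lemma on symplectic Lie algebras, which I would state and prove separately: if $V=\bigoplus_i V_i$ is a skew-orthogonal direct sum of symplectic vector spaces and $\mathfrak h_i\subset\sP(V_i)$ are subalgebras, then the block-diagonal subalgebra $\mathfrak h=\bigoplus_i\mathfrak h_i\subset\sP(V)$ is a Cartan subalgebra of $\sP(V)$ if and only if each $\mathfrak h_i$ is a Cartan subalgebra of $\sP(V_i)$, in which case the type of $\mathfrak h$ is the sum of the types of the $\mathfrak h_i$. The proof uses the standard (Williamson) description of Cartan subalgebras of $\sP(2m,\R)$ as the maximal Abelian subalgebras consisting of semisimple elements, equivalently the maximal tori, all of dimension $m=\tfrac12\dim V$. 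For the ``if'' direction: a block-diagonal operator is semisimple precisely when each of its blocks is, so $\bigoplus\mathfrak h_i$ is Abelian and consists of semisimple elements, and its dimension $\sum\tfrac12\dim V_i=\tfrac12\dim V$ equals the rank of $\sP(V)$; hence it is a maximal torus, i.e.\ a Cartan subalgebra. For the ``only if'' direction: were some $\mathfrak h_i$ properly contained in an Abelian subalgebra $\mathfrak a_i\subset\sP(V_i)$ of semisimple elements, replacing the $i$-th summand of $\mathfrak h$ by $\mathfrak a_i$ would give a strictly larger Abelian subalgebra of $\sP(V)$ consisting of semisimple elements, contradicting maximality of the Cartan subalgebra $\mathfrak h$. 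Finally, a generic element of $\bigoplus\mathfrak h_i$ is block-diagonal with a generic element of each $\mathfrak h_i$ in its block, so its eigenvalue collection is the union of those of the blocks and the invariants $(k_e,k_h,k_f)$ add up.

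Applying this lemma with blocks $\{K_\lambda\}_{\lambda\in\Lambda_\R}\cup\{K_{\lambda,\bar\lambda}\}_{\{\lambda,\bar\lambda\}\in\Lambda_\Complex}$ and $\mathfrak h_i=\D_\lambda$ respectively $\D_{\lambda,\bar\lambda}$, and combining with the first step and Corollary~\ref{precriterion}, yields precisely the assertion: $x$ is non-degenerate exactly when every $\D_\lambda\subset\sP(K_\lambda)$ and every $\D_{\lambda,\bar\lambda}\subset\sP(K_{\lambda,\bar\lambda})$ is a Cartan subalgebra, and then the type of $x$ equals the type of $\D_\F$, namely the sum of the types of $\D_\lambda$ over $\lambda\in\Lambda_\R$ together with those of $\D_{\lambda,\bar\lambda}$ over the pairs $\{\lambda,\bar\lambda\}\subset\Lambda_\Complex$; naming each such pair by its representative with $\Imm\lambda>0$ turns this into the sum over $\lambda\in\Lambda(x)\cap\{\Imm\lambda\geq 0\}$. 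I expect the only genuinely delicate point to be the linear-algebra lemma of the third paragraph — in particular the dimension/maximality bookkeeping and the appeal to the Williamson normal form for Cartan subalgebras of $\sP(2m,\R)$; everything else is direct manipulation of the splitting~\eqref{splitting} together with Propositions~\ref{dfpOnQuotient} and~\ref{part0}.
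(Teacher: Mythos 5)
Your proposal is correct and follows essentially the same route as the paper, whose proof of this proposition consists precisely of citing Proposition \ref{part0} and Corollary \ref{precriterion}; you have simply made explicit the intermediate linear-algebra lemma (a block-diagonal subalgebra of $\sP(V)$ over a skew-orthogonal splitting is Cartan iff each block is, with types adding) that the paper leaves implicit. The details you supply — the identification of $\D_\F$ with the full direct sum via the surjectivity of $\pi$, and the maximal-torus/dimension argument via the Williamson characterization — are accurate and fill the gap appropriately.
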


 \begin{proof}
 	This follows from Proposition \ref{part0} and Corollary \ref{precriterion}.
 \end{proof}


 \begin{statement}\label{part2}
 Let $\Pi$ be diagonalizable at $x$ and $\lambda\in\Lambda_\R$. Then $\D_{\lambda}$ is a Cartan subalgebra in $\sP(K_\lambda)$ if and only if the pencil $\diff_{\lambda}\Pi(x)$ is non-degenerate. The type of $\D_{\lambda}$ coincides with the type of $\mathsf{Sing}(\diff_{\lambda}\Pi(x))$.
 \end{statement}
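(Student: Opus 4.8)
The plan is to build an explicit symplectic isomorphism between the eigenspace $K_\lambda\subset L^\bot/L$, equipped with the form $P_\alpha\mid_{K_\lambda}$, and the reduced space of the $\lambda$-linearization $\diff_\lambda\Pi(x)=\Pi^{\g_\lambda,A}$, where $\g_\lambda=\Ker P_\lambda(x)$ and $A=P_\alpha\mid_{\g_\lambda}$; the isomorphism should carry $\D_\lambda$ onto the subalgebra $\{\ad\,\xi\mid_{\g_\lambda/\Ker A}\}_{\xi\in\Ker A}$. Granting this, the statement is immediate: by Lemma \ref{linPrecriterion}, $\D_\lambda$ is a Cartan subalgebra of $\sP(K_\lambda)$ if and only if $\{\ad\,\xi\mid_{\g_\lambda/\Ker A}\}_{\xi\in\Ker A}$ is a Cartan subalgebra of $\sP(\g_\lambda/\Ker A,A)$, i.e. if and only if $\diff_\lambda\Pi(x)$ is non-degenerate, and in that case the types agree.

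First I would construct the isomorphism. Using compatibility of the brackets one checks that $\g_\lambda=\Ker P_\lambda(x)\subset L^\bot$: for $\xi\in\Ker P_\lambda$ and $\zeta\in\Ker P_\mu$ with $\mu$ regular, the identities $P_\mu=P_\lambda+(\mu-\lambda)P_\infty$ and $P_\alpha=P_\mu+(\alpha-\mu)P_\infty$ force $P_\alpha(\xi,\zeta)=0$, and such $\zeta$ span $L$. Hence the reduction map $L^\bot\to L^\bot/L$ restricts to $\iota_\lambda\colon\g_\lambda\to L^\bot/L$ with kernel $\g_\lambda\cap L$; its image lies in $\Ker(P_\lambda\mid_{L^\bot/L})$, which by item 2 of Proposition \ref{PropertiesOfR} is exactly $K_\lambda$. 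A dimension count shows $\iota_\lambda$ is onto $K_\lambda$: indeed $\dim\g_\lambda-\dim(\g_\lambda\cap L)=(\dim M-\rank P_\lambda(x))-\corank\Pi(x)=\rank\Pi-\rank P_\lambda(x)=\dim K_\lambda$, where the middle step uses item 5 of Proposition \ref{PropertiesOfL} and the last step is the Jordan--Kronecker description of $K_\lambda$ in the diagonalizable case. Thus $\iota_\lambda$ descends to an isomorphism $\bar\iota_\lambda\colon\g_\lambda/(\g_\lambda\cap L)\xrightarrow{\ \sim\ }K_\lambda$.

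Next I would identify $\g_\lambda\cap L$ with $\Ker A$. The inclusion $\g_\lambda\cap L\subseteq\Ker A$ is immediate since $\g_\lambda\subseteq L^\bot$ and every element of $L$ is $P_\alpha$-orthogonal to $L^\bot$; equality follows from $\dim(\g_\lambda\cap L)=\corank\Pi(x)$ (item 5 of Proposition \ref{PropertiesOfL}) together with $\dim\Ker A=\dim\Ker(P_\alpha\mid_{\Ker P_\lambda})=\corank\Pi(x)$, which is precisely the diagonalizability hypothesis (Definition \ref{diagpencil}). In particular $\Ker A$ is Abelian and $A$ is a regular cocycle on $\g_\lambda$, so $\diff_\lambda\Pi(x)$ is a linear pencil of the type treated in Lemma \ref{linPrecriterion}. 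Since $\bar\iota_\lambda$ comes from the inclusion $\g_\lambda\hookrightarrow L^\bot$ followed by reduction, and both sides carry the symplectic form induced by $P_\alpha$, the map $\bar\iota_\lambda\colon(\g_\lambda/\Ker A,A)\to(K_\lambda,P_\alpha\mid_{K_\lambda})$ is symplectic.

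It remains to match the operator families. By item 4 of Proposition \ref{dfpProperties}, each $\D_fP_\alpha$ preserves $\g_\lambda=\Ker P_\lambda$, and by the construction of Section \ref{firstOrderTheory} it annihilates $L=\diff\F(x)$; hence $\iota_\lambda$ is equivariant under $\D_fP_\alpha$, so $\bar\iota_\lambda$ conjugates the induced operators $\D_fP_\alpha\mid_{\g_\lambda}$ on $\g_\lambda/(\g_\lambda\cap L)$ onto $\D_\lambda=\{\D_fP_\alpha\mid_{K_\lambda}\}$. By item 2 of Proposition \ref{dfpIsAd}, $\{\D_fP_\alpha\mid_{\g_\lambda}\}_{f\in\F,\ \diff f\in\Ker P_\alpha}=\{\ad\,\xi\}_{\xi\in\g_\lambda\cap L}=\{\ad\,\xi\}_{\xi\in\Ker A}$, and these descend to $\{\ad\,\xi\mid_{\g_\lambda/\Ker A}\}_{\xi\in\Ker A}$. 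Thus $\bar\iota_\lambda$ carries $\{\ad\,\xi\mid_{\g_\lambda/\Ker A}\}_{\xi\in\Ker A}$ onto $\D_\lambda$, and Lemma \ref{linPrecriterion} yields both the Cartan equivalence and the equality of types. The one genuinely delicate point is the surjectivity of $\iota_\lambda$ onto $K_\lambda$, i.e. the dimension bookkeeping for $\dim K_\lambda$; this is where one must invoke the Jordan--Kronecker normal form (Appendix A) and the fact that diagonalizability means every spectral value contributes only trivial $1\times 1$ Jordan blocks. The rest is formal manipulation with the operators $\D_fP_\alpha$ and the already-established properties of $L$, $L^\bot$ and the recursion operators.
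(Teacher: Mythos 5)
Your argument is correct and follows essentially the same route as the paper's proof: identify $K_\lambda$ with $\g_\lambda/\Ker A$ via the projection $L^\bot\to L^\bot/L$, use $\dim(\g_\lambda\cap L)=\corank\Pi(x)$ together with diagonalizability to get $\g_\lambda\cap L=\Ker A$, transport the operators by Proposition \ref{dfpIsAd}, and conclude with Lemma \ref{linPrecriterion}. The only difference is that you spell out the surjectivity of $\g_\lambda\to K_\lambda$ and the symplectic nature of the identification, which the paper leaves implicit.
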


 \begin{proof}
 Recall that $\mathfrak g_\lambda = \Ker P_\lambda \subset L^\bot$ and $K_\lambda\subset L^\bot/ L$ is the image of $\goth g_\lambda$ under the natural projection of $L^\bot \to L^\bot/L$.
 	By Proposition \ref{dfpIsAd},
	$$
		 \left\{ \D_fP_\alpha\mid_{\mathfrak g_{\lambda}} \right\}_{f \in \F, \diff f \in \Ker P_{\alpha}} = \{ \ad\,{\xi} \}_{\xi \in \g_{\lambda} \cap L}.
	 $$
	By Proposition \ref{PropertiesOfL}, $\dim \left (\g_{\lambda} \cap  L\right) = \corank \Pi(x)$.
	Since $\Pi$ is diagonalizable at $x$, this implies that
	$$\g_{\lambda} \cap  L = \Ker A,\quad \text{where } A= P_\alpha \mid_{\g_\lambda}.$$ So,
		$$
		 \left\{ \D_fP_\alpha\mid_{\mathfrak g_{\lambda}} \right\}_{f \in \F, \diff f \in \Ker P_{\alpha}} = \{ \ad\,{\xi} \}_{\xi \in \Ker A},
	 $$
	 and, taking into account that $K_\lambda = \mathfrak g_\lambda/\Ker A$, we get
	 $$
\left\{ \D_fP_\alpha\mid_{K_{\lambda}} \right\}_{f \in \F, \diff f \in \Ker P_{\alpha}}=
\{ \ad\,{\xi}\mid_{\g_{\lambda} / \Ker A} \}_{\xi \in \Ker A}.
	 $$
	 Now it suffices to apply Proposition \ref{linPrecriterion}.
 \end{proof}

\begin{statement}\label{part3}
 	Let $\Pi$ be diagonalizable at $x$ and $\{\lambda,\bar\lambda\}\in\Lambda_\Complex$. Then $\D_{\lambda,\bar\lambda}$ is a Cartan subalgebra in $\sP( K_{\lambda, \bar\lambda})$ if and only if the pencil $\diff_{\lambda}\Pi(x)$ is non-degenerate.  In this case,  $\D_{\lambda,\bar\lambda}$ is of pure focus type.
 \end{statement}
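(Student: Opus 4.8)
The plan is to repeat the proof of Proposition \ref{part2} with its complex counterparts, and then to add one linear-algebra step relating Cartan subalgebras of a complex symplectic Lie algebra to those of its realification.

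First I would redo the computation from the proof of Proposition \ref{part2} in the complex setting. Since $\{\lambda,\bar\lambda\}\in\Lambda_\Complex$, the algebra $\g_\lambda=\Ker P_\lambda(x)$ is a \emph{complex} Lie algebra, $L$ has to be replaced throughout by $L\otimes\Complex$, and item 6 (instead of item 5) of Proposition \ref{PropertiesOfL} gives $\dim_\Complex\bigl(\g_\lambda\cap(L\otimes\Complex)\bigr)=\corank\Pi(x)$. The complex form of Proposition \ref{dfpIsAd} (see the Remark following it) yields
\begin{equation*}
\bigl\{\D_fP_\alpha\mid_{\g_\lambda}\bigr\}_{f\in\F,\ \diff f\in\Ker P_\alpha}=\{\ad\,\xi\}_{\xi\in\g_\lambda\cap(L\otimes\Complex)},
\end{equation*}
and diagonalizability of $\Pi$ at $x$ forces $\g_\lambda\cap(L\otimes\Complex)=\Ker A$ with $A=P_\alpha\mid_{\g_\lambda}$, the complex $2$-cocycle on $\g_\lambda$ defining $\diff_\lambda\Pi(x)=\Pi^{\g_\lambda,A}$. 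Passing to the quotient and identifying $K_{\lambda,\bar\lambda}$, equipped with the complex structure $J$ coming from $R_0^\infty$, with the complex vector space $\g_\lambda/\Ker A$, we get
\begin{equation*}
\D_{\lambda,\bar\lambda}=\{\ad\,\xi\mid_{\g_\lambda/\Ker A}\}_{\xi\in\Ker A}\subset\sP_\Complex(K_{\lambda,\bar\lambda}),
\end{equation*}
a \emph{complex} commutative subalgebra (it is $J$-stable, e.g.\ by item 4 of Proposition \ref{dfpOnQuotient}, since on $K_{\lambda,\bar\lambda}$ right multiplication by $R_0^\infty$ is an $\R$-affine function of $J$ with nonzero $J$-coefficient). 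By Lemma \ref{linPrecriterion} applied to $\Pi^{\g_\lambda,A}$, this pencil is non-degenerate if and only if $\D_{\lambda,\bar\lambda}$ is a Cartan subalgebra of the \emph{complex} symplectic Lie algebra $\sP_\Complex(K_{\lambda,\bar\lambda})\cong\sP(2n,\Complex)$.

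It then remains to prove: a complex subalgebra $\mathfrak{c}\subset\sP_\Complex(K_{\lambda,\bar\lambda})$ is a Cartan subalgebra of $\sP_\Complex(K_{\lambda,\bar\lambda})$ if and only if, regarded as a real subalgebra of $\sP(K_{\lambda,\bar\lambda})$, it is a Cartan subalgebra, and in this case it is of pure focus type. For the direct implication, choose a regular semisimple $D\in\mathfrak{c}$; in $\sP(2n,\Complex)$ its eigenvalues are $\pm\nu_1,\dots,\pm\nu_n$ with the $\nu_j$ distinct, nonzero and (after a generic choice) off $\R\cup i\R$, so as a real operator on the $4n$-dimensional space $K_{\lambda,\bar\lambda}$ it has eigenvalues $\pm\alpha_j\pm\beta_j i$ with $\alpha_j,\beta_j\neq0$. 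Thus $D$ is regular semisimple in $\sP(K_{\lambda,\bar\lambda})$ of pure focus type, its real centralizer is a Cartan subalgebra of pure focus type and of real dimension $2n$, and since $\mathfrak{c}$ is abelian, contains $D$, and has real dimension $2n$, it coincides with this centralizer. Conversely, a Cartan subalgebra $\mathfrak{c}$ of the real $\sP(K_{\lambda,\bar\lambda})$ contains a regular semisimple $D$, which lies in $\sP_\Complex(K_{\lambda,\bar\lambda})$ and is semisimple there; its centralizer in $\sP_\Complex(K_{\lambda,\bar\lambda})$ contains $\mathfrak{c}$ and is contained in its real centralizer $\mathfrak{c}$, hence equals $\mathfrak{c}$, so $\mathfrak{c}$ is an abelian self-centralizing subalgebra of the reductive Lie algebra $\sP_\Complex(K_{\lambda,\bar\lambda})$ consisting of semisimple elements, i.e.\ a Cartan subalgebra. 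Combining this with the previous paragraph proves Proposition \ref{part3}, with the pure focus type read off from the eigenvalues of $D$.

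The main obstacle I anticipate is the first step: one must organise the various complexifications with care, so that the real algebra $\D_{\lambda,\bar\lambda}=\D_\F\mid_{K_{\lambda,\bar\lambda}}$ really is the \emph{complex} span $\{\ad\xi\mid_{\g_\lambda/\Ker A}\}_{\xi\in\Ker A}$ and so that the complex structure $J$ on $K_{\lambda,\bar\lambda}$ induced by $R_0^\infty$ is precisely the one making $\g_\lambda/\Ker A$ a complex vector space; the converse dimension count in the third step also needs a little attention. Once these identifications are pinned down, the rest uses only the Williamson normal form for elements of a Cartan subalgebra of $\sP(2m,\R)$ recalled before Definition \ref{nd}, and is routine.
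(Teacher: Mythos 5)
Your argument is correct, and its overall architecture is the same as the paper's: identify $\D_{\lambda,\bar\lambda}$ with $\{\ad\xi\mid_{\g_\lambda/\Ker A}\}_{\xi\in\Ker A}$ for the complex pencil $\diff_\lambda\Pi(x)$, invoke Lemma \ref{linPrecriterion} in its complex form, and then translate between Cartan subalgebras of the real algebra $\sP(K_{\lambda,\bar\lambda})$ and of the complex algebra $\sP_\Complex(K_{\lambda,\bar\lambda})$ using the $J$-invariance of $\D_{\lambda,\bar\lambda}$. The one genuine divergence is in how that last translation is carried out: the paper complexifies, splits $K_{\lambda,\bar\lambda}\otimes\Complex=K_\lambda\oplus K_{\bar\lambda}$ and $\D_{\lambda,\bar\lambda}\otimes\Complex=\D_\lambda\oplus\D_{\bar\lambda}$, and reads the equivalence of the four Cartan conditions (and the pure focus type) off that splitting, whereas you stay on the real side and prove the equivalence directly by a regular-element/centralizer argument, computing the real eigenvalues $\pm\alpha_j\pm\beta_j i$ of a generic $D$ and matching dimensions ($\dim_\R\mathfrak c=2n=\dim$ of a pure-focus Cartan subalgebra of $\sP(4n,\R)$). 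Your version has the merit of actually supplying a proof of an equivalence the paper essentially asserts, at the cost of the genericity bookkeeping (choosing $\nu_j$ off $\R\cup i\R$ and pairwise distinct); the paper's splitting is slicker but leans on the reader to verify that $J$-invariance forces $\D_{\lambda,\bar\lambda}\otimes\Complex$ to decompose. The identifications you flag as delicate in your last paragraph (that $K_{\lambda,\bar\lambda}$ with the complex structure $J$ is complex-linearly isomorphic to $\g_\lambda/\Ker A$, and that $\D_{\lambda,\bar\lambda}$ is the full complex span of the $\ad\xi$) do go through, by exactly the mechanism you name, namely item 4 of Proposition \ref{dfpOnQuotient} together with $R_0^\infty\mid_{K_{\lambda,\bar\lambda}}=\alpha\Id+\beta J$, $\beta\neq 0$.
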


 \begin{proof}

If $\lambda$ and $\bar\lambda$ are complex numbers, $\Imm \lambda\ne 0$, then the subspaces $K_\lambda$, $K_{\bar\lambda}$, $\D_\lambda$ and $\D_{\bar\lambda}$ still make sense but now all of them are related to the complexified space $(L^\bot/L)^\Complex=(L^\bot/L)\otimes\Complex$, namely,
$$
K_\lambda = \Ker \left(P_\lambda\mid_{(L^\bot / L) ^\Complex} \right)\subset (L^\bot / L) ^\Complex, \quad
K_{\bar\lambda} = \Ker \left(P_{\bar \lambda}\mid_{(L^\bot / L)^\Complex} \right)
\subset (L^\bot / L) ^\Complex$$
and
$$
\D_{\lambda} =
\left\{ D\mid_{K_\lambda}, D\in\D_\F\right\} \subset \sP( K_\lambda), \quad
\D_{\bar\lambda} =
\left\{ D\mid_{K_{\bar\lambda}}, D\in\D_\F\right\}\subset \sP( K_{\bar\lambda}).
$$

From the algebraic viewpoint, the relationship between these subspaces and subalgebras can be described as follows. 

With the vector space $K_{\lambda, \bar\lambda}$, we can associate two symplectic Lie algebras 
$\sP(K_{\lambda, \bar\lambda})$ and $\sP_\Complex(K_{\lambda, \bar\lambda})$ (they are isomorphic to $\sP(4k,\R)$  and $\sP(2k,\Complex)$ respectively).  The complexified vector space $K_{\lambda, \bar\lambda}^\Complex = K_{\lambda, \bar\lambda}\,\otimes \Complex$ naturally splits into the direct sum $K_{\lambda, \bar\lambda}^\Complex = K_\lambda \oplus K_{\bar\lambda}$.  This leads to the following  complexifications of the symplectic Lie algebras  
 $\sP(K_{\lambda, \bar\lambda})$ and $\sP_\Complex(K_{\lambda, \bar\lambda})$:
 $$
 \sP(K_{\lambda, \bar\lambda})\otimes\Complex = \sP(K_\lambda \oplus K_{\bar\lambda})\simeq \sP(4k,\Complex)
 $$
and 
$$
\sP_\Complex(K_{\lambda, \bar\lambda}) \otimes \Complex =
\sP (K_\lambda) \oplus \sP(K_{\bar\lambda}) \simeq \sP(2n,\Complex)\oplus \sP(2n,\Complex).
$$

We consider an Abelian subalgebra $\D_{\lambda,\bar\lambda}\subset \sP_\Complex(K_{\lambda, \bar\lambda}) \subset \sP(K_{\lambda, \bar\lambda})$ and are interested in conditions under which  
$\D_{\lambda,\bar\lambda}$ is a Cartan subalgebra of $\sP(K_{\lambda, \bar\lambda})$.

In our case, $\D_{\lambda,\bar\lambda}$ satisfies an additional important property of being invariant under left multiplication by the complex structure $J$, i.e., for every $D\in \D_{\lambda,\bar\lambda}$ we have $JD\in \D_{\lambda,\bar\lambda}$.    This immediately implies the following algebraic conclusion:   $\D_{\lambda,\bar\lambda}^\Complex = \D_{\lambda,\bar\lambda}\otimes\Complex = \D_\lambda \oplus \D_{\bar\lambda}$ and the following conditions are equivalent:

\begin{itemize}
\item $\D_{\lambda,\bar\lambda}$ is a Cartan subalgebra of $\sP(K_{\lambda, \bar\lambda})$,
\item $\D_{\lambda,\bar\lambda}$ is a Cartan subalgebra of $\sP_\Complex(K_{\lambda, \bar\lambda})$,
\item $\D_{\lambda}$ is a Cartan subalgebra of $\sP(K_{\lambda})$,
\item $\D_{\bar\lambda}$ is a Cartan subalgebra of $\sP(K_{\bar\lambda})$.
\end{itemize}

If one of these conditions is fulfilled then $\D_{\lambda,\bar\lambda}$ is of pure focus type as a Cartan subalgebra of the (real) symplectic Lie algebra $\sP(K_{\lambda, \bar\lambda})$.  
\newpage

To complete the proof it remains to notice that the statement of Proposition \ref{part2} still holds true for complex $\lambda\in\Lambda(x)$. The only difference is that the pencil $\diff_{\lambda}\Pi(x)$ is complex and all the Cartan subalgebras of $\sP(K_\lambda)$ are of the same type, i.e., are conjugate. \end{proof}

 \begin{proof}[Proof of Theorem \ref{nd1}]
 	By Proposition \ref{partDiag}, if a point $x$ is non-degenerate, then the pencil is diagonalizable at $x$. Therefore, it suffices to show that  for diagonalizable pencils,  $x$ is non-degenerate if and only if for each $\lambda \in \Lambda(x)$ the linear pencil $\diff_{\lambda}\Pi(x)$ is non-degenerate. This follows from Propositions \ref{part1}, \ref{part2}, \ref{part3}.
\end{proof}

\begin{proof}[Proof of Theorem \ref{type1}]
	The proof follows from Propositions \ref{part1}, \ref{part2}, \ref{part3}.
\end{proof}


  \subsection{Proof of Theorem \ref{nlpc}}\label{nlpProofComplex}

Taking into account Theorem \ref{ndlp1}, we need to prove the following.
\begin{lemma}\label{ndlpLemma}
Let $\g$ be a complex Lie algebra, and let $\h \subset \g$ be an Abelian subalgebra. Then the following two conditions are equivalent.
\begin{enumerate}
\item The Lie algebra $\g$ admits a decomposition
\begin{align*}
  	\g  = \h  + \sum\limits_{i=1}^{n}(V_{\lambda_{i}} + V_{-\lambda_{i}}),
  \end{align*}
where  \begin{enumerate} \item $\lambda_1, \dots, \lambda_n \in \h^*$ are linearly independent. \item Each space $V_{\pm \lambda_i}$ is one-dimensional. \item For all $\xi \in\h, e_{\lambda_i} \in V_{\lambda_i}$ we have $[\xi, e_{\lambda_i}] = \lambda_i(\xi)e_{\lambda_i}$. \end{enumerate}
\item The Lie algebra $\g$ can be represented as
		\begin{align}\label{ndAlgebraRepr}
		\g \simeq \bigoplus \so(3, \Complex) \oplus  \left(\bigoplus \foc \right) / \lCal_{0}  \oplus V,
	\end{align}	
	where $V$ is Abelian, $\lCal_{0}\subset \bigoplus \foc$ is a central ideal and $\h \subset \g$ is a Cartan subalgebra.
\end{enumerate}
\end{lemma}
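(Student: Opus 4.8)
The plan is to prove the two implications separately; $(2)\Rightarrow(1)$ is a routine verification on the building blocks, while $(1)\Rightarrow(2)$ carries the real content. For $(2)\Rightarrow(1)$, I would first note that the three ingredients of (1) — existence of a root decomposition over $\Complex$ relative to $\h$, one‑dimensionality of the root spaces, and linear independence of the roots — are all preserved under the three operations occurring in (2): direct sum (the roots of the sum are the disjoint union of the roots of the summands, each extended by zero to $\h^*=\bigoplus(\h^{(\alpha)})^*$, and distinct summands contribute roots paired nontrivially with distinct Cartan factors, so independence is kept); quotient by a central ideal $\lCal_0$ (which automatically lies in the centre of the Cartan, where every root already vanishes, so the roots descend and their pairings with the ``$t$‑directions'' are unchanged); and adjoining an Abelian summand (which does nothing). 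Hence it suffices to exhibit the decomposition for the two elementary blocks: for $\so(3,\Complex)\simeq\sL(2,\Complex)$ the Cartan $\langle h\rangle$ gives a single root $\alpha$ with $\alpha(h)\ne0$ and root spaces $\langle e\rangle,\langle f\rangle$; for $\foc$ the Cartan $\langle t,h\rangle$ is Abelian (since $[t,h]=0$), and diagonalizing $\ad_t$ over $\Complex$ produces a single root $\lambda$ which vanishes on the central line $\langle h\rangle$ and has one‑dimensional root spaces. In both cases the relevant Cartan is precisely the $\h$ of the statement.

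For $(1)\Rightarrow(2)$, the first step is to read off the Lie structure from the decomposition. Linear independence forces each $\lambda_i\ne0$, the $\ad\h$‑weights of $\g$ are exactly $0$ and $\pm\lambda_i$, and $\g_0=\h$; a one‑line argument with a hypothetical normalizer of $\h$ then shows $\h$ is self‑normalizing, hence — being Abelian — nilpotent and therefore a Cartan subalgebra. Next, since linear independence prevents any $\lambda_i+\lambda_j$ ($i\ne j$) or any $\pm(\lambda_i-\lambda_j)$ ($i\ne j$) from being a weight, every bracket of root vectors vanishes except $h_i:=[e_{\lambda_i},e_{-\lambda_i}]\in\g_0=\h$; a Jacobi computation then yields $\lambda_j(h_i)=c_i\delta_{ij}$ with $c_i:=\lambda_i(h_i)$. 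Thus $\g$ is completely encoded by the triple $(\h,\{\lambda_i\},\{h_i\})$ and decomposes along the index $i$ into ``rank‑one pieces'': for $c_i\ne0$ the span $\langle h_i,e_{\lambda_i},e_{-\lambda_i}\rangle$ is $\simeq\sL(2,\Complex)\simeq\so(3,\Complex)$, and for $c_i=0$, after adjoining an element $\xi_i\in\h$ with $\lambda_i(\xi_i)=1$, $\lambda_j(\xi_i)=0$ ($j\ne i$), one obtains a copy of $\foc$ if $h_i\ne0$ and a copy of $\foc$ modulo its centre if $h_i=0$.

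The delicate step — and the one I expect to be the main obstacle — is gluing these pieces into the global form \eqref{ndAlgebraRepr}. Set $\h_0=\bigcap_i\ker\lambda_i$ (which is central in $\g$), $S=\{i:c_i\ne0\}$, $T=\{i:c_i=0\}$, and choose a complement $\h_1$ of $\h_0$ in $\h$ containing all $h_i$ with $i\in S$ — possible because $\langle h_i:i\in S\rangle\cap\h_0=0$ — so that $h_i=c_i\xi_i$ for $i\in S$ and $h_i\in\h_0$ for $i\in T$, where $\{\xi_i\}\subset\h_1$ is the basis dual to $\{\lambda_i|_{\h_1}\}$. Form the external direct sum $\hat\g=\bigoplus_{i\in S}\sL(2,\Complex)\oplus\bigoplus_{i\in T}\foc$ and the surjection $\phi\colon\hat\g\to\g$ sending the standard generators to appropriate multiples of $h_i,e_{\pm\lambda_i}$, respectively to $\xi_i,e_{\pm\lambda_i}$ together with the central generator $Z_i$ of the $i$‑th $\foc$ mapped to $h_i\in\h_0$; checking relations (using $[\mathfrak b_i,\mathfrak b_j]=0$ for $i\ne j$) shows $\phi$ is a Lie homomorphism, injective on everything except the $Z_i$, so that $\lCal_0:=\Ker\phi$ is a central ideal of $\hat\g$ contained in $\bigoplus_{i\in T}\langle Z_i\rangle\subset\bigoplus_{i\in T}\foc$, and $\phi$ identifies $\hat\g/\lCal_0$ with the ideal $\mathfrak n:=\sum_i\mathfrak b_i$ of $\g$. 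Finally $\mathfrak n\supseteq\h_1+(\text{all root vectors})$ and $\mathfrak n\cap\h_0=\langle h_i:i\in T\rangle$, so choosing a complement $V$ of $\langle h_i:i\in T\rangle$ in $\h_0$ gives $\g=\mathfrak n\oplus V$ as a direct sum of Lie algebras with $V$ central and Abelian, which is exactly \eqref{ndAlgebraRepr}; moreover $\h$ is the sum of the standard Cartan subalgebras of the blocks together with $V$, hence a Cartan subalgebra of $\g$, completing the proof. The two points requiring genuine care are the choice of $\h_1$ preventing the $\so(3,\Complex)$‑blocks from spilling into $\h_0$, and the identification of $\lCal_0$: it is precisely the degeneracies $h_i=0$ and the possible linear relations among the $h_i$ with $i\in T$ that make a quotient by a nonzero central ideal unavoidable.
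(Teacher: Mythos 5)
Your proposal is correct and follows essentially the same route as the paper: both establish $[V_{\lambda_i},V_{\pm\lambda_j}]=0$ for $i\ne j$ and $\lambda_j(h_i)=\lambda_i(h_i)\,\delta_{ij}$ from independence of the roots, split off $\so(3,\Complex)$ blocks when $\lambda_i(h_i)\ne 0$, and realize the remainder as an Abelian summand plus a quotient of $\bigoplus\foc$ by a central ideal. The only (cosmetic) difference is packaging — the paper peels off summands one at a time and then chooses dual elements $t_i$, whereas you assemble a single surjection from an external direct sum and identify $\lCal_0$ as its kernel — and your explicit choice of the complement $\h_1$ containing the $h_i$ with $c_i\ne 0$ is a slightly more careful rendering of the same step.
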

\begin{proof}
%
The implication 2 $\Rightarrow$ 1 is straightforward. Prove 1 $\Rightarrow$ 2. \par Obviously, $\h$ is a Cartan subalgebra, so it suffices to prove that $\g$ admits representation (\ref{ndAlgebraRepr}).

Standard arguments show that if $e_{\alpha} \in V_{\alpha}$ and $e_{\beta} \in V_{\beta}$, then $[e_{\alpha}, e_{\beta}] \in V_{\alpha + \beta}$. Since the roots are independent, $\alpha + \beta$ is a root if and only if $\beta = -\alpha$. Consequently, the following relations  are satisfied.
\begin{align*}
	[V_{\lambda_{i}}, V_{-\lambda_{i}}] \in \h, \quad
	[V_{\lambda_{i}}, V_{\pm \lambda_{j}}] = 0 \mbox{ for } i \neq j.
\end{align*}
Let $e_{i}$ be a basis vector in $V_{\lambda_i}$ and $e_{-i}$ be a basis vector in $V_{-\lambda_i}$. Denoting $h_{i} = [e_{i}, e_{-i}]$ and using the Jacobi identity, we have
$$
	[h_{i}, e_{j}] = [[e_{i}, e_{-i}], e_{j}] = 0
$$
for $i \neq j$. Therefore,
$$
	\lambda_{j}(h_{i}) = 0 \mbox{ for } i \neq j.
$$
Now suppose that $\lambda_{i}(h_{i}) \neq 0$ for some value of $i$. Then the triple $e_{i}, e_{-i},h_{i}$ generates a subalgebra isomorphic to $\so(3, \Complex)$. Let us show that it admits a complementary subalgebra in $\g$.
Let
$$
\wave \h = \{h \in \h: \lambda_{i}(h) = 0\}.
$$
Denote
$$
\wave \g = \wave h + \sum\limits_{j \neq i} (V_{\lambda_{j}} + V_{-\lambda_{j}}).
$$
Then it is easy to see that $\g$ splits into the direct sum:
$$
\g = \wave \g \oplus \langle e_{i}, e_{-i},h_{i} \rangle.
$$

After separating $\so(3)$ summands for all $i$ such that $\lambda_{i}(h_{i}) \neq 0$, we may assume that $\lambda_{i}(h_{i}) = 0$ for all $i$. Now separate an Abelian summand. Decompose the center of $\g$ into a direct sum of $\zenter(\g) \cap [\g,\g]$ and an arbitrary complementary subspace:
$$
	\zenter(\g) = \left(\zenter(\g) \cap [\g,\g]\right) \oplus V.
$$
Then it is easy to see that $V$ can be separated from $\g$ as a direct summand. \newpage After separating an Abelian summand, we may assume that $\zenter(\g) \subset [\g,\g]$. This means that the center is spanned by $\{h_{i}\}$. Decompose the subalgebra $\h$ as follows.
$$
	\h = \langle h_{1}, \dots h_{n} \rangle \oplus T.
$$
Since $\lambda_{i}(h_{j}) = 0$ for all $i$ and $j$, the linear functions $\lambda_{1}, \dots, \lambda_n$ are linearly independent on $T^{*}$. Moreover, since $T \cap \zenter(\g) = 0$,
 for each $t \in T$ there exists $i$ such that $\lambda_{i}(t) \neq 0$. Therefore, the set of $\lambda_{1}, \dots, \lambda_n$ is a basis in $T^{*}$ and we can choose a basis $t_{1}, \dots ,  t_{n}$ in $T$ such that
$$
	\lambda_{i}(t_{j}) = \delta_{ij}.
$$
Consequently, $\g$ is generated by $e_{i}, e_{-i}, h_{i}, t_{i}$ with the following non-zero relations
\begin{align*}
	[e_{i}, e_{-i}] = h_{i}, \quad
	[t_{i}, e_{i}] = e_{i}, \quad
	[t_{i}, e_{-i}] = -e_{-i}.
	\end{align*}
If $h_{1}, \dots, h_n$ were linearly independent, then $\g$ could be decomposed into a direct sum of subalgebras isomorphic to $\foc$. Since $h_{1}, \dots, h_n$ are not necessarily independent, $\g$ is a quotient of a direct sum by a central ideal, which completes the proof.

\end{proof}
  \subsection{Proof of Theorem \ref{nlpr}}\label{nlpProofReal}

The classification of all real non-degenerate linear pencils can be obtained by describing real forms of the algebras from the list (\ref{ndadecomp_c}). However, we need to know the types of singularities corresponding to these real forms, so it is better to give an explicit classification of real non-degenerate linear pencils.
Proof of the first part of Theorem \ref{nlpr} follows from the following.
\begin{lemma}\label{ndlpLemmaReal}
Let $\g$ be a real Lie algebra, and let $\h \subset \g$ be an Abelian subalgebra. Then the following two conditions are equivalent.
\begin{enumerate}
\item The Lie algebra $\g$ admits a decomposition
\begin{align}\label{ndAlgebraDecompReal}
  	\g \otimes \Complex  = \h \otimes \Complex  + \sum\limits_{i=1}^{n}(V_{\lambda_{i}} + V_{-\lambda_{i}}),
  \end{align}
where  \begin{enumerate} \item $\lambda_1, \dots, \lambda_n \in \h^* \otimes \Complex$ are linearly independent. \item Each space $V_{\pm \lambda_i}$ is one-dimensional. \item For all $\xi \in\h, e_{\lambda_i} \in V_{\lambda_i}$ we have $[\xi, e_{\lambda_i}] = \lambda_i(\xi)e_{\lambda_i}$. \end{enumerate}
\item The Lie algebra $\g$ can be represented as
		\begin{align}\label{ndAlgebraReprReal}
			\g \simeq \bigoplus \so(3) \oplus \bigoplus \sL(2) \oplus \bigoplus \so(3, \Complex) \oplus \left( \bigoplus \ELL  \oplus \bigoplus \hyp    \oplus \bigoplus \foc \right) / \lCal_{0}   \oplus V,
	\end{align}	
	where $V$ is Abelian, $\lCal_{0}$ is a central ideal, and $\h \subset \g$ is a Cartan subalgebra.
\end{enumerate}
\end{lemma}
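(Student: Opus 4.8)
The implication $2 \Rightarrow 1$ is a direct verification on each elementary summand, just as in the complex case (Lemma \ref{ndlpLemma}), so all the content is in $1 \Rightarrow 2$. The plan is to complexify and run the argument of Lemma \ref{ndlpLemma} while carrying along the real structure. Let $\sigma$ be the complex conjugation on $\g \otimes \Complex$, so $\g = (\g \otimes \Complex)^{\sigma}$. Since $\h$ is a real subalgebra, $\sigma$ preserves $\h \otimes \Complex$ and acts on $\h^{*} \otimes \Complex$ by $\lambda \mapsto \bar\lambda$; hence $\sigma(V_{\lambda}) = V_{\bar\lambda}$, and the set of roots $\{\pm\lambda_{1}, \dots, \pm\lambda_{n}\}$ is $\sigma$-stable. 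Each root is therefore of one of three types: \emph{real} ($\bar\lambda_{i} = \lambda_{i}$), \emph{imaginary} ($\bar\lambda_{i} = -\lambda_{i}$), or a member of a \emph{conjugate pair} $\bar\lambda_{i} = \pm\lambda_{j}$ with $j \neq i$. Linear independence forces every $\lambda_{i} \neq 0$, so the zero weight space of $\h$ is exactly $\h \otimes \Complex$; since $\h$ is abelian this already gives $N(\h) = \h$, i.e.\ $\h$ is a Cartan subalgebra, and it remains to produce the decomposition \eqref{ndAlgebraReprReal}.

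The plan is to peel off one $\sigma$-orbit of roots at a time, $\sigma$-equivariantly. Exactly as over $\Complex$ one has $[V_{\lambda_{i}}, V_{-\lambda_{i}}] \subset \h$ and $[V_{\lambda_{i}}, V_{\pm\lambda_{j}}] = 0$ for $i \neq j$; setting $h_{i} = [e_{i}, e_{-i}]$ gives $\lambda_{j}(h_{i}) = 0$ for $i \neq j$, and each $h_{i}$ is either non-killed by its own root (the ``semisimple'' case $\lambda_{i}(h_{i}) \neq 0$) or central. When $\lambda_{i}(h_{i}) \neq 0$ the triple $\langle e_{i}, e_{-i}, h_{i}\rangle$ spans a copy of $\sL(2, \Complex)$ which splits off $\g \otimes \Complex$ as a direct summand of ideals (same computation as in Lemma \ref{ndlpLemma}). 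Because $\dim V_{\pm\lambda_{i}} = 1$, the vectors $e_{i}, e_{-i}$ can be chosen compatibly with $\sigma$ (rescale $e_{i}$ to be $\sigma$-fixed if $\sigma$ preserves $V_{\lambda_{i}}$, or take $e_{-i} = \sigma e_{i}$ if $\sigma$ swaps $V_{\lambda_{i}}$ and $V_{-\lambda_{i}}$), which makes this summand (or its union with its $\sigma$-image) $\sigma$-invariant, so it descends to a real direct summand of $\g$. Reading off the eigenvalues of $\ad$ of a Cartan element of this real summand identifies the real form: a real root gives real eigenvalues, hence $\sL(2)$; an imaginary root gives purely imaginary eigenvalues, hence $\so(3)$; for a conjugate pair $\sigma$ exchanges two $\sL(2,\Complex)$ factors and the fixed points are $\sL(2,\Complex)$ regarded as a real Lie algebra, i.e.\ $\so(3, \Complex) \simeq \so(3,1)$.

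After removing all simple summands we may assume $\lambda_{i}(h_{i}) = 0$ for all $i$, so every $h_{i}$ is central; peeling off an abelian summand $V$ (a complement of $\zenter(\g) \cap [\g, \g]$ inside $\zenter(\g)$) we may further assume $\zenter(\g) \subset [\g, \g]$, whence $\zenter(\g) = \langle h_{1}, \dots, h_{n}\rangle$. Write $\h = \langle h_{1}, \dots, h_{n}\rangle \oplus T$; then $\lambda_{1}|_{T}, \dots, \lambda_{n}|_{T}$ are a basis of $T^{*}$, and one picks a real frame in $T$ adapted to the $\sigma$-orbit structure: an honest dual vector $t_{i}$ with $\lambda_{i}(t_{i}) = 1$ for a real root, a real $t_{i}$ with $\lambda_{i}(t_{i}) = \sqrt{-1}$ for an imaginary root, and a real pair $t_{i}, t_{j}$ interchanged by $\sigma$ for a conjugate pair. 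Performing the real change of basis $u = e_{i} + \sigma e_{i}$, $v = \sqrt{-1}(e_{i} - \sigma e_{i})$, $w = \sqrt{-1}\, h_{i}$ in the imaginary case (and keeping $e_{i}, e_{-i}, h_{i}$ in the real case) one reads off precisely the defining relations of $\hyp$ from a real root, of $\ELL$ from an imaginary root, and of $\foc$, viewed as a real Lie algebra, from a conjugate pair. Hence $\g$, modulo the already-extracted $\so(3)$'s, $\sL(2)$'s, $\so(3,\Complex)$'s and $V$, is generated by copies of $\ELL, \hyp, \foc$ subject only to the linear dependencies among the central elements $h_{i}$; these dependencies span a central ideal $\lCal_{0}$, and $\g$ takes the form \eqref{ndAlgebraReprReal}.

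The main obstacle is precisely this real-form bookkeeping: one must check that each step of the complex argument --- splitting off the $\sL(2,\Complex)$-triples, separating the abelian part, and choosing the dual frame $t_{i}$ --- can be carried out $\sigma$-equivariantly, and, crucially, that it is the \emph{reality type} of the root $\lambda_{i}$, not the mere value of $\lambda_{i}(h_{i})$, that decides between $\so(3)$ and $\sL(2)$ and between $\ELL$ and $\hyp$. This has no counterpart in Lemma \ref{ndlpLemma} and is exactly what forces all six elementary blocks, rather than only $\so(3,\Complex)$ and $\foc$, to appear.
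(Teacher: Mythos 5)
Your overall strategy --- complexify, track the conjugation $\sigma$, and peel off $\sigma$-orbits of roots --- is essentially the paper's argument in different clothing (the paper works with an explicit real basis $e_{\pm i}, f_{\pm i}, \hat e_{\pm i}, \hat f_{\pm i}$ rather than with $\sigma$-equivariance), and most of the steps go through. But there is one genuine error, and it is exactly the point you single out as ``crucial'' in your last paragraph: the real form of a simple summand is \emph{not} determined by the reality type of the root alone. For an imaginary root ($\bar\lambda_i=-\lambda_i$, so one must take $e_{-i}=\sigma e_i$, and $h_i=[e_i,\sigma e_i]$ satisfies $\sigma h_i=-h_i$), the number $\lambda_i(h_i)$ is real, its sign is an invariant (rescaling $e_i\mapsto ce_i$ multiplies $h_i$ by $|c|^2$), and both signs occur: one sign gives $\so(3)$, the other gives $\sL(2)$ with a \emph{compact} Cartan subalgebra --- whose root is also purely imaginary (in $\sL(2,\R)$ the element $K=E-F$ spans a Cartan subalgebra on which $\ad K$ has eigenvalues $\pm 2i$). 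Your rule ``imaginary root $\Rightarrow\so(3)$'' therefore misses all the $\sL(2)$ summands on which the Killing form is negative on $\sL(2)\cap\h$ --- precisely the ones that Theorem \ref{nlpr} counts toward $k_e$ --- and as written your construction would send such an algebra (which certainly satisfies condition 1) to a non-isomorphic one. The paper's proof makes the dichotomy explicit: for the rotation-type pair $f_{\pm i}$ it distinguishes $\mu_i([f_i,f_{-i}])>0$ (giving $\so(3)$) from $\mu_i([f_i,f_{-i}])<0$ (giving $\sL(2)$), and then identifies the latter as the compact-Cartan case by computing $\tr{(\ad z)^2}=-2\mu_i(z)^2<0$ for $z=[f_i,f_{-i}]$.

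The remainder of your argument --- real roots giving split $\sL(2)$ or $\hyp$, conjugate pairs giving $\so(3,\Complex)$ or $\foc$, the $\sigma$-equivariant choice of bases, and the abelian/central-ideal bookkeeping --- is correct and parallels the paper. The fix is local: in the semisimple imaginary case you must retain the sign of $\lambda_i(h_i)$ as an additional invariant and allow both $\so(3)$ and $\sL(2)$ to appear; the reality type of the root only decides between $\ELL$ and $\hyp$ in the degenerate case $\lambda_i(h_i)=0$.
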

\begin{proof}
Analogously to the previous section, it suffices to prove the implication 1 $\Rightarrow 2$, namely that $\g$ admits representation (\ref{ndAlgebraReprReal}).\par
Decomposition (\ref{ndAlgebraDecompReal}) implies that $\g$ can be represented as
\begin{align}\label{ndAlgebraDecompReal2}
	\g = \h + \mathrm{span}( e_{\pm 1},\dots, e_{\pm k}, f_{\pm 1}, \dots, f_{\pm l}, \hat e_{\pm 1},  \hat f_{\pm 1}, \dots,  \hat e_{\pm m},  \hat f_{\pm m} ),
\end{align}
with
\begin{align*}
	[h, e_{\pm i}] = \pm \lambda_{i}(h)e_{\pm i}, \quad [h, \hat e_{\pm i}] = 	\pm\xi_{i}(h) \hat e_{\pm i} \mp \eta_{i}(h) \hat f_{\pm i},\\	
	[h, f_{\pm i}] = \pm \mu_{i}(h)f_{\mp i},  \quad  [h, \hat f_{\pm i}] = 	\pm\eta_{i}(h) \hat e_{\pm i} \pm \xi_{i}(h) \hat f_{\pm i} 
\end{align*}
and $\lambda_1, \dots, \lambda_k, \mu_1, \dots, \mu_l, \xi_1, \eta_1, \dots, \xi_m, \eta_m \in \h^*$ being independent.
Analogously to the previous section, the non-zero relation are
\begin{align*}  [e_{i}, e_{-i}] \in \h,
	 \quad [f_{i}, f_{-i}] \in \h, \quad
			[\hat e_{i}, \hat e_{-i}] = -[\hat f_{i}, \hat f_{-i}] \in \h,	\quad  [\hat e_{i}, \hat f_{-i}] = [\hat f_{i}, \hat e_{-i}] \in \h.
\end{align*}
 Suppose that $\lambda_{i}([e_{i}, e_{-i}]) \neq 0$ for some $i$. In this case the triple $e_{i}, e_{-i}, [e_{i}, e_{-i}]$ spans a subalgebra isomorphic to $\sL(2, \R)$. It can be shown that this subalgebra can be separated as a direct summand, analogous to the complex case.\par
Similarly, for $\mu_{i}([f_{i}, f_{-i}] ) \neq 0$, we obtain a summand isomorphic to $\so(3, \R)$ if $\mu_{i}([f_{i}, f_{-i}] ) > 0$, or a summand isomorphic to $\sL(2, \R)$ if $\mu_{i}([f_{i}, f_{-i}] ) < 0$.\par
Further, it is easy to see that
\begin{align*}
	\eta_{i}([\hat e_{i}, \hat e_{-i}]) = \xi_{i}([\hat e_{-i}, \hat f_{i}]) \mbox{\quad and \quad}	 \xi_{i}([\hat e_{i}, \hat e_{-i}]) = -\eta_{i}([\hat e_{-i}, \hat f_{i}]),
\end{align*}
so $\xi_{i}$ and $\eta_{i}$ are either linearly independent on the subspace generated by $ [\hat e_{i}, \hat e_{-i}], [\hat e_{-i}, \hat f_{i}] $, or both vanish. In the first case the elements $\hat e_{i}, \hat e_{-i}, \hat f_{i}, \hat f_{-i},  [\hat e_{i}, \hat e_{-i}], [\hat e_{-i}, \hat f_{i}]$ generate a subalgebra isomorphic to $\so(3, \Complex)$. \par
After separating all simple summands, we may assume that
\begin{align*}
	[e_{i}, e_{-i}] \in \zenter(\g), \quad [f_{i}, f_{-i}] \in  \zenter(\g), \quad
	[\hat e_{i}, \hat e_{-i}] = -[\hat f_{i}, \hat f_{-i}] \in  \zenter(\g),\quad	[\hat e_{i}, \hat f_{-i}] = [\hat f_{i}, \hat e_{-i}] \in  \zenter(\g).	
\end{align*}
In the way absolutely similar to the complex case, it can be shown that $\g$ is decomposed into the sum of an Abelian algebra and the quotient of the sum of several copies of $\hyp, \ELL, \foc$ by a central ideal, which completes the proof.
\end{proof}


\subsection{Proof of the second part of Theorem \ref{nlpr}}\label{typeProof}

By Proposition \ref{indRoots}, the type of $\mathsf{Sing}(\Pi^{\g, A})$ is the triple $(l,k,m)$, where $l,k,m$ are the numbers entering decomposition (\ref{ndAlgebraDecompReal2}).
From the proof of Lemma \ref{ndlpLemmaReal}, we conclude:
\begin{enumerate}
\item Each quadruple $(\pm \xi_{j}, \pm \eta_{j})$ gives an $\so(3, \Complex)$ or $\foc$ summand in (\ref{ndAlgebraReprReal}). Therefore, $m$ coincides with the number of summands isomorphic to $\so(3, \Complex)$ or $\foc$.\par
\item For a pair $\pm \mu_j$, there are three possibilities.
\begin{enumerate}
	\item$\mu_{j}([f_{j}, f_{-j}] ) = 0 \Rightarrow \ELL$.
	\item$\mu_{j}([f_{j}, f_{-j}] ) < 0 \Rightarrow \sL(2, \R)$.
	\item $\mu_{j}([f_{j}, f_{-j}] ) > 0 \Rightarrow \so(3, \R)$.
\end{enumerate}
Consider the second case and calculate the Killing form on $z = [f_{j}, f_{-j}]$. Since
\begin{align*}
	[z, f_{j}] = \mu_{j}(z)f_{-j}	\mbox{\quad and \quad}[z, f_{-j}] = -\mu_{j}(z)f_{j},
\end{align*}
the value of the Killing form on the element $z$ is equal to $\tr (\ad z)^{2} = -2\mu_{j}(z)^{2} < 0$.\par
\item For a pair $\pm \lambda_j$, there are two possibilities.
\begin{enumerate}
	\item $\lambda_{j}([e_{j}, e_{-j}]) = 0 \Rightarrow \hyp$.
	\item $\lambda_{j}([e_{j}, e_{-j}]) \neq 0 \Rightarrow \sL(2, \R)$.
\end{enumerate}
Consider the second case and calculate the Killing form on $z =[e_{j}, e_{-j}]$. Since
\begin{align*}
	[z, e_{j}] = \lambda_{j}(z)e_{j} \mbox{\quad and \quad} 	[z, e_{-j}] = -\lambda_{j}(z)e_{-j},
\end{align*}
\smallskip
the value of the Killing form on $z$ is $\tr (\ad z)^{2} = 2\lambda_{j}(z)^{2} > 0$.
\end{enumerate}
We conclude that the number $l$ is equal to the number of $\ELL$ terms in (\ref{ndAlgebraReprReal}) plus the number of $\so(3, \R)$ terms plus the number of $\sL(2, \R)$ terms such that the Killing form is negative on $\sL(2, \R) \cap \Ker A$.\par Analogously, the number $k$ is equal to the number of $\hyp$ terms plus the number of  $\sL(2, \R)$ terms such that the Killing form is positive on $\sL(2, \R) \cap \Ker A$, q.e.d.

\newpage 
\section*{Appendix A: Jordan-Kronecker decomposition}\label{JKT}

The theorem below describes a simultaneous normal form for an arbitrary pair of skew-symmetric  forms.  Almost all properties of a pencil $\Pi=\{P_\lambda\}$ of compatible Poisson brackets at a fixed point $x$, we used in this paper, can be easily derived from this linear algebraic result. 

\begin{theorem}[Jordan-Kronecker theorem \cite{Gurevich, Gantmacher,Thompson, GZ2, Kozlov}]
\label{JKD}
Let $ A$  and  $ B$ be two skew-symmetric bilinear forms on a complex vector space $V$.
Then there is a basis in $V$ where $A$ and $B$ have the following canonical block-diagonal form:
$$ A  = \begin{pmatrix}
 A_1 &   &    &  \\
  &   A_2  & & \\
  & & \ddots & \\
  & & &   A_k
  \end{pmatrix},
  \qquad
 B = \begin{pmatrix}
 B_1 &   &    &  \\
  &   B_2  & & \\
  & & \ddots & \\
  & & &   B_k
  \end{pmatrix}
  $$
where the pairs of the corresponding blocks $ A_i$ and $ B_i$ can be of the following three types:
\begin{enumerate}
\item Jordan type ($\lambda_i \in \Complex$)
\begin{align*}
A_i = \begin{pmatrix}
     &   \!\!\! J(\lambda_i) \\  & \\
   \!  -J^\top(\lambda_i) &
 \end{pmatrix}, \quad
B_i =  \begin{pmatrix}
   & \ \ -\mathrm{E} \ \  \\  &  \\
   \mathrm{E} &
 \end{pmatrix}
\end{align*}
where $J(\lambda_i)$ denotes the standard Jordan block
$$
J(\lambda_i)=\begin{pmatrix}
\lambda_i&1&&\\
&\lambda_i&\ddots&\\
&&\ddots&1\\&&&\lambda_i\end{pmatrix},
$$
and $\mathrm E$ stands for the identity matrix.
\item Jordan type with $\lambda = \infty$
%
\begin{align*}
A_i =   \begin{pmatrix}
   & \ \ - \mathrm{E}\,\,  \\   &  \\
   \,\mathrm{E} &
 \end{pmatrix},
   \quad
B_i = \begin{pmatrix}
     &   \!\!\!\! J(0) \\    &  \\
   \!  -J^\top(0) &
 \end{pmatrix}.
 \end{align*}
%

\item Kronecker type
\begin{align*} A_i = \begin{pmatrix}
     &   \!\!\!\! S \,\,\\    &  \\
   \!  -S^\top &
 \end{pmatrix},
 \quad B_i = 
  \begin{pmatrix}
     &   \!\!\!\! T \,\, \\    &  \\
   \!  -T^\top &
 \end{pmatrix}
 \end{align*}  

where $S, T$ are 
$$
S = \begin{pmatrix}
  1 & 0  & & \\
      &\ddots & \ddots & \\
     &           &  1 & 0
     \end{pmatrix}, \quad T =  \begin{pmatrix}
  0 & 1  & & \\
      &\ddots & \ddots & \\
     &           &  0 & 1
     \end{pmatrix}.
$$
\end{enumerate}
\end{theorem}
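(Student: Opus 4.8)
The plan is to establish the Jordan--Kronecker theorem by induction on $\dim V$, peeling off at each step one indecomposable block, realized as a direct summand of $V$ on which \emph{both} forms restrict nicely. The organizing dichotomy is the behaviour of $\det(A + \lambda B)$ as a polynomial in $\lambda$ (equivalently of the homogeneous polynomial $\det(\mu A + \lambda B)$): if this polynomial is not identically zero I call the pencil \emph{regular} (which forces $\dim V$ even), and otherwise \emph{singular}. The regular case will produce the Jordan-type blocks, the singular case the Kronecker-type blocks.

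In the regular case, pick $\lambda_{0} \in \CP$ with $\det(A + \lambda_{0} B) \neq 0$. Replacing the ordered pair $(A,B)$ by $(a A + b B,\, c A + d B)$ for suitable constants — a Möbius change of the pencil parameter, which preserves skew-symmetry and all block sizes and only relabels the eigenvalues (possibly swapping a finite value with $\infty$) — I may assume that $B$ itself is non-degenerate. Then $B$ identifies $V$ with $V^{*}$, the recursion operator $R = B^{-1}A \colon V \to V$ is defined, and a direct computation from the definitions shows $R$ is $B$-self-adjoint: $B(Rx,y) = A(x,y) = B(x,Ry)$. I decompose $V$ into generalized eigenspaces of $R$; using self-adjointness, distinct generalized eigenspaces are mutually $B$-orthogonal, hence each is $B$-non-degenerate, so it is enough to treat one such space at a time. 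There $R = \lambda_{i}\,\mathrm E + N$ with $N$ nilpotent and $B$-self-adjoint, and the task reduces to classifying over $\Complex$ a pair consisting of a non-degenerate skew form together with a nilpotent self-adjoint operator. A standard cyclic-vector argument then shows the indecomposable summands of such a pair are exactly the ``double Jordan blocks'': a $k$-dimensional Jordan block of $N$ paired with a second copy of itself by $B$. Undoing the normalization turns these into the Jordan-type blocks with finite eigenvalue $\lambda_{i}$ and the Jordan-type block with $\lambda = \infty$ of the statement.

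In the singular case, I first split off the common kernel $\Ker A \cap \Ker B = \bigcap_{\lambda}\Ker(A+\lambda B)$; a basis of this subspace gives $1\times 1$ zero blocks, i.e.\ Kronecker blocks of minimal index $0$. So I may assume $\Ker A \cap \Ker B = 0$ while the pencil remains singular; then there is a nonzero polynomial vector $v(\lambda) = v_{0} + \lambda v_{1} + \dots + \lambda^{p} v_{p}$ with $(A + \lambda B)\,v(\lambda) \equiv 0$, and I take $p$ minimal, necessarily $p \geq 1$. Expanding in powers of $\lambda$ gives the chain $A v_{0} = 0$, $A v_{i} + B v_{i-1} = 0$ for $1 \leq i \leq p$, and $B v_{p} = 0$. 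Minimality of $p$ forces $v_{0},\dots,v_{p}$ to be linearly independent, and the chain relations together with skew-symmetry force their span $E$ to be isotropic for both $A$ and $B$. One then adjoins a suitable ``dual'' chain $w_{1},\dots,w_{p}$ so that $W = E \oplus \langle w_{1},\dots,w_{p}\rangle$ is $(2p+1)$-dimensional, carries exactly one Kronecker block of index $p$ in the stated normal form, and has a complement $W'$ with $A(W,W') = B(W,W') = 0$ and $V = W \oplus W'$. Applying the induction hypothesis to the pencil $(A|_{W'}, B|_{W'})$ completes the argument.

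The main obstacle is the very last step: producing the complementary subspace $W'$ in the singular case. Because the pencil restricted to a Kronecker block is everywhere degenerate, $W'$ cannot be obtained as a skew-orthogonal complement; it must be built by hand, which amounts to choosing the dual chain $w_{1},\dots,w_{p}$ so precisely that $W \cap W' = 0$ and $W'$ is annihilated against $W$ by both forms. This is essentially Kronecker's reduction of a singular matrix pencil specialized to the skew-symmetric setting, and it is the only part of the proof that is not a short formal argument; everything else — self-adjointness of $R$, orthogonality of generalized eigenspaces, the cyclic-vector classification of the nilpotent self-adjoint pair, and the linear independence and isotropy of the chain $v_{0},\dots,v_{p}$ — is routine once the right objects are in place.
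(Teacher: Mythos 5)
First, a point of reference: the paper itself does not prove Theorem \ref{JKD} --- it is quoted in Appendix~A as a known linear-algebraic fact with citations to Gantmacher, Thompson, Gelfand--Zakharevich and others, so there is no in-paper argument to compare yours against. Your outline follows the standard route from exactly those sources: the regular/singular dichotomy via $\det(\mu A+\lambda B)$, a M\"obius renormalization making $B$ non-degenerate, the $B$-self-adjoint recursion operator $R=B^{-1}A$ and the mutual $B$-orthogonality of its generalized eigenspaces for the regular part, and minimal-degree polynomial kernel chains for the singular part. The steps you call routine are indeed correct; the one point in the regular case worth making explicit is \emph{why} the indecomposables are double Jordan blocks: for a nilpotent $N$ self-adjoint with respect to a skew form $B$, on a single Jordan chain $e_1,\dots,e_k$ with $Ne_i=e_{i-1}$ one gets $B(e_i,e_j)=B(e_{i'},e_{j'})$ whenever $i+j=i'+j'$, and skew-symmetry then forces $B\equiv 0$ on that chain; hence non-degeneracy forces Jordan blocks of each size to pair up into the $2k$-dimensional summands you describe.

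The genuine gap is the one you name yourself: the construction of the dual chain $w_1,\dots,w_p$ and of a complement $W'$ with $A(W,W')=B(W,W')=0$ in the singular case. This is not a formality but the actual content of Kronecker's reduction. Isotropy and independence of $v_0,\dots,v_p$ do not by themselves produce $W'$, and --- as you correctly observe --- no skew-orthogonal-complement argument is available because both forms are degenerate on the block. Closing the gap requires using the minimality of $p$ twice: first to show that the inhomogeneous linear systems prescribing the pairings $\langle Aw_j,v_i\rangle$ and $\langle Bw_j,v_i\rangle$ are solvable (minimality guarantees the relevant rank conditions, otherwise one could build a kernel chain of smaller degree), and second to correct the $w_j$ by elements of $\mathrm{span}\langle v_0,\dots,v_p\rangle$ and by solutions of the homogeneous system so that the bi-orthogonal complement of $W$ really is a direct complement. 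Until that construction is written out, the singular half is a plan rather than a proof; the regular half and the overall induction scheme are sound.
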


\vskip 20pt
The Jordan-Kronecker form has the following important properties.
\begin{itemize}
\item If $(A_i, B_i)$ is a Jordan pair, then a generic combination $A_i+\lambda B_i$ is non-degenerate, and
$$
\rank (A_i+\lambda B_i) < \max \mbox{ if and only if } \lambda = \lambda_i.
$$
\item If $(A_i, B_i)$ is a Kronecker pair, then any combination $A_i+\lambda B_i$ is degenerate, and
$$
\rank (A_i+\lambda B_i) = \const.
$$
\item  If $(A_i, B_i)$ is a Kronecker pair, then the space
$$
L = \sum\limits_{\lambda} \Ker(A_i+\lambda B_i)
$$
is maximal isotropic with respect to any form $A_i+\lambda B_i$.
\end{itemize}
\bibliographystyle{unsrt}
\bibliography{SBS}
\end{document}